\newtheorem{theorem}{Theorem}[section]
\newtheorem{lemma}{Lemma}[section]
\newtheorem{remark}{Remark}[section]
\renewcommand{\theequation}{\arabic{section}.\arabic{equation}}
\renewcommand{\thetheorem}{\arabic{section}.\arabic{theorem}}
\renewcommand{\thelemma}{\arabic{section}.\arabic{lemma}}
\renewcommand{\theproposition}{\arabic{section}.\arabic{proposition}}
\renewcommand{\thealgorithm}{\arabic{section}.\arabic{algorithm}}
\newcommand\dd{~{\rm d}}
\newcommand{\burg}{{\sf b}}
\newcommand\R{\mathbb{R}}
\newcommand\Z{\mathbb{Z}}
\newcommand\C{\mathbb{C}}
\newcommand\Wc{\dot{\mathscr{W}}^{\rm c}}
\newcommand\WR{\dot{\mathscr{W}}_R}
\newcommand\D{\nabla}
\newcommand\del{\delta}
\newcommand{\<}{\langle}
\renewcommand{\>}{\rangle}
\newcommand{\ulin}{u^{\rm lin}}
\newcommand{\rzz}{\mathscr{R}_z}
\newcommand\Rc{R_{\rm c}}
\newcommand\Rb{R_{\rm b}}
\newcommand{\Ne}{N_{\rm e}}
\newcommand{\NeR}{N_{{\rm e}, R}}
\newcommand{\Rcore}{{R_{\rm def}}}
\newcommand{\Adm}{\mathscr{A}}
\newcommand{\Admu}{{\rm Adm}}
\newcommand{\AdmR}{{\rm Adm}(R)}
\def\D{\mathcal{D}}
\def\L{\Lambda}
\def\Lp{\Lambda^{\#}}
\def\Lhom{\Lambda^{\rm hom}}
\def\muhom{\mu_{\#}}
\def\XXint#1#2#3{{\setbox0=\hbox{$#1{#2#3}{\int}$ }
		\vcenter{\hbox{$#2#3$ }}\kern-.6\wd0}}
\def\asRC{{\bf (R)}}
\def\UsH{\dot{\mathscr{W}}^{1,2}}
\def\Ey{E}
\def\Eu{\mathcal{E}}
\def\Gy{G}
\def\Gu{\mathcal{G}}
\def\Ny{N}
\def\Nu{\mathcal{N}}
\def\Fy{F}
\def\Fu{\mathcal{F}}
\def\T{\mathcal{T}}
\def\J{\mathcal{J}}
\def\Ay{O}
\def\Au{\mathcal{O}}
\def\ay{\mathfrak{o}}
\def\Nn{N_{\Omega}}
\title{Thermodynamic Limit of Crystal Defects with Finite Temperature Tight Binding}
\author{Huajie Chen
	\footnote{{\tt huajie.chen@warwick.ac.uk}.
		School of Mathematical Sciences, Beijing Normal University, Beijing 100875 China.
		This work was supported by ERC Starting Grant 335120.}
	, Jianfeng Lu
	\footnote{{\tt jianfeng@math.duke.edu}.
      Departments of Mathematics, Physics and Chemistry, Duke University,
      Box 90320, Durham, NC 27708 USA. This work was supported in part by
      the National Science Foundation under grants DMS-1312659
      and DMS-1454939.}
	~and Christoph Ortner
	\footnote{{\tt c.ortner@warwick.ac.uk}.
		Mathematics Institute, University of Warwick, Coventry CV47AL UK.
		This work was supported by ERC Starting Grant 335120.}}
\date{}
\begin{document}
\maketitle

\begin{abstract}
	We consider a tight binding model for localised crystalline defects with
	electrons in the {\em canonical ensemble} (finite Fermi temperature)
	and nuclei positions relaxed according to the Born--Oppenheimer approximation.
	We prove that the limit model as the
	computational domain size grows to infinity is formulated in the {\em
	grand-canonical ensemble} for the electrons. The Fermi-level for the limit
	model is fixed at a homogeneous crystal level, independent of the defect or
	electron number in the sequence of finite-domain approximations. We
	quantify the rates of convergence for the nuclei configuration and for the
	Fermi-level.
\end{abstract}

\section{Introduction}
\label{sec:intro}
Electronic structure calculations based on density functional theory
and related models have been established as a predictive approach to
model a wide range of systems with important scientific and
engineering applications \cite{finnis03,martin04}.
Unlike empirical interatomic potentials, electronic structure models assume no
prior information on the chemical environment or atomic configuration,
which makes them a popular tool to model materials with defects \cite{vandewalle14}.

It is most natural to think of crystalline materials with defects
(vacancies, interstitials, dislocations, etc.) as an extended system with
an infinite number of atoms and electrons, however, in practical computations only finite
systems may be treated. It is therefore important to understand the
approximation error due to the choice of computational domain. In this paper we
start from the most common model for finite crystalline systems with defects,
characterise the limit model, and quantify the rate of convergence.


Thermodynamic limit problems have been studied at great length in the
analysis literature. The perfect lattice was studied in \cite{catto98}
for the Thomas--Fermi--von Weizs\"{a}cker (TFW) model and in
\cite{catto01} for the reduced Hartree--Fock (rHF) model. Results on
local defects in crystals in the
framework of the TFW and rHF models are \cite{blanc07, cances13b,
cances08a, cances08b, cances11, gontier15, lieb77}.
These discussions are restricted to the case where the nuclei are fixed on a
periodic lattice (or with a given local defect). Considering the
simultaneous relaxation of nuclei positions is a case of great physical
and mathematical interest. First steps in this direction have been taken in
 \cite{nazar14} for the Thomas--Fermi--von Weizs\"{a}cker model
and in \cite{chen15a} for a tight binding model under the simplifying
assumption of a ``fixed Fermi level".

A related problem is the continuum limit of quantum models. The TFW
models are studied in \cite{blanc02} where it is shown that, in the
continuum limit, the difference between the energies of the atomistic
and continuum models obtained using the Cauchy--Born rule tends to
zero. The tight binding and Kohn--Sham models are studied in a series
of papers \cite{e07, e10, e11, e13}, which establish the extension of
the Cauchy--Born rule to electronic structure for smoothly deformed
crystals. The macroscopic dielectric properties in the thermodynamic
limit are studied in the rHF model in \cite{cances10}.

For electronic structure models of defects, not only do we need to
consider the truncation of nuclei degrees of freedom and the
associated boundary conditions applied on atom positions, but more
importantly, we also have to restrict to finite number of electrons on
a finite domain. In particular, for systems with defects, it is a
priori unclear how many electrons should be imposed on the
computational domain, due to the relaxation of the electronic
structure. This can become a subtle issue especially when charged
defects are considered. (However, we do not treat charged defects in
the present work.)

In the present work, we investigate material defects in the context of
tight binding models, which are minimalist models for electronic
structure calculations. We assume that the electrons are in finite
temperature, whereas we adopt the Born-Oppenheimer approximation for
the nuclei (i.e., the nuclei degrees of freedom are under zero
temperature). Thus, the relaxation of electronic structure is
formulated in the {\rm canonical ensemble}, while atom positions are
determined by minimizing the free energy associated with the
electrons; see \eqref{problem_min_y}. For the electronic degree of
freedom, vacuum boundary condition with a buffer zone is assumed;
while a Dirichlet (clamped) boundary condition is employed for the atom
positions. Alternatively, periodic boundary conditions are considered
in \ref{sec:pbc}.

Our main results, formulated in Theorems~\ref{theorem:limit_problem_1}
and \ref{theorem:limit_problem_2}, state that the limiting problem is
formulated in the {\em grand-canonical ensemble} for the electrons, with
the Fermi-level of a homogeneous crystalline solid. Thus,
the limit problem is independent of the details of the defect and it is
in particular independent of how many electrons we impose on the finite domain.


Our results partially justify the ``fixed Fermi level'' approach for the
finite system that has been widely used, e.g., in \cite{cances08a,gontier15} for
the reduced Hartree--Fock model, in \cite{chen15a,pexsisigma} for the tight binding model,
and in \cite{chenlu15} for density functional theory. We may take the grand
canonical ensemble for the finite system with Fermi level given by the
perfect crystal. In the thermodynamic limit, the finite systems with fixed
Fermi level converge to the same infinite system. We use the qualifier
{\em partially} because the models mentioned above (except \cite{chen15a,pexsisigma})
are formulated at zero Fermi temperature, whereas our results treat
the case of finite Fermi temperature; see the Conclusion for further discussion.


A key ingredient in our analysis is a notion of locality of the electronic
structure model, which was also used in \cite{chen15a} to exhibit locality of
the potential energy, and which we extend here to other physical quantities, specifically
the number of electrons. Roughly
speaking, the dependence of a local physical property such as the local density
of states and hence local physical quantities on the environment decays
exponentially fast away from the physical location of interest. Therefore, away
from the boundary of the finite domain, the electronic structure behaves as that
of the infinite problem. A subtlety arises for the canonical ensemble as the
Fermi level for the finite system depends globally on the atom configuration,
which would destroy the locality. The key idea to overcome this difficulty is to view 
the Fermi-level as an independent variable, which together
with the nuclei positions, solves the constraint for the number of electrons
together with the force balance equation. The thermodynamic limit can then be
viewed as the convergence of the solution to the coupled system as the domain
tends to infinity.

\subsubsection*{Outline}

In Section \ref{sec:model_finite} we introduce the tight binding model for finite systems.
We discuss a `two-centre' tight binding model, set in the canonical ensemble
and the grand-canonical ensemble respectively,
and establish the strong locality of the local density of states.
In Section \ref{sec:point_tdl} we consider an infinite lattice with a local point defect.
We first derive the thermodynamic limits of the local density of states
by fixing the Fermi level, then present the convergence of the Fermi level,
and finally justify the thermodynamic limits of the finite problem with
certain boundary conditions.
In Section \ref{sec:conclusion}, we make concluding remarks and discuss future
perspectives. All the proofs are gathered in Section \ref{sec:proofs}.

In \ref{sec:pbc} we extend the analysis to point defects with periodic boundary
conditions and in \ref{sec:dislocation} to a straight dislocation line with
clamped boundary conditions.

\subsubsection*{Notation}
We will use the symbol $\langle\cdot,\cdot\rangle$ to denote an abstract duality
pairing between a Banach space and its dual.
We will use the Dirac bra-ket notation, which is widely used in quantum mechanics.
The notation defines the ``ket'' vector $|\psi\rangle$,
and its conjugate transpose called the ``bra'' vector $\langle\psi|$.

The symbol $|\cdot|$ normally denotes the Euclidean or Frobenius norm, 
while $\|\cdot\|$ denotes an operator norm.
For the sake of brevity of notation, we will denote $A\backslash\{a\}$ by
$A\backslash a$, and $\{b-a~\vert ~b\in A\}$ by $A-a$.
For $E \in C^2(X)$, the first and second variations are denoted by
$\<\delta E(u), v\>$ and $\<\delta^2 E(u) w, v\>$ for $u,v,w\in X$.

The symbol $C$ denotes a generic positive constant that may change from one line
of an estimate to the next. When estimating rates of decay or convergence, $C$
will always remain independent of the system size, of lattice position or of
test functions. The dependencies of $C$ will normally be clear from the context
or stated explicitly.

%

\section{The Tight Binding Model}
\label{sec:model_finite}
\setcounter{equation}{0}
\subsection{Free energy}
\label{sec:tb_FT}
Consider a many particle system consisting of $\Nn$ nuclei and $\Ne$ electrons.
Let $d\in\{1, 2,3\}$ be the space dimension and $\Omega\subset\R^d$ be an
{\it index set} or {\it reference configuration} with $\#\Omega=\Nn$.
An (atomic) {\em configuration} is a map $y : \Omega\to\R^d$ satisfying
\begin{equation} \label{eq:non-interpenetration}
	|y(\ell)-y(k)| \geq \mathfrak{m}|\ell-k| \qquad\forall~\ell,k\in\Omega
\end{equation}
with {\em accumulation parameter} $\mathfrak{m} > 0$.
In the following, we use $r_{\ell m}:=|y(\ell)-y(m)|$ for brevity of notation.

The tight binding model is a minimalist electronic structure model, which
enables the investigation and prediction of properties of molecules and
materials. For simplicity of presentation, we consider a `two-centre' tight
binding model \cite{goringe97,Papaconstantopoulos15}
(where the off-diagonal entries of the Hamiltonian are given by a pair potential)
with the identity overlap
matrix, and a single atomic orbital per atom. (The latter restriction requires
us also to assume that $0 < \Ne < 2 \Nn$.) Our results can be extended directly to
general non-self-consistent tight binding models with multiple atomic orbitals
per atom \cite[\S~2 and App.~A]{chen15a}.

The `two-centre' tight binding model is formulated in terms of a discrete
Hamiltonian, with the matrix elements
\begin{eqnarray}\label{tb-H-elements}
	\Big(\mathcal{H}(y)\Big)_{\ell k}
	= \left\{ \begin{array}{ll} 
	h_{\rm ons}\left(\sum_{j \neq \ell}
	\varrho(r_{\ell j})\right)
	& {\rm if}~\ell=k; \\[1ex]
	h_{\rm hop}(r_{\ell k}) & {\rm if}~\ell\neq k,
	\end{array} \right.
\end{eqnarray}
where
$h_{\rm ons} \in C^{\nu}((0, \infty))$ is the on-site term,
$\varrho \in C^{\nu}((0, \infty))$ with $\varrho = 0$ in
$[\Rc,\infty)$, where $\Rc$ a cut-off, $h_{\rm hop} \in
C^{\nu}((0, \infty))$ is the hopping term with
$h_{\rm hop}=0$ in $[\Rc,\infty)$ and  $\nu\geq 3$.
Note that $h_{\rm ons}$ and $h_{\rm hop}$ are independent of $\ell$ and $k$,
which indicates that all atoms of the system belong to the same species.

For future reference, we remark that the spectrum of $\mathcal{H}(y)$
is uniformly bounded in an interval $[\underline{\lambda}, \bar{\lambda}]$,
where $\underline{\lambda}, \overline{\lambda}$ depend only on $\mathfrak{m}$
but are independent of $\Omega$ or $y$ \cite[Lemma 2.1]{chen15a}.

The {\it Helmholtz free energy} (or {\it Mermin free energy}) of a system at absolute temperature $T>0$, as
a function of configuration $y$, is \cite{mermin65}
\begin{equation}\label{eq:Helmoltz_FE}
\begin{split}
	\Ey(y) := \min\Big\{ \mathfrak{F}\big(y,\{\psi_s\}_{s=1}^{\Nn},\{f_s\}_{s=1}^{\Nn}\big)
	~:~
	\psi_s:\Omega\to\R,~
	\psi_i^{\rm T}\psi_j=\delta_{ij}, & \\
	~ 0\leq f_s\leq 1, 
	\quad 2{\textstyle \sum_{s=1}^{\Nn}} f_s=\Ne & \Big\}
\end{split}
\end{equation}
where
\begin{align*}\label{eq:Mermin_FE}
	\mathfrak{F}\Big(y,\{\psi_s\}_{s=1}^{\Nn},\{f_s\}_{s=1}^{\Nn}\Big)
	&:= \sum_{s=1}^{\Nn}  \Big(
	 			2 f_s \big\< \psi_s \big| \mathcal{H}(y) \big| \psi_s \big\>
				+ 2k_{\rm B}T  S(f_s) \Big), \\
	S(f) &:= f \ln f + (1-f)\ln(1-f),
\end{align*}
$k_{\rm B}$ is Boltzmann's constant, and the factor $2$ comes from spin degeneracy.
For simplicity, we will write $\beta:=(k_{\rm B}T)^{-1}$ for the inverse temperature.
In the Helmholtz free energy \eqref{eq:Helmoltz_FE}, $f_s$ is understood as the occupation
number of the electronic state with orbital function $\psi_s$. Thus the occupation number
is between $0$ and $1$ according to Pauli's exclusion principle and the total number of
electrons is given by $N_e$ (counting spin degeneracy).

A straightforward calculation implies that there exists a minimizer
$\{\psi_s\}, \{f_s\}$ satisfying
\begin{equation}\label{eq:eigenpair_fs}
	\mathcal{H}(y)\psi_s=\lambda_s\psi_s
	\qquad{\rm and}\qquad
	f_s=\frac{1}{1+e^{\beta(\lambda_s-\mu)}}
	\qquad {\rm for}~ s=1,\cdots,\Nn,
\end{equation}
where the Lagrange multiplier $\mu$, known as the {\em chemical potential},
is chosen such that
\begin{equation}\label{eq:def_mu}
	2\sum_{s=1}^{\Nn} f_s = \Ne.
\end{equation}
Since the ordered eigenvalues $\{\lambda_s\}_{s = 1}^{\Nn}$ are fixed with given $y$ and the functional
\begin{displaymath}
	\Ny(y, \tau) := 2\sum_{s=1}^{\Nn}\left(1+e^{\beta(\lambda_s-\tau)}\right)^{-1}
\end{displaymath}
is strictly monotone and continuous in $\tau$ with $\Ny(y, \tau) \to 0$ (resp. $2\Nn$), as
$\tau \to - \infty$ (resp. $+\infty$), it follows that $\mu$ is uniquely defined.
Note that $\{\lambda_s\},  \{\psi_s\}, \{f_s\}$ and $\mu$ given above all depend on $y$,
however we suppress this dependence in the notation.

With $f(x) := 1/(1+e^{\beta x})$ we can now rewrite the Helmholtz free energy
\eqref{eq:Helmoltz_FE} as
\begin{align}\label{eq:Helmoltz_FE_eigen}
	\Ey(y) &= \sum_{s=1}^{\Nn} \mathfrak{e}(\lambda_s,\mu)
	\qquad{\rm with}\quad \\ \nonumber
	\mathfrak{e}(x,\tau) &:= 2x f(x-\tau) + \frac{2}{\beta} S\big(f(x-\tau)\big)
	\\ \label{eq:e_N+G}
	&= 2\tau f(x-\tau) + \frac{2}{\beta}\ln\big(1-f(x-\tau)\big),
\end{align}
where $\{\lambda_s\}_{s=1}^{\Nn}$ are eigenvalues of $\mathcal{H}(y)$ and $\mu$
satisfies \eqref{eq:def_mu} (the last equality is easily verfied;  see
also \cite{alavi94}). In particular, we have from the regularity assumptions on $h_{\rm ons}$ and $h_{\rm hop}$
that $\Ey$ is $\nu$ times continuously
differentiable (in the sense of Fr\'{e}chet) on the set of configurations $y$
satisfying \eqref{eq:non-interpenetration}.
To see this, we refer to \eqref{proof:force_HF} for the first order derivative calculation
(see similar calculations for higher order derivatives in \cite[(42)]{chen15a}).

\begin{remark}\label{rem:contft}
	While the solution to the variational problem
    \eqref{eq:Helmoltz_FE} in the case of the tight binding model is
    straightforward, the variational problem associated to the
    Helmholtz free energy in the continuous case is in fact quite
    subtle. The difficulty arises when the spectrum of the Hamiltonian
    operator contains a continuous spectrum part, such as the
    Hamiltonians for atoms, molecules and solids. (This is of course
    impossible for the discrete tight binding model.) Since there
    exists an infinite number of states below an energy level $\lambda_0$,
    by occupying $M$ such states with occupation number $\frac{1}{M}$
    for each state, the Helmholtz free energy is then smaller than $
    \lambda_0 - \ln M$, which goes to negative infinity as $M \to
    \infty$. Therefore, even for systems as simple as an atom, some
    renormalization is needed to make the variational formulation
    well-posed in the continuous case.
\end{remark}

\subsection{Equilibration of nuclei}
We consider stable equilibria (local minima) of $\Ey$ under a boundary
condition: for some $\Omega^{\rm D} \subset \Omega$ we seek
\begin{equation} \label{problem_min_y}
	\bar{y} \in \arg\min \big\{ \Ey(y) ~:~ y(\ell)=\ell
	\text{ for }\ell \in \Omega^{\rm D} \big\}.
\end{equation}
Specifically, we are interested in determining the limiting model of \eqref{problem_min_y} as $\Nn \to \infty$.  To that end, in the remainder of \S~\ref{sec:model_finite} we assemble some useful observations about the finite-$\Nn$ model, for which the precise choices of $\Omega$ and $\Omega^{\rm D}$ are unimportant, see Remark \ref{remark:Omega_D}.

Abusing notation, we write
\begin{align*}
   \Ey(y, \tau) := \sum_{s = 1}^{\Nn} \mathfrak{e}(\lambda_s, \tau),
\end{align*}
then the constrained minimisation problem
\begin{equation}\label{problem_min_y_tau}
	(\bar{y}, \bar{\mu}) \in \arg\min \Big\{ \Ey(y, \tau) ~:~
	\Ny(y, \tau) = \Ne, ~y(\ell) = \ell \text{ for }\ell \in \Omega^{\rm D}
	\Big\}
\end{equation}
is fully equivalent to \eqref{problem_min_y}. We will see in
\S~\ref{sec:local_prop} that this formulation is analytically convenient due
to the fact that $\Ey(y, \tau)$ and $\Ny(y, \tau)$ are separable as functions
of $y$, while $\Ey(y) = \Ey(y, \mu(y))$ contains a small amount of non-local
interaction due to the global dependence of $\mu$ on $y$.

	\begin{remark}\label{remark:Omega_D}
		The choice of $\Omega^{\rm D}$ is not unique, for example, one can even choose $\Omega^{\rm D}$ to be empty set.
		In later sections, we will take $\Omega^{\rm D}$ as an outer ``buffer layer'' sourrounding the atoms to be relaxed.
		For a given finite system  $\Omega$, the minima $\bar{y}$ of \eqref{problem_min_y} depend on the choice of $\Omega^{\rm D}$.
		However, the limiting problem (as $\Nn\rightarrow\infty$) will be independent of the
		choice of (sequence of) $\Omega^{\rm D}$.
		%
		To pass to the limit we will specify a concrete relation between $\Omega$ and $\Omega^{\rm D}$ in \S  \ref{sec:vf_limit}.
	\end{remark}

\subsection{The grand potential and other quantities of interest}
\label{sec:grand-potential}
The problem \eqref{problem_min_y}  is set on the
{\em canonical ensemble}, where the Helmholtz energy is minimized at
equilibrium with constant temperature and particle number.
By contrast we can also define an analogous problem in the
{\em grand-canonical ensemble}, where the chemical potential $\mu$ is
a fixed model parameter while the particle number $\Ne$ is variable.

For this situation, $\Ey$ is replaced with the {\it grand potential},
\begin{align}\label{def_grand_potential}
	&\Gy(y,\mu) := \Ey(y,\mu)-\mu N(y, \mu) =\sum_{s=1}^{\Nn}\mathfrak{g}(\lambda_s,\mu)
	\qquad{\rm with}
	\\ \nonumber
	& \mathfrak{g}(x,\tau):= \mathfrak{e}(x,\tau)-2\tau f(x-\tau) =
	\frac{2}{\beta}\ln\big(1-f(x-\tau)\big).
\end{align}

The energies $\Ey, \Gy$ and the particle number $\Ny$ are the three
main quantities of interest for our work.
Upon defining $\mathfrak{n}(x, \tau) := 2 f(x-\tau)$ the three quantities
$\Ey, \Gy$, and $\Ny$ are of the form
\begin{eqnarray}\label{eq:eigenpair-observable}
   \Ay(y, \tau) = \sum_{s = 1}^{\Nn} \ay(\lambda_s, \tau).
\end{eqnarray}
We call $\Ay$ an analytic {\em quantity of interest (QoI)} if there exists a strip 
$U = \{ a+ib : |b| < \mathfrak{d}~{\rm for~some}~\mathfrak{d}>0 \}$ 
such that $\ay(\cdot, \tau)$ is analytic on $U$ for all $\tau
\in \R$ and $\partial_z^j \ay(z, \tau)$ is continuous on $U \times \R$.
This is satisfied for $\Ay = \Ey, \Ny, \Gy$ with $\mathfrak{d} = \pi/\beta$.

\begin{remark} \label{rem:grand-can-frcs}
	Using the fact $\partial_x \mathfrak{g}(x,\tau)=2f(x-\tau)$ (see \cite{alavi94}), we have
	\begin{eqnarray}\label{force:G}
		\frac{\partial\Gy(y,\tau)}{\partial y(\ell)}
		= 2\sum_{s=1}^{\Nn} f(\lambda_s-\tau) \bigg\<
		\psi_s \bigg| \frac{\partial\mathcal{H}(y)}{\partial y(\ell)}
		\bigg| \psi_s \bigg\>
	\end{eqnarray}
	from a similar calculation as that in \eqref{proof:force_HF}.
	Thus, if $\tau = \mu$ is the chemical potential satisfying
	\eqref{eq:def_mu}, then a straightforward calculation
	(c.f. \eqref{proof:force_HF} or \cite[\S 7.6.2]{finnis03}) implies that
	\begin{eqnarray}\label{nablaG_nablaE}
		\frac{\partial\Gy(y,\tau)}{\partial y(\ell)} \Big|_{\tau = \mu}
		= \frac{\partial \Ey(y)}{\partial y(\ell)},
	\end{eqnarray}
	This connection between $\Ey$ and $\Gy$ is a key observation in
    our derivation of the thermodynamic limit of \eqref{problem_min_y}.
\end{remark}

\subsection{Spatial decomposition of analytic quantities of interest}
\label{sec:local_prop}
Assuming we have the eigenpairs $\{\lambda_s,\psi_s\}_{s=1}^{\Nn}$ of the
Hamiltonian $\mathcal{H}(y)$, it is useful to define the (total) {\it density of
states} \cite{finnis03} of the system by
\begin{eqnarray}\label{eq:DOS}
	\D(y,\epsilon) = \sum_{s=1}^{\Nn} \delta(\epsilon-\lambda_s).
\end{eqnarray}
This should be understood in the operational sense,  i.e.,
\begin{eqnarray*}
	\big\< \D(y), g \big\>
		= \int g(\epsilon) \D(y,\epsilon)\dd\epsilon
		= \sum_{s=1}^{\Nn}g(\lambda_s) \qquad \text{for } g \in C(\R).
\end{eqnarray*}
If $\Ay = \Ay(y, \tau)$ is an analytic QoI (in particular,
$\Ay=\Ey, \Gy, \Ny$), we can write
\begin{eqnarray*}
	\Ay(y,\tau) = \< \D(y), \ay(\cdot, \tau)\>
\end{eqnarray*}

A spatial decomposition of $\D$
would automatically lead to a spatial decomposition of $\Ay$,
which will be a powerful analytical tool. Following \cite{chen15a,
  ercolessi05, finnis03}, we can introduce the {\it local density of
  states} (or, {\em projected density of states}),
\begin{eqnarray}\label{def_LDOS}
	\D_{\ell}(y,\epsilon) := \sum_{s=1}^{\Nn} \delta(\epsilon-\lambda_s) \big[\psi_s\big]_{\ell}^2,
\end{eqnarray}
where
$[\psi_s]_{\ell}$ is the $\ell$-th entry of $\psi_s$.
Thus, we obtain a local variant of the analytic QoI $\Ay$,
\begin{eqnarray}\label{local_A_N}
	\Ay_{\ell}(y,\tau) :=  \big\<\D_{\ell}(y),\ay(\cdot,\tau)\big\> =
	\sum_{s=1}^{\Nn} \ay(\lambda_s,\tau) \big[\psi_s\big]_{\ell}^2.
\end{eqnarray}
It is easy to verify that
\begin{align*}
	\D(y) &= \sum_{\ell=1}^{\Nn}\D_{\ell}(y) \qquad \text{and hence} \qquad
	\Ay(y,\tau) = \sum_{\ell=1}^{\Nn}\Ay_{\ell}(y,\tau) .
\end{align*}
The next lemma states the locality of $\D_{\ell}(y,\cdot)$, which is the
backbone of our analysis.

\begin{lemma}\label{lemma:locality_fixed_mu}
	Let $y$ be an atomistic configuration with accumulation
	parameter $\mathfrak{m}$, and let $\Ay$ be an analytic QoI.
	Then, for $1\leq j\leq\nu$,
	there exist positive constants $C_j$ and $\gamma_j$
	depending only on $d,~\mathfrak{m},~\Rc,~h_{\rm hop},~h_{\rm ons},~\ay$
	and $\tau$, such that
	\begin{eqnarray}\label{eq:locality_ldos}
	\left|\frac{\partial^j \Ay_{\ell}(y,\tau)}
	{\partial [y(m_1)]_{i_1}\cdots\partial [y(m_j)]_{i_j}}\right|
	\leq C_j e^{-\gamma_j\sum_{t=1}^j r_{\ell m_t}}
	\end{eqnarray}
	for any $1\leq \ell\leq N_{\Omega}$, $1 \leq m_1, \cdots, m_j \leq N_{\Omega}$
	and $1\leq i_1,\cdots,i_j\leq d$.

	The constants $C_j$ are bounded above and $\gamma_j$ are bounded away from zero
	on bounded intervals for $\tau$.
\end{lemma}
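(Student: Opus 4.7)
The plan is to use the holomorphic functional calculus: I would write $\Ay_\ell(y,\tau)=\bigl[\ay(\mathcal{H}(y),\tau)\bigr]_{\ell\ell}$ as a contour integral of the resolvent $R(z):=(z-\mathcal{H}(y))^{-1}$, differentiate using the resolvent identity, and control the resulting matrix products via a Combes--Thomas estimate combined with the finite-range structure of $\mathcal{H}$. Concretely, since $\mathcal{H}(y)$ is symmetric with spectrum uniformly contained in $[\underline{\lambda},\overline{\lambda}]$ and $\ay(\cdot,\tau)$ is holomorphic on the strip $U=\{a+ib:|b|<\mathfrak{d}\}$, I fix a closed contour $\mathscr{C}\subset U$ enclosing $[\underline{\lambda},\overline{\lambda}]$ at distance at least some $\eta\in(0,\mathfrak{d})$; this $\mathscr{C}$ can be chosen independently of $\tau$ on any bounded interval. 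Then
\begin{equation*}
  \Ay_\ell(y,\tau)=\frac{1}{2\pi i}\oint_{\mathscr{C}}\ay(z,\tau)\,[R(z)]_{\ell\ell}\,dz.
\end{equation*}

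Differentiating $j$ times using $\partial R=R(\partial\mathcal{H})R$ together with the Fa\`a di Bruno formula yields a finite sum, indexed by ordered partitions of $\{1,\ldots,j\}$, of contour integrals of terms of the form
\begin{equation*}
  \bigl[R(z)\,(\partial^{s_1}\mathcal{H})\,R(z)\,(\partial^{s_2}\mathcal{H})\,R(z)\cdots(\partial^{s_p}\mathcal{H})\,R(z)\bigr]_{\ell\ell},
\end{equation*}
where each $\partial^{s_i}\mathcal{H}$ is a mixed partial derivative of $\mathcal{H}$ in a subset of the variables $\{[y(m_t)]_{i_t}\}$. The finite-range assumption $h_{\rm hop},\varrho\equiv 0$ on $[\Rc,\infty)$ together with the non-interpenetration parameter $\mathfrak{m}$ ensures that each $\partial^{s_i}\mathcal{H}$ is a sparse matrix with uniformly bounded entries, supported on index pairs lying within a bounded neighbourhood of the differentiation sites. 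Invoking the Combes--Thomas bound $|[R(z)]_{\ell k}|\leq C\,e^{-\gamma r_{\ell k}}$ for $z\in\mathscr{C}$ (with $C,\gamma>0$ depending only on $\eta,\mathfrak{m},d,\overline{\lambda}-\underline{\lambda}$) and summing over intermediate matrix indices — absolutely convergent because $\mathfrak{m}$ controls the lattice density in every ball — each term is bounded by $C_j\,e^{-\gamma L}$, with $L$ the length of a closed "tour" starting at $\ell$, passing through a bounded neighbourhood of each $m_t$, and returning to $\ell$. By the triangle inequality any such tour satisfies $L\geq (2/j)\sum_{t=1}^j r_{\ell m_t}$, so setting $\gamma_j:=2\gamma/j$ and absorbing $\sup_{z\in\mathscr{C}}|\ay(z,\tau)|$ into $C_j$ yields \eqref{eq:locality_ldos}; uniformity in $\tau$ on bounded intervals comes from the $\tau$-independence of $\mathscr{C}$ and of the Combes--Thomas constants together with continuity of $\sup_{z\in\mathscr{C}}|\ay(z,\tau)|$ in $\tau$.

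The main obstacle is the combinatorial bookkeeping of the higher derivatives $\partial^{s_i}\mathcal{H}$: the on-site entry $h_{\rm ons}\bigl(\sum_j\varrho(r_{\ell j})\bigr)$ is a nonlinear composition, so its $s$-th derivative produces a sum of products of derivatives of $h_{\rm ons}$ and $\varrho$ (which is where the assumption $\nu\geq j$ enters), and one must check that the resulting sparse matrices still respect the locality structure used above and that the associated lattice sums can be controlled uniformly. This is essentially an extension to general analytic QoIs of the argument developed for the potential energy in \cite[\S 2 and App.~A]{chen15a}; once the bookkeeping is in place, the resolvent calculus and tour-length estimate deliver the result directly.
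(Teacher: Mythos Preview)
Your proposal is correct and follows essentially the same route as the paper: contour integral representation of $\Ay_\ell$, differentiation via the resolvent identity, off-diagonal exponential decay of the resolvent (the paper cites \cite[Lemma 2.2]{chen15a} and \cite[Lemma 12]{e10} for what you call Combes--Thomas), sparsity of $\partial^s\mathcal{H}$ from the finite-range assumptions, and summation over intermediate indices. The paper carries out the cases $j=1,2$ explicitly and then declares the higher-order cases ``analogous but tedious''; your Fa\`a di Bruno/tour-length formulation with $\gamma_j=2\gamma/j$ packages the same argument a bit more systematically but is not a genuinely different idea.
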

\begin{proof}
	The proof is analogous to that of \cite[Lemma 2.3]{chen15a}, but for the sake
	of completeness, we present it in \S~\ref{sec:proof_locality}.
\end{proof}

\begin{remark}
	We emphasize that it is crucial to keep $\tau$ fixed to obtain this locality
	result. A $y$-dependent chemical potential would introduce a small amount of
	non-locality in $\Ay_{\ell}$, which is not easy to control directly. Note how
	in \eqref{problem_min_y_tau} we have split off this non-locality
	at the expense of adding a constraint to the system, cf. \eqref{problem_min_y}.
	However, since that constraint is given as a sum of local quantities it
	is convenient to treat analytically.
\end{remark}

\begin{remark}\label{remark:symmetry}
  With the definition of Hamiltonian \eqref{tb-H-elements}, we have
  isometry and permutation invariance of $\D_\ell$ and hence of the
  quantities $\Ay_\ell$ (for an analogous proof see \cite[Lemma
  2.4]{chen15a}): if $\mathcal{I}:\R^d\rightarrow\R^d$ is an isometry,
  then $\Ay_{\ell}(y,\tau)=\Ay_{\ell}(\mathcal{I}(y),\tau)$; if $\Pi$
  is a permutation of $\Omega$,
  then $\Ay_{\ell}(y,\tau)=\Ay_{\Pi^{-1}(\ell)}(y\circ\Pi,\tau)$.
\end{remark}

\begin{remark}
	As shown in the proofs of Lemma \ref{lemma:locality_fixed_mu}, 
	we can alternatively use a matrix-trace representation for the analytic QoIs.
	Instead of using the eigenpair formulations \eqref{eq:eigenpair-observable} 
	and \eqref{local_A_N}, 
	we can also write (c.f. \eqref{operator-trace} and \eqref{eq:FE_l_contour})
	\begin{eqnarray}
	\Ay(y,\tau) = {\rm Tr}\Big[ \ay\big(\mathcal{H}(y),\tau\big) \Big]
	\quad{\rm and}\quad
	\Ay_{\ell}(y,\tau) = \Big[ \ay\big(\mathcal{H}(y),\tau\big) \Big]_{\ell\ell} .
	\end{eqnarray}
	Indeed, this formulation will bring a lot of convenience to our analysis.
\end{remark}

\section{Thermodynamic limit of a crystal defect}
\label{sec:point_tdl}
\setcounter{equation}{0}
While, in \S~\ref{sec:model_finite}, we considered general atomistic
configurations $y$, we now focus on crystalline defects. For the sake of clarity
of presentation the main text concentrates on point defects. An extension to
straight dislocation lines is briefly discussed in \ref{sec:dislocation}.

Employing the separability of the various physical quantities established in
Lemma~\ref{lemma:locality_fixed_mu} we will formulate a model for a crystalline
defect in an infinite lattice and then prove that solutions of
\eqref{problem_min_y_tau} converge to a solution of the infinite lattice model.
Again, for the sake of clarity of presentation, the main text concentrates on
clamped boundary condition (Dirichlet boundary condition) for the finite size
systems, while  periodic boundary conditions are discussed in \ref{sec:pbc}.

\subsection{Reference configuration}
\label{sec:reference_space}
We consider a single defect embedded in an infinite homogeneous crystalline bulk.
A homogeneous crystal reference configuration is given by the Bravais lattice
$\Lhom=A\Z^d$, for some non-singular matrix $A \in \mathbb{R}^{d \times d}$. A
{\em point defect reference configuration} is a set $\L \subset \R^d$ satisfying
\begin{flushleft}\label{as:asRC}
	\asRC \quad
	$\exists ~\Rcore>0$, such that
	$\L\backslash B_{\Rcore} = \Lhom\backslash B_{\Rcore}$ and
	$\L \cap B_{\Rcore}$ is finite.
\end{flushleft}
Then the set of possible (atomic) {\em configurations} is
\begin{align*}
	\Adm_{0}(\L) &:= \bigcup_{\mathfrak{m}>0} \Adm_{\mathfrak{m}}(\L),
		\qquad \text{where} \\
	\Adm_{\mathfrak{m}}(\L) &:= \left\{ y:\L \rightarrow \R^{d}, ~
	|y(\ell)-y(m)| > \mathfrak{m} |\ell-m|
	\quad\forall~  \ell, m \in \L \right\},
\end{align*}
where we have again imposed accumulation parameter $\mathfrak{m}$.

\subsection{Limit of the local density of states}
\label{sec:limit_ldos}
We first study pointwise thermodynamic limits with a fixed chemical potential.
For $y \in \Adm_0$ and a finite subset $\Omega\subset\L$, we
define $y^{\Omega}:\Omega \to\R^d$, $y^\Omega(\ell) := y(\ell)$, $\forall\, \ell \in \Omega$.
The local density of states and local analytic QoIs associated with $y^\Omega$ will be denoted
by $\D_{\ell}(y^{\Omega},\cdot)$, $\Ay_{\ell}(y^\Omega, \tau)$,
where the dependence on $\Omega$ is implicitly assumed.

The following lemma establishes the existence of the (thermodynamic) limit of
$\D_{\ell}(y^{\Omega})$ as $\Omega\uparrow\L$, via local analytic QoI $\Ay_\ell$.
This result is
closely related to the locality result in Lemma \ref{lemma:locality_fixed_mu}.
We will skip the details of the proofs and refer to \cite[Theorem 3.1]{chen15a}
for an analogous argument.

\begin{lemma}[pointwise thermodynamic limit]
\label{lemma:limit_fixed_mu}
If $\L$ satisfies \asRC~and $y\in\Adm_0(\L)$,
then for any $\ell\in\L$ and for any sequence of bounded sets
$\Omega_R\supseteq B_R(\ell)$, the limit
\begin{eqnarray*}
	\Ay_{\ell}(y,\tau): = \lim_{R\to\infty} \Ay_{\ell}(y^{\Omega_R},\tau)
\end{eqnarray*}
exists with a fixed $\tau\in\R$
and is independent of the choice of sets $\Omega_R$.
Moreover, there exist constants $C_j$ and $\eta_j$ for $0\leq j\leq\nu$
such that
\begin{align}
\label{err_pointwise_limit}
& \big| \Ay_{\ell}(y,\tau) - \Ay_{\ell}(y^{\Omega_R},\tau) \big| \leq C_0e^{-\eta_0 R}
\qquad{\rm and} \\[1ex] \nonumber
& \left| \frac{\partial^j \Ay_{\ell}(y,\tau)}{\partial [y(m_1)]_{i_1}\cdots\partial [y(m_j)]_{i_j}}
- \frac{\partial^j \Ay_{\ell}(y^{\Omega_R},\tau)}{\partial [y(m_1)]_{i_1}\cdots\partial [y(m_j)]_{i_j}} \right|
\leq C_j e^{-\eta_j \big(R+\sum_{k=1}^j r_{\ell m_k}\big)}
\\ \label{err_d_pointwise_limit}
& \qquad\qquad \forall~m_k\in\L\cap\Omega_R ~~ 1\leq k\leq j, \quad \forall~1\leq i_1,\cdots,i_j\leq d .
\end{align}
\end{lemma}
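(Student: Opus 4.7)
The plan is to exploit the matrix-trace / contour representation noted in the final remark of \S~\ref{sec:model_finite},
$$
\Ay_{\ell}(y^{\Omega_R},\tau) \;=\; \bigl[\ay(\mathcal{H}(y^{\Omega_R}),\tau)\bigr]_{\ell\ell} \;=\; -\frac{1}{2\pi i}\oint_{\mathscr{C}} \ay(z,\tau)\,\bigl[(\mathcal{H}(y^{\Omega_R})-z)^{-1}\bigr]_{\ell\ell}\,dz,
$$
and to reduce the lemma to a Cauchy estimate on the $(\ell,\ell)$ entry of the resolvent, uniformly along a contour $\mathscr{C}$. Since the spectra of all $\mathcal{H}(y^{\Omega_R})$ lie in the fixed interval $[\underline{\lambda},\bar{\lambda}]$, I can choose $\mathscr{C}$ once and for all to enclose $[\underline{\lambda},\bar{\lambda}]$, to stay inside the analyticity strip of $\ay(\cdot,\tau)$ (of half-width $\mathfrak{d}=\pi/\beta$ for our three QoIs), and to keep a positive distance $\mathfrak{d}_0>0$ from the spectrum. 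Along such a contour, $|\ay(z,\tau)|$ is uniformly bounded on bounded sets of $\tau$.

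The workhorse is a Combes--Thomas-type bound for the finite-range tight-binding Hamiltonian: for $z\in\mathscr{C}$,
$$
\bigl|\bigl[(\mathcal{H}(y^{\Omega_R})-z)^{-1}\bigr]_{\ell m}\bigr| \;\le\; C\,e^{-\gamma\, r_{\ell m}},
$$
with $C,\gamma>0$ depending only on $\mathfrak{d}_0$, $\mathfrak{m}$, $\Rc$, and $\|h_{\rm hop}\|,\|h_{\rm ons}\|$; crucially, these constants are independent of $\Omega_R$. This follows by the standard conjugation trick with a weight $e^{\alpha\,|y(\cdot)-y(\ell)|}$ and the fact that the off-diagonal entries of $\mathcal{H}$ vanish beyond distance $\Rc$.

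To prove the Cauchy property, pick $R'>R$ and identify $\mathcal{H}(y^{\Omega_R})$ with the principal $\Omega_R\times\Omega_R$ block of a block-diagonal operator $\widetilde{\mathcal{H}}$ on $\Omega_{R'}$, the auxiliary block being any self-adjoint matrix with spectrum in $[\underline{\lambda},\bar{\lambda}]$. By block-diagonality, $\bigl[(\widetilde{\mathcal{H}}-z)^{-1}\bigr]_{\ell\ell}=\bigl[(\mathcal{H}(y^{\Omega_R})-z)^{-1}\bigr]_{\ell\ell}$. The perturbation $V:=\mathcal{H}(y^{\Omega_{R'}})-\widetilde{\mathcal{H}}$ is supported on pairs $(m,m')$ with at least one entry outside $\Omega_R$, hence (using $\Omega_R\supseteq B_R(\ell)$) at distance $\ge R-\Rc$ from $\ell$. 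The first resolvent identity gives
$$
\bigl[(\mathcal{H}(y^{\Omega_{R'}})-z)^{-1}\bigr]_{\ell\ell} - \bigl[(\widetilde{\mathcal{H}}-z)^{-1}\bigr]_{\ell\ell} \;=\; -\sum_{m,m'} \bigl[(\mathcal{H}(y^{\Omega_{R'}})-z)^{-1}\bigr]_{\ell m}\,V_{mm'}\,\bigl[(\widetilde{\mathcal{H}}-z)^{-1}\bigr]_{m'\ell},
$$
and applying Combes--Thomas to both resolvent factors yields a bound $\le C\,e^{-\eta_0 R}$ after summing over $m,m'$ (the sum converges thanks to the accumulation parameter $\mathfrak{m}$). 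Integrating over $\mathscr{C}$ gives \eqref{err_pointwise_limit}, the limit exists by completeness, and applying the same estimate to any two sequences $\{\Omega_R\},\{\Omega_R'\}$ proves independence of the choice of sequence.

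For the higher-order bound \eqref{err_d_pointwise_limit}, I would differentiate the contour representation $j$ times: each derivative produces a factor $\partial\mathcal{H}/\partial[y(m_k)]_{i_k}$ sandwiched between two resolvents. Since $\partial\mathcal{H}/\partial y(m_k)$ is supported within $\Rc$-range of site $m_k$, iterated application of Combes--Thomas propagates the factors $e^{-\gamma_j r_{\ell m_k}}$ (this simultaneously reproduces Lemma~\ref{lemma:locality_fixed_mu}); combining with the Cauchy estimate applied to one of the resolvent factors in the expansion supplies the extra $e^{-\eta_j R}$ truncation factor, giving the advertised $e^{-\eta_j(R+\sum r_{\ell m_k})}$ decay. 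The main obstacle is the careful bookkeeping of uniformity in $\tau$ on bounded intervals: one must ensure that the contour $\mathscr{C}$ remains admissible (inside the analyticity strip, away from spectrum) uniformly in $\tau$, and that $\max_{z\in\mathscr{C}}|\ay(z,\tau)|$ is uniformly bounded. For the Fermi-Dirac weights in $\Ey,\Gy,\Ny$ this is transparent, but writing the bookkeeping cleanly (and tracking how $\gamma_j,\eta_j$ degrade as $j$ grows, up to $j=\nu$) is where most of the technical effort lies.
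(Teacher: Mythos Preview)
Your approach is correct and coincides with the paper's (which defers the details to \cite[Theorem~3.1]{chen15a}): contour representation, Combes--Thomas off-diagonal decay of the resolvent, and a resolvent-identity comparison of Hamiltonians that agree inside $B_R(\ell)$; the derivative bounds then follow by differentiating the contour formula exactly as in the proof of Lemma~\ref{lemma:locality_fixed_mu}. One small imprecision worth fixing: the perturbation $V$ is not supported only on pairs $(m,m')$ with an index outside $\Omega_R$, since the onsite entries $(\mathcal{H})_{mm}$ for $m\in\Omega_R$ near the boundary also change through $\sum_j\varrho(r_{mj})$; however, any such $m$ lies within $\Rc$ of some $j\notin\Omega_R\supseteq B_R(\ell)$, hence $r_{\ell m}\ge \mathfrak{m}R-\Rc$, and your estimate goes through unchanged.
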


Similar to Lemma \ref{lemma:locality_fixed_mu}, the constants in
Lemma \ref{lemma:limit_fixed_mu} depend
only on $d$, $\mathfrak{m}$, $\Rc$, $h_{\rm hop}$, $h_{\rm ons}$, $\ay$
and $\tau$,
but $C_j$ are bounded above and $\eta_j$ are bounded away from 0 on bounded intervals for $\tau$.

From \eqref{err_d_pointwise_limit} and the locality \eqref{eq:locality_ldos},
we can derive the locality of the thermodynamic limits:
if $\tau\in\R$ is fixed, then
\begin{eqnarray}\label{eq:locality_FE_limit}
\left|\frac{\partial^j \Ay_{\ell}(y,\tau)}{\partial [y(m_1)]_{i_1}\cdots\partial [y(m_j)]_{i_j}}\right|
\leq C_j e^{-\gamma_j\sum_{l=1}^j r_{\ell m_l}}
\end{eqnarray}
for any $\ell\in\L$ and $1\leq i_1,\cdots,i_j\leq d$.

\begin{remark}\label{remark:symmetry_limit}
From the isometry and permutation invariance of $\Ay_{\ell}(y^{\Omega_R},\tau)$,
we can also derive the isometry and permutation invariance of their thermodynamic
limits:
if $\mathcal{I}:\R^d\rightarrow\R^d$ is an isometry, then
$\Ay_{\ell}(y,\tau)=\Ay_{\ell}(\mathcal{I}(y),\tau)$;
if $\Pi$ is a permutation of $\L$, then
$\Ay_{\ell}(y,\tau)=\Ay_{\Pi^{-1}(\ell)}(y\circ\Pi,\tau)$.
\end{remark}

\subsection{Energy space for displacements}
We can decompose the configuration $y$ into
\begin{eqnarray}\label{config_y_u}
	y(\ell) = \ell + u(\ell)  \qquad\forall~\ell\in\Lambda,
\end{eqnarray}
where $u:\Lambda\rightarrow\mathbb{R}^d$ is called the {\em displacement}.
Since we are considering point defects, it is natural to assume that
displacements in the infinite lattice model will belong to an
energy space \cite{chenpre_vardef,ehrlacher13}, which we define next.

If $\ell, \ell+\rho\in\Lambda$ then we define the finite difference
$D_\rho u(\ell) := u(\ell+\rho) - u(\ell)$. The full interaction stencil
is defined by $Du(\ell) := (D_{\rho}u(\ell))_{\rho \in \L-\ell}$.
For a stencil $Du(\ell)$ and $\gamma > 0$ we define the (semi-)norms
\begin{eqnarray*}
	\big|Du(\ell)\big|_\gamma := \bigg( \sum_{\rho \in \L-\ell} e^{-2\gamma|\rho|}
	\big|D_\rho u(\ell)\big|^2 \bigg)^{1/2}
	\quad{\rm and}\quad
	\| Du \|_{\ell^2_\gamma} := \bigg( \sum_{\ell \in \L}
	|Du(\ell)|_\gamma^2 \bigg)^{1/2}.
\end{eqnarray*}
We will also use the norm $\| Du \|_{\ell^1_\gamma} := \sum_{\ell \in \L} |Du(\ell)|_\gamma$ in our analysis.
For any $\Omega\subset\L$, we define
$\| Du \|_{\ell^p_\gamma(\Omega)} := \bigg( \sum_{\ell \in \Omega}
|Du(\ell)|_\gamma^p \bigg)^{1/p}$
with $p=1,2$.

We have from \cite{chenpre_vardef} that all (semi-)norms
$\|\cdot\|_{\ell^2_\gamma}$ with $\gamma>0$ are equivalent.
Following \cite{chenpre_vardef,ehrlacher13} we can therefore define the function
space of finite energy displacements,
\begin{displaymath}
	\UsH(\L) := \big\{ u:\L\to\R^d,
	~ \|Du\|_{\ell^2_\gamma}<\infty \big\},
\end{displaymath}
with the associated semi-norm $\|Du\|_{\ell^2_\gamma}$.

Upon defining $x:\L\rightarrow\R, x(\ell) :=\ell$, the associated class
of {\em admissible displacements} is
\begin{eqnarray*}
	\Admu(\L):= \big\{ u\in\UsH(\L) ~:~ x+u\in\Adm_0(\L) \big\}.
\end{eqnarray*}

We now transform local analytic QoIs $\Ay_\ell(y, \tau)$ to become functions of
displacements $u = y - x$. Due to the isometry (translation) invariance
(Remark \ref{remark:symmetry_limit}),
we may represent $\Ay_\ell$ as a function of $(Du(\ell),\tau)$, i.e.,
\begin{eqnarray*}
	\Au_{\ell}(Du(\ell),\tau):=\Ay_{\ell}(x+u,\tau) .
\end{eqnarray*}
Moreover, if $\L=\Lhom$, then permutation invariance
(Remark \ref{remark:symmetry_limit}) also removes
the dependence on the lattice site, i.e., we can write
\begin{equation} \label{eq:local-homogeneous-observable}
	\Au_{\ell}(Du(\ell),\tau) = \Au_{\#}(Du(\ell),\tau)
	\qquad \forall~\ell\in\Lhom.
\end{equation}

\subsection{Limit of the chemical potential}
\label{sec:limit_mu}
Before we state the variational problem on the limit lattice $\L$ we
investigate the behaviour of the chemical potentials as $\Omega \uparrow \L$.

With the notation \eqref{eq:local-homogeneous-observable} we can define the
{\it Fermi level} of a homogeneous crystal, $\muhom$, such that
\begin{eqnarray}\label{def:fermi_level}
	\Nu_{\#}(\pmb{0},\muhom) = 1 .
\end{eqnarray}
Note that $\muhom$ is uniquely defined since the thermodynamic limit
$\Nu_{\#}(\pmb{0},\tau)$ is a strictly monotone continuous function of $\tau$,
with $\Nu_{\#}(\pmb{0}, \tau) \to 0$ (resp. $2$) as $\tau \to -\infty$
(resp. $+\infty$). See also Remark~\ref{rem:bloch}.

\begin{theorem} \label{theorem:limit_mu}
	Let $\L$ satisfy \asRC, $\L_R := \L \cap B_R \uparrow \L$ and $N_R := \#
	\L_R$. For each $R$ let $u_R : \L_R \to \R^d$ with $y_R(\ell) := \ell+
	u_R(\ell)$ a {\em configuration} with parameter $\mathfrak{m}$ independent of
	$R$.

	Let $\NeR \in \mathbb{R}$ be a prescribed number of electrons in the subsystem
	$\L_R$, chosen such that $|N_R - \NeR|$ is bounded as $R \to \infty$.
	Then, for $R$ sufficiently large, the chemical potential $\mu_R$ solving
	$N(y_R,\mu_R) = \NeR$ is well-defined and satisfies
	\begin{eqnarray}\label{eq:mu_converge}
    \big|\mu_R - \muhom \big|
	\leq C_1 \left( R^{-1} + R^{-d} \| Du_R \|_{\ell^1_\gamma} \right)
	\leq C_2 \left( R^{-1} + R^{-d/2} \| Du_R \|_{\ell^2_\gamma} \right)
	\end{eqnarray}
	with some constants $C_1$ and $C_2$.
\end{theorem}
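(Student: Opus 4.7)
The plan is to reduce the theorem, via a quantitative monotonicity of $N(y_R,\cdot)$, to an estimate on $|N(y_R,\muhom)-\NeR|$, and then to control that defect in electron count by a three-stage spatial decomposition using Lemma~\ref{lemma:limit_fixed_mu}, the permutation invariance \eqref{eq:local-homogeneous-observable}, and the locality \eqref{eq:locality_FE_limit}. First I would establish uniform quantitative monotonicity: using $\partial_\tau \mathfrak{n}(x,\tau) = 2\beta f(x-\tau)\bigl(1-f(x-\tau)\bigr)$ together with the uniform spectral enclosure $\lambda_s \in [\underline{\lambda},\bar{\lambda}]$ noted after \eqref{tb-H-elements}, there exists $c>0$ such that $\partial_\tau N(y_R,\tau) \geq cN_R$ for all $\tau$ in a fixed neighbourhood of $\muhom$ and all $R$. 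Consequently, provided $N(y_R,\muhom)$ is close enough to $\NeR$, the unique $\mu_R$ solving $N(y_R,\mu_R)=\NeR$ lies in that neighbourhood and satisfies $|\mu_R-\muhom| \leq (cN_R)^{-1}|N(y_R,\muhom)-\NeR|$; since by hypothesis $|N_R-\NeR|=O(1)$, the task reduces to estimating $|N(y_R,\muhom) - N_R|$.

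Writing $N(y_R,\muhom) - N_R = \sum_{\ell \in \L_R}\bigl(\Nu_\ell(y_R,\muhom) - 1\bigr)$ and using the normalisation $\Nu_{\#}(\pmb{0},\muhom)=1$ from \eqref{def:fermi_level}, I would treat each summand in three stages. \emph{(i) Boundary.} Lemma~\ref{lemma:limit_fixed_mu} with $\Omega_R = \L_R$ replaces the finite-domain quantity $\Nu_\ell(y_R,\muhom)$ by its infinite-lattice counterpart $\Nu_\ell(x+u_R,\muhom)$ at a cost $C\exp\bigl(-\eta\,\dis{\ell}{\R^d\setminus B_R}\bigr)$; summing over $\L_R$ by radial shells of thickness one gives total contribution $\leq C\sum_{k\geq 0} R^{d-1}e^{-\eta k} = O(R^{d-1})$. \emph{(ii) Defect.} For $\ell \in \L_R\setminus B_{\Rcore}$, the locality \eqref{eq:locality_FE_limit} combined with the permutation invariance \eqref{eq:local-homogeneous-observable} identifies $\Nu_\ell(x+u_R,\muhom)$ with $\Nu_{\#}(Du_R(\ell),\muhom)$ up to an error $Ce^{-\gamma\dis{\ell}{B_{\Rcore}}}$ that sums to $O(1)$; the $O(1)$ sites in $\L\cap B_{\Rcore}$ contribute a further bounded remainder. \emph{(iii) Displacement.} A first-order Taylor expansion of $\Nu_{\#}(\cdot,\muhom)$ about the trivial stencil, combined with \eqref{eq:locality_FE_limit} and Cauchy--Schwarz on the exponential weights, yields $|\Nu_{\#}(Du_R(\ell),\muhom)-1| \leq C|Du_R(\ell)|_\gamma$, so summing over $\ell$ delivers the $\|Du_R\|_{\ell^1_\gamma}$ term.

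Assembling the three stages gives $|N(y_R,\muhom)-N_R| \leq C\bigl(R^{d-1} + \|Du_R\|_{\ell^1_\gamma}\bigr)$, and dividing by $cN_R \sim R^d$ produces the first inequality of \eqref{eq:mu_converge}; the elementary bound $\|Du_R\|_{\ell^1_\gamma(\L_R)} \leq N_R^{1/2}\|Du_R\|_{\ell^2_\gamma}$ (Cauchy--Schwarz in $\ell$) then furnishes the second. I expect the main technical nuisance to be Stage~(i): converting the ``$\Omega_R \supseteq B_R(\ell)$'' formulation of Lemma~\ref{lemma:limit_fixed_mu} into the $\dis{\ell}{\partial B_R}$-dependent estimate uniform over interior $\ell \in \L_R$, and summing it cleanly by shells. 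A minor a~posteriori bootstrap is also required to confirm that $\mu_R$ lies in the neighbourhood of $\muhom$ on which $\partial_\tau N \geq cN_R$ is available; this is automatic for $R$ large since the right-hand side of \eqref{eq:mu_converge} is then small.
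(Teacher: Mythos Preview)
Your proposal is correct and follows essentially the same strategy as the paper's proof: both reduce to an electron-count defect via the uniform lower bound on $\partial_\tau N$, then split that defect into boundary, defect-core, and displacement contributions controlled respectively by Lemma~\ref{lemma:limit_fixed_mu}, the finiteness of $B_{\Rcore}$, and the locality estimate together with Cauchy--Schwarz. The only cosmetic difference is that you route the comparison through the infinite-lattice limit $\Nu_\ell(x+u_R,\muhom)$ as an intermediate, whereas the paper stays at the finite-domain level by introducing the auxiliary domain $\L_R\setminus B_{\Rcore}$ and a homogeneous comparison system $\Lhom_R$, leading to a five-term rather than three-stage decomposition.
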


\begin{proof}
The proof is presented in \S~\ref{sec:proof_limit_mu}.
\end{proof}

Informally, the chemical potential converges to the Fermi level of the
corresponding homogeneous lattice if the displacement norm $\| Du_R
\|_{\ell^1_\gamma}$ grows slower than $N_R$. The interpretation of this result
is that the Fermi level only changes through a global (non-rigid) transformation
of the lattice structure, such as a change in the lattice constant.

\begin{remark} \label{rem:bloch}
One can alternatively define the Fermi level $\muhom$  using Bloch's theorem
\cite{kittel96}. Let $y_{\#}:\Lhom\to\R^d$ be such that $y_{\#}(\ell)=\ell$ and
${\rm BZ}$ be the first Brillouin zone associated with $\Lhom$. For any point
$k\in{\rm BZ}$, we have the associated Hamiltonian
(in the simple `two-centre' tight binding setting \eqref{tb-H-elements})
\begin{eqnarray*}
\mathcal{H}^{(k)}(y_{\#}) := h_{\rm ons}\left(\sum_{\ell\in\Lhom\backslash \{0\}} \varrho\big( r_{0\ell} \big)\right)
+ \sum_{\ell\in\Lhom\backslash \{0\}} \exp\big(ik\cdot \pmb{r}_{0\ell} \big) \cdot h_{\rm hop}\big(r_{0\ell} \big)
\end{eqnarray*}
with $\pmb{r}_{0\ell}=y_{\#}(\ell)-y_{\#}(0)$ and $r_{0\ell}=|\pmb{r}_{0\ell}|$.
Note that $\mathcal{H}^{(k)}(y_{\#})$ is an $1\times 1$ matrix since there is only one atom in each unit cell and one atomic orbital for each atom.
Then we can obtain the corresponding eigenvalue $\lambda^{(k)}=\mathcal{H}^{(k)}(y_{\#})$,
and the Fermi level $\muhom$ is defined such that
\begin{eqnarray}\label{eq:mu_per}
\int_{{\rm BZ}} f(\lambda^{(k)} - \muhom) \dd k=\frac{1}{2}.
\end{eqnarray}
Using \eqref{eq:mu_per},  $\muhom$ can be efficiently computed numerically.
The generalisation to multiple atomic orbitals is straightforward.
\end{remark}

\subsection{Main result: the thermodynamic limit}
\label{sec:vf_limit}
In view of Remark \ref{rem:grand-can-frcs} and Theorem~\ref{theorem:limit_mu}
the grand potential with
{\em fixed} Fermi-level $\muhom$ is a natural candidate for the limit energy.
Thus, for a displacement $u : \L \to \R^d$ we define (formally at first)
the {\it grand potential difference functional} of the infinite system by
\begin{eqnarray}\label{def_grand_diff}
	\Gu(u):=\sum_{\ell\in\L} \Big( \Gu_{\ell}(Du(\ell),\muhom)
	- \Gu_{\ell}(\pmb{0},\muhom) \Big).
\end{eqnarray}
Using locality \eqref{eq:locality_FE_limit} of $\Gu_\ell$
 we can obtain
the following result, which states that the difference functional is
well-defined. We refer to \cite{chenpre_vardef} (see also \cite{ehrlacher13})
for a rigorous proof.

\begin{lemma}\label{lemma:free_E_space}
	$\Gu$ is well-defined on $\Wc(\L) \cap \Admu(\L)$ where
	\begin{eqnarray*}
		\Wc(\L) = \left\{ u \in\UsH(\L),~\exists~R>0
		~s.t.~u={\rm const} ~in~ \L\backslash B_R \right\},
	\end{eqnarray*}
	and continuous with respect to the $\UsH$-topology. In particular, there
	exists a unique continuous extension to $\Admu(\L)$. The extended
	functional, still denoted by $\Gu$, is $\nu$ times Fr\'{e}chet
	differentiable.
\end{lemma}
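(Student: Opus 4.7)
The plan is to reduce the statement to the pointwise locality estimate \eqref{eq:locality_FE_limit} of $\Gu_\ell$ together with a standard density argument in the energy space $\UsH(\L)$.

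I would start by bounding each summand in \eqref{def_grand_diff}. Writing
\begin{equation*}
   \Gu_\ell(Du(\ell),\muhom) - \Gu_\ell(\pmb{0},\muhom) = \int_0^1 \del \Gu_\ell\big(tDu(\ell),\muhom\big) \cdot Du(\ell)\,dt
\end{equation*}
and using the first-order bound in \eqref{eq:locality_FE_limit} together with translation invariance (Remark \ref{remark:symmetry_limit}), which allows replacing $u(m)$ by $u(m) - u(\ell)$, a Cauchy--Schwarz argument applied to the splitting $e^{-\gamma_1 r_{\ell m}} = e^{-\gamma' r_{\ell m}} \cdot e^{-(\gamma_1 - \gamma')r_{\ell m}}$ yields
\begin{equation*}
   \big| \Gu_\ell(Du(\ell),\muhom) - \Gu_\ell(\pmb{0},\muhom) \big| \leq C |Du(\ell)|_{\gamma'}
\end{equation*}
for any $\gamma' < \gamma_1$. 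For $u \in \Wc(\L)$ the sum over $\ell \in \L$ converges absolutely thanks to the exponential tails outside the support of $u - \text{const}$, giving well-definedness on $\Wc(\L) \cap \Admu(\L)$.

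For continuity in the $\UsH$-topology I would use a second-order Taylor expansion of $\Gu_\ell$ around $\pmb{0}$ (available since $\nu \geq 3$) to establish
\begin{equation*}
   |\Gu(u_1) - \Gu(u_2)| \leq C\big(1 + \|Du_1\|_{\ell^2_\gamma} + \|Du_2\|_{\ell^2_\gamma}\big) \|D(u_1 - u_2)\|_{\ell^2_\gamma}
\end{equation*}
on bounded subsets of $\Wc \cap \Admu$. The quadratic remainder is controlled by Cauchy--Schwarz on the double sum with the exponentially decaying kernel $e^{-\gamma_2 (r_{\ell m_1} + r_{\ell m_2})}$ provided by \eqref{eq:locality_FE_limit} for $j=2$. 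The first-order contribution reduces, via the $\ell$-independence of $\del \Gu_\#(\pmb{0},\muhom)$ on the homogeneous part of $\L$ combined with the telescoping identity $\sum_\ell D_\rho(u_1 - u_2)(\ell) = 0$ (valid since $u_1, u_2$ are constant outside a bounded set), plus the exponential localisation of the defect correction within $B_{\Rcore}$, to a sum already controlled by $\|D(u_1-u_2)\|_{\ell^2_\gamma}$. Density of $\Wc \cap \Admu$ in $\Admu$ (standard truncation) together with this uniform continuity on bounded subsets then yields the unique continuous extension.

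For $\nu$-times Fréchet differentiability I would apply the same strategy to higher-order variational derivatives: the $j$-th order locality bound in \eqref{eq:locality_FE_limit} combined with $j$-fold Cauchy--Schwarz produces the multilinear estimate
\begin{equation*}
   \big|\del^j \Gu(u)[v_1, \ldots, v_j]\big| \leq C_j(u) \prod_{k=1}^{j} \|Dv_k\|_{\ell^2_\gamma}, \qquad 1 \leq j \leq \nu,
\end{equation*}
together with continuity of the kernels in $u$. The main technical obstacle is the first-order Taylor term in the continuity estimate: being only linear in $Du$, it is not a priori controlled by an $\ell^2_\gamma$ norm, and one must exploit the interplay between the lattice symmetry of the homogeneous bulk (which renders the uniform stress dual to stencils rather than to displacements directly) and the exponential localisation of the defect contribution near the core.
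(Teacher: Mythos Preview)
The paper does not supply its own proof of this lemma; it simply invokes the locality estimate \eqref{eq:locality_FE_limit} and refers to \cite{chenpre_vardef} and \cite{ehrlacher13} for the details. Your proposal reproduces exactly the strategy of those references---first-order locality plus translation invariance to bound individual summands, a second-order Taylor expansion around $\pmb{0}$ with the linear term handled via bulk symmetry/telescoping and a localised defect correction, Cauchy--Schwarz against the exponential kernel for the quadratic and higher-order terms, and extension by density of $\Wc$ in $\UsH$---so it is correct and aligned with what the paper cites.
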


The force equilibration problem associated with $\Gu$ is
\begin{eqnarray}\label{problem_min_inf}
	\bar{u}\in\arg\min\Big\{ \Gu(u)~:~u\in\Admu(\L) \Big\},
\end{eqnarray}
where ``$\arg\min$'' is understood in the sense of local minimality.
We claim that solutions to
\eqref{problem_min_y_tau} converge to a solution of
\eqref{problem_min_inf}. We will prove two complementary results to
establish this.

First, we reformulate
\eqref{problem_min_y_tau} in terms of displacements, and
specify a sequence of domains $\Omega$ and $\Omega^{\rm D}$.
For each domain radius $R > 0$ we choose a buffer radius $\Rb = \Rb(R)$
with $\Rb \to \infty$ as $R \to \infty$ and define
\begin{displaymath}
   \L_R := \L \cap B_{R+\Rb}  \qquad \text{and} \qquad
   \L_R^{\rm D} := \L_R \setminus B_{R}.
\end{displaymath}
The associated set of admissible displacements is
\begin{multline*}
	\AdmR := \Big\{	u\in\WR(\L) ~\big|~ x+u\in\Adm_0(\L)  \Big\},
\\[1ex]
{\rm with} \qquad
\WR(\L):=\Big\{ u:\L\rightarrow\R^d ~\big|~
u=0~{\rm in}~\L\backslash B_R \Big\} .
\qquad
\end{multline*}
Note that we have extended displacements of $\L_R$ by zero in order to be able
to estimate errors. The resulting finite-domain equilibrium problem
corresponding to \eqref{problem_min_y_tau}
(with $\Omega=\L_R$ and $\Omega^{\rm D} =\L_R^{\rm D}$)
is to find $(\bar{u}_R,\bar{\mu}_R)\in\AdmR\times\R$ such that
\begin{eqnarray}\label{equil_force_R}
(\bar{u}_R,\bar{\mu}_R)
\in \arg\min \Big\{ \Eu^{\L_R}(u_R, \tau) ~:~
\Nu^{\L_R}(u_R, \tau) = \NeR,~\tau\in\R, ~u_R\in\AdmR
\Big\} ,
\quad
\end{eqnarray}
where $\Eu^{\L_R}(u_R,\tau) := \Ey\big((x+u_R)|_{\L_R},\tau\big)$,
$\Nu^{\L_R}(u_R,\tau) := \Ny\big((x+u_R)|_{\L_R},\tau\big)$
and $\NeR$ is the number of electrons contained in $\L_R$.

As indicated above, we present two rigorous justifications of
\eqref{problem_min_inf}, the proofs of which are, respectively, given in
\S~\ref{sec:proof_theo_1} and \S~\ref{sec:proof_theo_2}. 
We refer to Remark \ref{remark:ass:main} for a discussion of the assumptions under which these results hold.

First, we show that, if \eqref{problem_min_inf} has a
solution $\bar{u}$, then there exist solutions $\bar{u}_R$ to
\eqref{equil_force_R} such that
$\| D\bar{u}_R - D\bar{u} \|_{\ell^2_\gamma} \to 0$.

\begin{theorem}\label{theorem:limit_problem_1}
	Assume that $|N_R - \NeR|$ is bounded as $R \to \infty$.
	If $\bar{u}\in\Admu(\L)$ is a solution of \eqref{problem_min_inf}
	which is also strongly stable, i.e.,
	\begin{eqnarray}\label{ubar_stability}
	\big\<\delta^2\Gu(\bar{u}) v, v\big\>
	\geq \bar{c}\|Dv\|^2_{\ell^2_{\gamma}} \qquad\forall~v\in\UsH(\L),
	\end{eqnarray}
	then there are constants $R_0, c_{\rm b} > 0$
	such that, for $R>R_0$ and $\Rb > c_{\rm b} \log R$,
	there exists a solution $(\bar{u}_R,\bar{\mu}_R)$ of \eqref{equil_force_R}
	satisfying
	\begin{eqnarray}\label{err_Du}
		\big\|D\bar{u}-D\bar{u}_R\big\|_{\ell^2_{\gamma}}
		+ \big|\bar{\mu}_R-\muhom\big| \leq CR^{-\min\{1,d/2\}}.
	\end{eqnarray}
\end{theorem}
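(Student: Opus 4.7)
The plan is to recast \eqref{equil_force_R} as a nonlinear system in $(u,\tau)\in\WR(\L)\times\R$ via Lagrange multipliers. Because $\partial_\tau\mathfrak{e}(x,\tau)=\tau\partial_\tau\mathfrak{n}(x,\tau)$, the Lagrange multiplier for the particle-number constraint coincides with $\tau$ itself, so the KKT conditions for \eqref{equil_force_R} reduce to
\begin{align*}
F_R(u,\tau):=\bigl(\delta_u\Gu^{\L_R}(u,\tau),\; \Nu^{\L_R}(u,\tau)-\NeR\bigr)=(0,0),
\end{align*}
where $\Gu^{\L_R}(u,\tau):=\Eu^{\L_R}(u,\tau)-\tau\Nu^{\L_R}(u,\tau)$ and the first slot is viewed in the dual of $\WR(\L)$. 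It then suffices to build $(\bar u_R,\bar\mu_R)$ solving $F_R=0$ near $(\bar u,\muhom)$ and to verify local minimality using \eqref{ubar_stability}.

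\paragraph{Approximate solution and consistency.} I will take a truncation $\bar u_R^*\in\WR(\L)\cap\AdmR$ of $\bar u$ (e.g.\ via a smooth radial cutoff at radius $R$); since $\bar u\in\UsH(\L)$, standard truncation estimates from \cite{chenpre_vardef} give $\|D\bar u-D\bar u_R^*\|_{\ell^2_\gamma}\to 0$ as $R\to\infty$. Combining the infinite-lattice equation $\delta\Gu(\bar u)=0$ with exponential locality (Lemma \ref{lemma:locality_fixed_mu}) and pointwise thermodynamic convergence (Lemma \ref{lemma:limit_fixed_mu}), the force residual should satisfy $\|\delta_u\Gu^{\L_R}(\bar u_R^*,\muhom)\|_{\WR(\L)^*}\lesssim e^{-c R_{\rm b}}$: interior atoms contribute exponentially small errors by locality, while boundary effects are shielded by the clamped buffer of width $R_{\rm b}$, which under the assumption $R_{\rm b}>c_{\rm b}\log R$ is faster than any power of $R$. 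For the number constraint, summing local contributions and using \eqref{err_pointwise_limit} yields $|\Nu^{\L_R}(\bar u_R^*,\muhom)-\NeR|\lesssim R^{d-1}$, a bound that will translate into the announced chemical-potential rate once paired with $\partial_\tau\Nu^{\L_R}\asymp N_R$.

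\paragraph{Stability of the linearisation and inverse function theorem.} The heart of the proof is uniform invertibility of
\begin{align*}
DF_R(\bar u_R^*,\muhom)=\begin{pmatrix}\delta^2_u\Gu^{\L_R} & -\delta_u\Nu^{\L_R}\\ (\delta_u\Nu^{\L_R})^T & \partial_\tau\Nu^{\L_R}\end{pmatrix},
\end{align*}
where the cross block comes from $\partial_\tau\delta_u\Gu=-\delta_u\Nu$. I will perform a Schur-complement analysis: strong stability \eqref{ubar_stability}, together with \eqref{eq:locality_ldos} and \eqref{err_d_pointwise_limit}, transfers to the finite system to give $\langle\delta^2_u\Gu^{\L_R}(\bar u_R^*,\muhom)v,v\rangle\geq(\bar c/2)\|Dv\|^2_{\ell^2_\gamma}$ for all $v\in\WR(\L)$ once $R,R_{\rm b}$ are large enough; the $\partial_\tau\Nu^{\L_R}$ block behaves like $c_\#N_R$ with $c_\#:=\partial_\tau\Nu_{\#}(\pmb 0,\muhom)>0$; and locality of $\Nu$ (Lemma \ref{lemma:locality_fixed_mu}) bounds the mixed block by $C\sqrt{N_R}$ in the appropriate dual norm. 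Equipping $\WR(\L)\times\R$ with the weighted norm $\|Dv\|_{\ell^2_\gamma}+\sqrt{N_R}\,|\sigma|$, the Schur complement is positive of order $N_R$ and $DF_R$ has a uniformly bounded inverse. A quantitative implicit function theorem (Newton--Kantorovich), combined with the Lipschitz bound on $DF_R$ guaranteed by $C^\nu$ regularity with $\nu\geq 3$, then produces a unique $(\bar u_R,\bar\mu_R)$ near $(\bar u_R^*,\muhom)$ with error bounded by the consistency residuals, yielding \eqref{err_Du} after a triangle inequality with the truncation estimate. Local minimality follows from the positivity of $\delta^2_u\Gu^{\L_R}$ on the tangent space of the constraint.

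\paragraph{Main obstacle.} The hardest step will be the Schur-complement analysis with the two scales $N_R$ and $O(1)$ coexisting in the block matrix. Choosing the correct product norm so that the inverse is uniformly bounded in $R$ --- essentially the observation that the chemical-potential perturbation lives at scale $N_R^{-1/2}$ --- is delicate and relies crucially on Lemma \ref{lemma:locality_fixed_mu} to show that $\delta_u\Nu^{\L_R}$, despite summing over $N_R$ atoms, contributes only $O(\sqrt{N_R})$ in operator norm rather than $O(N_R)$. A companion subtlety is that the number-constraint residual is of size $R^{d-1}$ in absolute value but of order $R^{-1}$ in the $N_R^{-1/2}$-weighted norm, which is precisely what recovers the $R^{-\min\{1,d/2\}}$ rate for $|\bar\mu_R-\muhom|$ matching Theorem \ref{theorem:limit_mu}.
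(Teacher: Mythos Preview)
Your overall strategy---recast \eqref{equil_force_R} as a coupled system in $(u,\tau)$ via the grand potential, establish consistency and stability at a truncation of $\bar u$ paired with $\muhom$, then apply a quantitative inverse function theorem---is exactly the paper's. However, there is a genuine gap in the stability step that breaks the argument for $d\geq 2$. From locality alone one only obtains $|\<\delta_u\Nu^{\L_R},v\>|\leq C\sqrt{N_R}\,\|Dv\|_{\ell^2_\gamma}$, and with your weighted norm $\|Dv\|_{\ell^2_\gamma}+\sqrt{N_R}\,|\sigma|$ the number-constraint residual in the dual norm becomes $R^{d-1}/\sqrt{N_R}\sim R^{d/2-1}$, not $R^{-1}$ as you assert in the final paragraph. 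For $d\geq 2$ this does not tend to zero, so the inverse function theorem cannot close and you obtain neither existence of $(\bar u_R,\bar\mu_R)$ nor any rate for $\|D\bar u-D\bar u_R\|_{\ell^2_\gamma}$.

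The missing ingredient is the sharper estimate $|\<\delta_u\Nu^{\L_R}(T_R\bar u,\muhom),v\>|\leq b_0\|Dv\|_{\ell^2_\gamma}$ with $b_0$ independent of $R$. This is not a consequence of Lemma~\ref{lemma:locality_fixed_mu}: one must Taylor-expand $\delta_u\Nu_{\ell}$ about $Du=\pmb 0$ and subtract the homogeneous-lattice contribution $\sum_{\ell\in\Lhom}\<\delta_u\Nu_{\#}(\pmb 0,\muhom),Dv(\ell)\>$, which is $O(\|Dv\|_{\ell^2_\gamma})$ by the arguments of \cite{chenpre_vardef} and \cite[Lemma~2.1]{ehrlacher13} (for compactly supported $v$ it in fact telescopes to zero). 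What remains is a finite defect-core sum plus a second-order term controlled by $\|DT_R\bar u\|_{\ell^2_\gamma}\|Dv\|_{\ell^2_\gamma}$. With this $O(1)$ off-diagonal bound in hand the paper normalises the particle-number equation by $\NeR^{-1}$ and works in the unweighted product norm $\|Dv\|_{\ell^2_\gamma}+|\sigma|$; the number residual then contributes $R^{d-1}/N_R\sim R^{-1}$, which combined with the force residual $R^{-d/2}$ yields exactly $R^{-\min\{1,d/2\}}$. A secondary omission: to obtain the $R^{-d/2}$ force contribution you need the quantitative truncation estimate $\|D\bar u-DT_R\bar u\|_{\ell^2_\gamma}\leq CR^{-d/2}$, which rests on the pointwise decay $|D\bar u(\ell)|_\gamma\leq C(1+|\ell|)^{-d}$ for strongly stable critical points (Lemma~\ref{lemma:regularity})---a regularity result you do not invoke.
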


Our second result reverses the argument: if $\bar{u}_R$ is a bounded
sequence of solutions to \eqref{equil_force_R}, then any accumulation
point $\bar{u}$ solves \eqref{problem_min_inf}.

\begin{theorem} \label{theorem:limit_problem_2}
	Let $R_j \uparrow \infty$ and $(\bar{u}_{R_j}, \bar{\mu}_{R_j})$ be
	solutions to \eqref{equil_force_R} with $R = R_j$,
	if $|N_{R_j} - N_{{\rm e},R_j}|$ is bounded
	and $\sup_{j > 0} \| D\bar{u}_{R_j} \|_{\ell^2_\gamma} < \infty$, then there
	exists a subsequence (not relabelled) and $\bar{u} \in \Admu(\L)$ such that
	\begin{eqnarray}\label{convergence_uj}
	\bar{\mu}_{R_j} \to \muhom \quad \text{and} \quad
	D_\rho \bar{u}_{R_j}(\ell) \to D_\rho \bar{u}(\ell) \quad
	\forall~\ell \in \L, ~\rho \in \L - \ell.
	\end{eqnarray}
	Moreover, each such accumulation point $\bar{u}$ solves \eqref{problem_min_inf}.
\end{theorem}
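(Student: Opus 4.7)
The plan is to verify the three claims of the theorem in sequence: convergence of the chemical potentials, extraction of a pointwise-convergent subsequence of displacements to a limit $\bar{u} \in \Admu(\L)$, and verification that $\bar{u}$ satisfies the Euler--Lagrange equation of \eqref{problem_min_inf}.

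The convergence $\bar{\mu}_{R_j} \to \muhom$ is immediate from Theorem~\ref{theorem:limit_mu}: under the hypothesis $\sup_j \|D\bar{u}_{R_j}\|_{\ell^2_\gamma} < \infty$ together with the bounded electron deficiency $|N_{R_j} - N_{{\rm e},R_j}|$, the right-hand side of \eqref{eq:mu_converge} vanishes as $j \to \infty$ at rate $R_j^{-\min\{1,d/2\}}$. For the extraction of $\bar{u}$, the uniform bound in $\|\cdot\|_{\ell^2_\gamma}$ forces $D_\rho \bar{u}_{R_j}(\ell)$ to be bounded for every fixed $\ell \in \L$ and $\rho \in \L-\ell$, so a Cantor diagonal argument yields a subsequence (not relabelled) such that $D_\rho \bar{u}_{R_j}(\ell) \to g(\ell,\rho)$ for all $\ell$ and $\rho$. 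Because each $\bar{u}_{R_j}$ is a function of $\ell$, the finite differences automatically satisfy the cocycle relation $g(\ell,\rho+\sigma) = g(\ell,\rho) + g(\ell+\rho,\sigma)$, so after fixing the value at one site (using translation invariance, Remark~\ref{remark:symmetry_limit}) $g$ is the stencil of a well-defined $\bar{u} : \L \to \R^d$. Fatou's lemma gives $\|D\bar{u}\|_{\ell^2_\gamma} \leq \liminf_j \|D\bar{u}_{R_j}\|_{\ell^2_\gamma}$, and \eqref{eq:non-interpenetration} passes to the pointwise limit, so $\bar{u} \in \Admu(\L)$.

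The heart of the argument is checking $\langle \delta \Gu(\bar{u}), v\rangle = 0$ for every compactly supported $v : \L \to \R^d$, which then extends to all of $\UsH(\L)$ by density. Fix such $v$ with $\mathrm{supp}\,v \subset B_{R^*}$; for $j$ large enough $v$ vanishes on $\L_{R_j}^{\rm D}$ and is thus a valid test displacement in the finite problem. First-order optimality in \eqref{equil_force_R}, together with the Hellmann--Feynman-type identity \eqref{nablaG_nablaE}, yields $\langle \delta_u \Gu^{\L_{R_j}}(\bar{u}_{R_j}, \bar{\mu}_{R_j}), v\rangle = 0$. I write this as a sum over $\ell \in \L_{R_j}$ of partial derivatives of $\Gu_\ell$ and split into near-field contributions from $\ell$ within a growing distance $\rho_j$ of $\mathrm{supp}\,v$ and the remaining far-field. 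The near-field terms converge termwise by combining pointwise stencil convergence, joint continuity of the derivatives of $\ay$ in both arguments, the already-proved convergence $\bar{\mu}_{R_j} \to \muhom$, and the finite-to-infinite local-observable estimate \eqref{err_d_pointwise_limit}. The far-field is controlled uniformly in $j$ using the exponential decay of the force kernel from Lemma~\ref{lemma:locality_fixed_mu} together with the uniform bound on $\|D\bar{u}_{R_j}\|_{\ell^2_\gamma}$ via a weighted Cauchy--Schwarz estimate. Passing to the limit yields the critical-point condition for $\bar{u}$.

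The main obstacle is precisely this uniform tail control: the constants $C_j, \gamma_j$ in Lemma~\ref{lemma:locality_fixed_mu} depend on $\tau$, so one must exploit the already-established boundedness of $\{\bar{\mu}_{R_j}\}$ in a neighbourhood of $\muhom$ to fix a common decay rate across the sequence before comparing the finite and infinite expressions. A secondary subtlety is that the critical-point condition alone does not immediately deliver \emph{local minimality} of $\bar{u}$ for $\Gu$; to upgrade to the claim that $\bar{u}$ solves \eqref{problem_min_inf} in the stated sense, one should additionally pass the nonnegativity of the finite-domain Hessians $\delta^2 \Eu^{\L_{R_j}}(\bar{u}_{R_j})$ to the limit by a lower-semicontinuity argument, again leveraging locality of $\delta^2 \Gu_\ell$ and equivalence of the stencil norms $\|\cdot\|_{\ell^2_\gamma}$.
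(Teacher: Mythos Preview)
Your argument follows essentially the same route as the paper's. For compactness the paper invokes Banach--Alaoglu in $\UsH(\L)$ (modulo constants) rather than a diagonal argument, and for the Euler--Lagrange step it organises the calculation slightly differently: instead of splitting the sum $\sum_\ell \langle \delta_u \Gu_\ell^{\L_{R_j}}, Dv(\ell)\rangle$ into near- and far-field in $\ell$, it rewrites the vanishing condition as the finite sum $\sum_{\ell \in \mathrm{supp}\,v} \Fu_\ell^{\L_{R_j}}(\bar{u}_{R_j},\bar{\mu}_{R_j}) \cdot v(\ell)$ and shows each force converges via a three-term telescope (replace $\bar{u}_{R_j}$ by $\bar{u}$, then $\L_{R_j}$ by $\L$, then $\bar{\mu}_{R_j}$ by $\muhom$), with an auxiliary cutoff radius playing the role of your $\rho_j$ in the first step. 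Both organisations rest on the same locality inputs (Lemma~\ref{lemma:locality_fixed_mu} and Lemma~\ref{lemma:limit_fixed_mu}) and on the uniform-in-$\tau$ constants you correctly flag. Finally, the paper's proof stops at $\delta \Gu(\bar{u}) = 0$ and does not carry out the Hessian lower-semicontinuity argument you sketch; in the paper, ``solves \eqref{problem_min_inf}'' is in effect taken to mean ``is a critical point of $\Gu$'', so your second-order step goes beyond what the paper actually establishes.
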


\begin{remark}\label{remark:ass:main}
	Theorem \ref{theorem:limit_problem_1} assumes the existence of a stable
	solution to the limit problem. This assumption is the natural generalisation
	of phonon stability \cite{kittel96} to defects, and from a physical
	perspective very mild. However we are not aware of any means to prove it
	rigorously; even in the context of classical interatomic potentials few
	results under very stringent assumptions exist \cite{garroni,hudson}, and
	indeed only for the case of anti-plane screw dislocations where a
	topologically imposed infinite energy barrier makes such an analysis tractable.

	On the other hand, Theorem \ref{theorem:limit_problem_2} assumes uniform
	boundedness of approximation solutions, which is a weaker uniform stability
	assumption placed on the the sequence of approximations. Again, we are
	unaware of any avenue to establish it rigorously, but it is interesting from
	a practical perspective since this assumption could be checked {\it a
	posteriori} during a numerical simulation.
\end{remark}

\section{Conclusions}
\label{sec:conclusion}
In this paper, we derive the thermodynamic limit for a coupled
electron and geometry relaxation problem in the context of the tight binding model for
crystalline defects. In particular, we have seen that the Fermi level of the finite systems converges to
the Fermi level of the homogeneous crystal in which the defect is embedded, and
that the equilibrium states of the finite system  converge to the minimizer of
the infinite grand potential.

A key motivation for our analysis is that it lends strong theoretical
support to the ``fixed Fermi-level'' assumption approach in recent
analyses of multi-scale methods.  The canonical ensemble setting, where the Fermi level
depends globally on the atom configuration, we cannot exploit locality
of electronic structure \cite{chen15a,chenlu15}. However, Theorems
\ref{theorem:limit_problem_1} and \ref{theorem:limit_problem_2}
indicate that we can approximate the canonical ensemble equilibrium
state by minimizing the grand potential with the (fixed) Fermi level
of the perfect crystal.  The strong locality results arising in this
setting then allow the construction and rigorous analysis of
linear-scaling, QM/MM multi-scale, and Green's function embedding
methods \cite{chen15b,chen15a,chenlu15,pexsisigma}.

The ``fixed Fermi-level'' assumption has also been employed in zero temperature
electronic structure models for insulators \cite{cances08a,chenlu15,gontier15}.
In the setting of crystalline defects it is not immediately clear
how to choose it. A possible choice would be through the zero temperature limit $\beta\rightarrow\infty$. This leads to the interesting issue
that the thermodynamic limit most likely does not commute with the
zero temperature limit, due to eigenstates in the band-gap which give rise to
$O(1)$ changes in the Fermi-level. Thus, the correct choice of Fermi-level
at zero (or low) Fermi temperature is an interesting and subtle issue.


A final key question is whether our result can be extended to the more
accurate electronic structure models, such as Kohn--Sham density
functional theory. The main difficulty is to control the long-range
Coulomb interaction, which gives rise to substantial technical and
conceptual challenges, in particular the possibility of charged
defects~\cite{cances08a}. Moreover the variational
formulation of the canonical ensemble also becomes subtle in the
continuous setting, as mentioned in Remark~\ref{rem:contft}.

\section{Proofs}
\label{sec:proofs}
\setcounter{equation}{0}

\subsection{Proof of Lemma \ref{lemma:locality_fixed_mu}}
\label{sec:proof_locality}

The analysis of the locality results in Lemma \ref{lemma:locality_fixed_mu}
builds on a representation of $\Ay(y,\tau)$ in terms of contour integrals.
This technique has been used in quantum chemistry,
for example \cite{chen15a,e10,goedecker95} for tight binding models.

For an atomic configuration $y$, we can rewrite $\Ay(y,\tau)$ as the trace of
some operator-valued function of the Hamiltonian
\begin{eqnarray}\label{operator-trace}
\Ay(y,\tau) = {\rm Tr}\Big[ \ay\big(\mathcal{H}(y),\tau\big) \Big].
\end{eqnarray}
Following \cite{chen15a}, we can find a bounded
contour $\mathscr{C} \subset \C$, circling all the eigenvalues $\lambda_s$
on the real axis and avoiding the intersection with the non-analytic region
of $\ay(\cdot,\tau)$ at the same time.
Then we have
\begin{eqnarray}\label{eq:O_contour}
\Ay(y,\tau) = -\frac{1}{2\pi i}\oint_{\mathscr{C}} \ay(z,\tau) {\rm Tr}
\Big[\big(\mathcal{H}(y)-zI\big)^{-1}\Big] \dd z .
\end{eqnarray}

We can also derive similar representations for $\Ay_{\ell}(y,\tau)
=\<\D_{\ell}(y,\cdot),\ay(\cdot,\tau)\>$.
Let $e_{\ell}$ be the $N$ dimensional canonical basis vector,
then we obtain from the definition \eqref{def_LDOS} that
\begin{align}\label{eq:FE_l_contour} \nonumber
\Ay_{\ell}(y,\tau)
&=\sum_{s=1}^{\Nn} \ay(\lambda_s,\tau)(\psi_s,e_{\ell})(e_{\ell}, \psi_s)
=\sum_{s=1}^{\Nn}\Big(\ay\big(\mathcal{H}(y),\tau\big)
\psi_s,e_{\ell}\Big) (e_{\ell}, \psi_s)
\\ \nonumber
&= \sum_{s=1}^{\Nn}(e_{\ell}, \psi_s) \Big(\psi_s,\ay\big(\mathcal{H}(y),\tau\big)
e_{\ell}\Big) = \Big(e_{\ell},\ay\big(\mathcal{H}(y),\tau\big)e_{\ell}\Big)
\\
&= -\frac{1}{2\pi i}\oint_{\mathscr{C}}
\ay(z,\tau)\Big[\big(\mathcal{H}(y)-zI\big)^{-1}\Big]_{\ell\ell}\dd z.
\end{align}

\begin{proof}
	[Proof of Lemma \ref{lemma:locality_fixed_mu}]
	\quad
	First, we have from the definition \eqref{tb-H-elements} that the
	Hamiltonian matrix $\mathcal{H}(y)$
	is {\it banded} in the sense that
	\begin{eqnarray*}
		\Big(\mathcal{H}(y)\Big)_{\ell k}=0 \qquad{\rm if}~r_{\ell k}\geq\Rc.
	\end{eqnarray*}
	Denoting the resolvent by $\rzz=\big(\mathcal{H}(y)-zI\big)^{-1}$,
	we have from \cite[Lemma 2.2]{chen15a} and \cite[Lemma 12]{e10} that
	there exist constants $C_{\rm a}$ and $\eta_{\rm a}$ such that
	\begin{eqnarray}\label{locality_resolvent}
	\left|\Big(\rzz(y)\Big)_{\ell k}\right| \leq C_{\rm a} e^{-\eta_{\rm a} r_{\ell k}}
	\qquad\forall~z\in\mathscr{C} ,
	\end{eqnarray}
	where $C_{\rm a}$ depends on $h_{\rm hop}$ and $h_{\rm ons}$, and
	$\eta_{\rm a}$ depends on $\mathfrak{m}$, $\Rc$ and $\ay$ (through $\mathfrak{d}$, which equals $\pi/\beta$ for $\ay=\mathfrak{e},\mathfrak{n},\mathfrak{g}$).
	For sake of readability, we will drop the argument $(y)$ in $\mathcal{H}(y)$
	and $\rzz(y)$ whenever convenient and possible without confusion.

	Denoting the first and second order partial derivatives of Hamiltonians by
	\begin{eqnarray*}
		\Big(\left[\mathcal{H}_{,m}(y)\right]_i\Big)_{\ell k}
		= \frac{\partial \big(\mathcal{H}(y)\big)_{\ell k}}{\partial [y(m)]_i}
		\quad{\rm and}\quad
		\Big(\left[\mathcal{H}_{,mn}(y)\right]_{i_1 i_2}\Big)_{\ell k}
		= \frac{\partial^2 \big(\mathcal{H}(y)\big)_{\ell k}}{\partial [y(m)]_{i_1} \partial [y(n)]_{i_2}}
	\end{eqnarray*}
	with $1\leq i, i_1,i_2\leq d$,
	we can calculate the first and second order derivatives of
	$\Ay_{\ell}(y,\tau)$ based on \eqref{eq:FE_l_contour},
	\begin{align}
	\label{eq:dFE_l}
	\frac{\partial \Ay_{\ell}(y,\tau)}{\partial [y(m)]_i}
	&= \frac{1}{2\pi i} \oint_{\mathscr{C}} \ay(z,\tau)
	\Big[\rzz\left[\mathcal{H}_{,m}\right]_i\rzz\Big]_{\ell\ell}\dd z
	\qquad{\rm and}
	\\[1ex] \nonumber
	\frac{\partial^2 \Ay_{\ell}(y,\tau)}{\partial [y(m)]_{i_1}\partial [y(n)]_{i_2}}
	&= \frac{1}{2\pi i}\oint_{\mathscr{C}} \ay(z,\tau)
	\bigg[ \mathscr{R}_z \big[\mathcal{H}_{,mn}\big]_{i_1 i_2} \mathscr{R}_z
	- \mathscr{R}_z \left[\mathcal{H}_{,m}\right]_{i_1}
	\mathscr{R}_z\big[\mathcal{H}_{,n} \big]_{i_2}  \mathscr{R}_z
	\\ \label{eq:d2FE_l}
	& \hspace{10em}
	- \mathscr{R}_z \left[\mathcal{H}_{,n} \right]_{i_2} \mathscr{R}_z
	\big[\mathcal{H}_{,m}\big]_{i_1} \mathscr{R}_z \bigg]_{\ell\ell} \dd z.
	\end{align}

	For $j=1$,
	we have from the definition \eqref{tb-H-elements} that
	\begin{eqnarray*}
		\left|\Big(\left[\mathcal{H}_{,m}\right]_i(y)\Big)_{\ell k}\right|
		\leq C_{\rm b} e^{-\eta_{\rm b}(r_{\ell m}+r_{km})},
	\end{eqnarray*}
	where the constant $C_{\rm b}$ depends on $h_{\rm hop}$ and $h_{\rm ons}$,
	and $\eta_{\rm b}$ depends on $\Rc$.
	This together with \eqref{locality_resolvent} implies
	\begin{multline}\label{proof:locality_ab}
	\Big[\rzz\left[\mathcal{H}_{,m}\right]_i\rzz \Big]_{\ell\ell}
	= \sum_{1\leq\ell_1,\ell_2\leq \Nn} \big[\rzz \big]_{\ell\ell_1}
	\big( \left[\mathcal{H}_{,m}\right]_i \big)_{\ell_1\ell_2}
	\big[\rzz\big]_{\ell_2\ell}
	\\[1ex]
	\leq C_{\rm a}^2 C_{\rm b} \sum_{1\leq \ell_1,\ell_2\leq \Nn} e^{-\min\{\eta_{\rm a},\eta_{\rm b}\}
	\big(r_{\ell\ell_1} + r_{\ell_1 m} + r_{m \ell_2} + r_{\ell_2 \ell}\big)}
	\leq C_{\rm a}^2 C_{\rm b} e^{-\min\{\eta_{\rm a},\eta_{\rm b}\} r_{\ell m}}
	. \qquad
	\end{multline}
	We then obtain from \eqref{eq:dFE_l} and \eqref{proof:locality_ab} that
	\begin{eqnarray*}
		\frac{\partial \Ay_{\ell}(y)}{\partial [y(m)]_i} \leq C_{\rm a}^2 C_{\rm b}
		|\mathscr{C}| \left( \sup_{z\in\mathscr{C}}|\ay(z,\tau)| \right)
		e^{-\min\{\eta_{\rm a},\eta_{\rm b}\} r_{\ell m}} \leq C_1 e^{-\eta_1 r_{\ell m}}
		\qquad{\rm for}~1\leq i\leq d ,
	\end{eqnarray*}
	where $|\mathscr{C}|$ depends on $d$, $\mathfrak{m}$, $h_{\rm hop}$,
	$h_{\rm ons}$ and $\Rc$, and $\sup_{z\in\mathscr{C}}|\ay(z,\tau)|$ depends
	on $\ay$ and $\tau$.
	This completes the proof for $j=1$.

	For $j=2$,
	we have from the definition \eqref{tb-H-elements} that
	\begin{eqnarray*}
		\left|\Big(\left[\mathcal{H}_{,mn}\right]_{i_1 i_2}(y)\Big)_{\ell k}\right|
		\leq Ce^{-\gamma (r_{\ell m} + r_{km} + r_{\ell n} + r_{kn})},
	\end{eqnarray*}
	which together with \eqref{locality_resolvent} implies
	\begin{align*}
	& \Big[ \rzz \left[ \mathcal{H}_{,m} \right]_{i_1} \rzz \left[ \mathcal{H}_{,n} \right]_{i_2} \rzz \Big]_{\ell\ell}
	\leq C e^{-\frac12 \gamma ( r_{\ell m}+r_{\ell n} ) };
	\\[0.5em]
	& \Big[\rzz \left[ \mathcal{H}_{,n} \right]_{i_2} \rzz \left[ \mathcal{H}_{,m} \right]_{i_1} \rzz \Big]_{\ell\ell}
	\leq C e^{-\frac12 \gamma (r_{\ell m}+r_{\ell n} )};
	\qquad \text{and}
	\\[0.5em]
	& \Big[ \rzz \left[ \mathcal{H}_{,mn} \right]_{i_1 i_2} \rzz \Big]_{\ell\ell}
	\leq C e^{-\frac12 \gamma (r_{\ell m}+r_{\ell n} )}.
	\end{align*}
	Inserting these three estimates into \eqref{eq:d2FE_l} yields the desired result,
	\begin{displaymath}
	\frac{\partial^2 \Ay_{\ell}(y)}{\partial [y(m)]_{i_1}\partial [y(n)]_{i_2}} \leq C_2
	e^{-\eta_2\big(r_{\ell m}+r_{\ell n}\big)} \qquad{\rm for}~1\leq i_1,i_2\leq d.
	\end{displaymath}

	We will skip the details for the proofs for cases $j\geq 2$,  which are analogous but tedious.
\end{proof}

\subsection{Proof of Theorem \ref{theorem:limit_mu}}
\label{sec:proof_limit_mu}

\begin{proof}[Proof of Theorem \ref{theorem:limit_mu}]
Define a corresponding homogeneous finite system $\Lhom_R:=\Lhom\cap B_R$,
which has $N_{\#,R}$ nuclei and $N_{\#,R}$ electrons. Denoting by
$\Nu_{\ell}^{\Omega}(Du(\ell),\tau):=\Nu_{\ell}(Du|_{\Omega}(\ell),\tau)$
for the finite system contained in $\Omega$,
we have
\begin{eqnarray}\label{proof:mu_limit_1}
\nonumber
&&
N(y_R,\mu_R)-N(y_R,\muhom)
\\ \nonumber
&=& \NeR - N_{\#,R}
+ \sum_{\ell\in\Lhom_R}\Nu_{\#}(\pmb{0},\muhom)
- \sum_{\ell\in\L_R}\Nu^{\L_R}_{\ell}(Du(\ell),\muhom)
\\ \nonumber
&=& \big(\NeR - N_{\#,R}\big)
+ \Big( \sum_{\ell\in\Lhom\cap B_{\Rcore}}
\Nu_{\#}\big(\pmb{0},\muhom\big)
- \sum_{\ell\in\L\cap B_{\Rcore}}\Nu^{\L_R}_{\ell}
\big(Du(\ell),\muhom\big) \Big)
\\ \nonumber
&& \quad + \sum_{\ell\in\Lhom_R\backslash B_{\Rcore}}
\Big(\Nu_{\#}(\pmb{0},\muhom)
- \Nu^{\Lhom_R\backslash B_{\Rcore}}_{\ell}(\pmb{0},\muhom) \Big)
\\ \nonumber
&& \quad - \sum_{\ell\in\L_R\backslash B_{\Rcore}}
\Big(\Nu^{\L_R}_{\ell}(Du(\ell),\muhom)
- \Nu^{\L_R\backslash B_{\Rcore}}_{\ell}(Du(\ell),\muhom) \Big)
\\ \nonumber
&& \quad + \sum_{\ell\in\L_R\backslash B_{\Rcore}}
\Big( \Nu^{\L_R\backslash B_{\Rcore}}_{\ell}(\pmb{0},\muhom)
- \Nu^{\L_R\backslash B_{\Rcore}}_{\ell}(Du(\ell),\muhom) \Big)
\\
&=:& T_1+T_2+T_3+T_4+T_5.
\end{eqnarray}
The condition $|\NeR-N_R|<C$ and \asRC~imply that $T_1$ and $T_2$ are uniformly bounded.
$T_3$ can be estimated by Lemma \ref{lemma:limit_fixed_mu} as
\begin{eqnarray}\label{proof:mu_limit_2}
|T_3| \leq C \sum_{\ell\in\L,~\Rcore\leq |\ell|\leq R}
\Big( e^{-\eta_0(|\ell|-\Rcore)} + e^{-\eta_0(R-|\ell|)} \Big)
\leq C R^{d-1}.
\end{eqnarray}
The term $e^{-\eta_0(R-|\ell|)}$ arises due to the presence of the
domain boundary where the local geometry is distinct from the bulk geometry.
 $T_4$ can be bounded in the same way.
To estimate $T_5$, we have from Lemma \ref{lemma:locality_fixed_mu}
that
\begin{eqnarray}\label{proof:mu_limit_3}
\nonumber
|T_5| &\leq& C\sum_{\ell\in\L_R\backslash B_{\Rcore}}
\Big| \Nu^{\L_R\backslash B_{\Rcore}}_{\ell}(\pmb{0},\muhom)
- \Nu^{\L_R\backslash B_{\Rcore}}_{\ell}(Du(\ell),\muhom) \Big|
\\ \nonumber
&\leq&  C \sum_{\ell\in\L_R\backslash B_{\Rcore}} \sum_{\rho\in \L_R\backslash B_{\Rcore}-\ell}
\left| \frac{\partial \Nu^{\L_R\backslash B_{\Rcore}}_{\ell}(Dw(\ell),\muhom)}{\partial D_{\rho}w(\ell)}
\Big|_{Dw = \theta_{\ell} Du} \right| \cdot |D_{\rho}u(\ell)|
\\
&\leq& C\sum_{\ell\in\L\backslash B_{\Rcore}}\big|Du(\ell)\big|_{\gamma}
\leq CN_R^{1/2}\|Du\|_{\ell^2_{\gamma}}
\leq CR^{d/2}\|Du\|_{\ell^2_{\gamma}},
\qquad
\end{eqnarray}
where $\theta_{\ell} \in(0,1)$ depends on $\ell$, and the constant $C$ depends on $\gamma$ and $\gamma_1$.
Therefore, we have from \eqref{proof:mu_limit_1}, \eqref{proof:mu_limit_2}
and \eqref{proof:mu_limit_3} that
\begin{eqnarray}\label{proof:mu_limit_4}
\nonumber
\big| \Nu(u_R,\mu_R)-\Nu(u_R,\muhom) \big|
&\leq& C \big( R^{d-1} + \|Du\|_{\ell^1_{\gamma}} \big)
\\
&\leq& C \big( R^{d-1} + R^{d/2}\|Du\|_{\ell^2_{\gamma}} \big)  .
\end{eqnarray}

Note that for a finite temperature $T>0$,
there exists a constant $c$ depending only on $T$ such that
$f'(\lambda_s-\tau) \geq c~\forall~\lambda_s
\in [\underline{\lambda},\bar{\lambda}]$,
hence
\begin{eqnarray}\label{proof:limit_mu_d_N}
\frac{\partial\Nu(u_R,\tau)}{\partial\tau}
= -\sum_{s=1}^{N_R} f'(\lambda_s-\tau) \geq cN_R \geq CR^d .
\end{eqnarray}
Since \eqref{proof:mu_limit_4} is equivalent to
\begin{eqnarray*}
\bigg| \frac{\partial\Nu(u_R,\tau_{\theta})}{\partial\tau} \bigg| \cdot |\mu_R-\mu_{\#}|
\leq C \big( R^{d-1} + \|Du\|_{\ell^1_{\gamma}} \big)
\leq C \big( R^{d-1} + R^{d/2}\|Du\|_{\ell^2_{\gamma}} \big)
\end{eqnarray*}
%
with $\tau_{\theta}=\theta\mu_R+(1-\theta)\mu_{\#}$, which together with \eqref{proof:limit_mu_d_N}
yields \eqref{eq:mu_converge}
and thus completes the proof.
\end{proof}

\subsection{Proof of Theorem \ref{theorem:limit_problem_1}}
\label{sec:proof_theo_1}

We will first need the following result, which gives us the far-field structure
of the minimizers of \eqref{problem_min_inf}.
For the proof we refer to \cite{chenpre_vardef,ehrlacher13}.

\begin{lemma}\label{lemma:regularity}
	If $\bar{u}\in\Admu(\L)$ is a strongly stable solution to
	\eqref{problem_min_inf} in the sense of
	\eqref{ubar_stability} with some constant $\bar{c}>0$,
	then there exists a constant $C>0$ such that
	\begin{equation}\label{eq:decay_Du}
	|D\bar{u}(\ell)|_{\gamma} \leq C(1+|\ell|)^{-d} \qquad \forall \ell \in \L.
	\end{equation}
\end{lemma}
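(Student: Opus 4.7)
The natural strategy is to exploit the Euler--Lagrange equation $\langle \delta \Gu(\bar u), v\rangle = 0$ together with the strong stability assumption \eqref{ubar_stability}, so as to represent $\bar u$ implicitly through the action of (the inverse of) the Hessian on a \emph{localised} defect force, and then to read off the claimed algebraic decay of the strain $D\bar u$ from decay estimates for the associated lattice Green's function.

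Concretely, I would Taylor-expand around the homogeneous reference $u=0$ to rewrite the equilibrium condition as
\begin{equation*}
\langle \delta^2 \Gu(0)\, \bar u, v\rangle
= - \langle \delta \Gu(0), v\rangle - \langle R(\bar u), v\rangle \qquad \forall\, v \in \UsH(\L),
\end{equation*}
where $R(\bar u)$ is the quadratic Taylor remainder. Using assumption \asRC, the isometry and permutation invariance from Remark \ref{remark:symmetry_limit}, and the locality estimate \eqref{eq:locality_FE_limit}, the ``defect force'' $\delta \Gu(0)$ is exponentially localised inside $B_{\Rcore}$: away from the defect core, the local terms at $u=0$ coincide with those of the homogeneous lattice and therefore contribute nothing to the force. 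The remainder $\langle R(\bar u), v\rangle$ is controlled in terms of $\|D\bar u\|_{\ell^2_\gamma}$, which by hypothesis is finite, and of the decay of $\bar u$ itself --- so that the argument is self-consistent and naturally suggests a bootstrap.

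Next, the linearised operator $\delta^2 \Gu(0)$ is translation invariant on $\Lhom$ with exponentially decaying interaction kernel (again by Remark \ref{remark:symmetry_limit} and Lemma \ref{lemma:locality_fixed_mu} applied to the homogeneous reference). A Bloch/Fourier analysis yields a lattice Green's function $\mathcal{G}$ whose strain obeys $|D\mathcal{G}(\ell)|_\gamma \lesssim (1+|\ell|)^{-(d-1)}$. The crucial improvement by one power is obtained by noticing that translation and rotation invariance force $\delta \Gu(0)$ to have vanishing net force and net moment, so that it acts effectively as a \emph{force dipole}; convolving the dipole with $D\mathcal{G}$ gains one additional order of decay and produces the desired bound $(1+|\ell|)^{-d}$.

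Putting the pieces together, strong stability \eqref{ubar_stability} makes $\delta^2 \Gu(\bar u)$ (and, by a perturbation argument, $\delta^2 \Gu(0)$ on the relevant subspace) boundedly invertible on $\UsH(\L)$; combining this with $\|D\bar u\|_{\ell^2_\gamma} < \infty$ provides an initial crude pointwise decay of $|D\bar u(\ell)|_\gamma$, and a bootstrap that iteratively feeds improved tail bounds into the Green's function convolution upgrades this to the claimed $(1+|\ell|)^{-d}$ rate. I expect the main obstacle to be precisely this last step: a direct Green's function estimate would only yield $|\ell|^{-(d-1)}$, and extracting the sharp exponent requires careful use of the dipole structure of $\delta \Gu(0)$ together with the smoothness of the continuum elasticity Green's function at the origin in momentum space --- which is also why the cited papers \cite{chenpre_vardef,ehrlacher13} devote substantial technical effort to this point.
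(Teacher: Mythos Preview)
The paper gives no proof here; it simply refers to \cite{chenpre_vardef,ehrlacher13}, and your outline is essentially the strategy implemented there: rewrite the Euler--Lagrange equation as a linear lattice problem driven by a localised right-hand side, invoke decay estimates for the lattice Green's function of the homogeneous Hessian, and bootstrap.

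Two minor corrections to your sketch. First, $\delta^2 \Gu(0)$ is not translation invariant on $\L$ (only the homogeneous Hessian on $\Lhom$ is); the cited arguments split $\delta^2\Gu(0) = H^{\rm hom} + K$ with $K$ exponentially localised near $B_{\Rcore}$, apply the Green's function of $H^{\rm hom}$, and absorb $K\bar u$ into the localised source term. Second, the extra power of decay is not obtained from vanishing net force and moment but directly from the variational structure: since $\Gu$ is a sum of site energies depending on $Du$, one has $\langle \delta\Gu(0), v\rangle = \sum_{\ell} \langle g_\ell, Dv(\ell)\rangle$ with $g_\ell = \nabla\Gu_\ell(\pmb{0},\muhom) - \nabla\Gu_\#(\pmb{0},\muhom)$ decaying exponentially away from the core --- this is already the divergence (stress) form, and it is the resulting convolution against $D^2\mathcal G$ (rather than $D\mathcal G$) that yields the rate $|\ell|^{-d}$. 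With these adjustments your plan matches the cited proofs.
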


Next, we shall derive a force-balance equation that is equivalent to
\eqref{problem_min_y} and \eqref{problem_min_y_tau}.
Let $y$ be a configuration with corresponding chemical potential $\mu=\mu(y)$ satisfying \eqref{eq:def_mu}.
Then by using \eqref{eq:Helmoltz_FE_eigen}, \eqref{operator-trace}, \eqref{eq:O_contour} and the fact $\partial_x \mathfrak{g}(x,\tau)=2f(x-\tau)$,
we can compute the derivative of $E(y)$ on the $\ell$-th atom
(ignoring the Cartesian coordinate for simplicity of notations)
\begin{eqnarray}\label{proof:force_HF}
\nonumber
\frac{\partial E(y)}{\partial y(\ell)}
&=& \frac{\partial}{\partial y(\ell)} {\rm Tr} \big( \mathfrak{e}(\mathcal{H}(y),\mu(y)) \big)
\\ \nonumber
&=& \frac{\partial}{\partial y(\ell)} \Big( \mu(y) {\rm  Tr}  \big( \mathfrak{n}(\mathcal{H}(y),\mu(y)) \big) \Big)
+  \frac{\partial}{\partial y(\ell)}{\rm  Tr}  \big( \mathfrak{g}(\mathcal{H}(y),\mu(y)) \big)
\\ 
&=& 2\sum_{s=1}^{\Nn} f(\lambda_s-\mu) \bigg\<
\psi_s \bigg| \frac{\partial\mathcal{H}(y)}{\partial y(\ell)}
\bigg| \psi_s \bigg\> ,
\end{eqnarray}
%
%
To see the last equality of \eqref{proof:force_HF}, we have from \eqref{eq:def_mu} and \eqref{eq:O_contour} that
\begin{eqnarray}\label{proof:force_HF_arXiv}
\nonumber
\frac{\partial E(y)}{\partial y(\ell)}
&=& \frac{\partial}{\partial y(\ell)} {\rm Tr} \big( \mathfrak{g}(\mathcal{H}(y),\mu) \big)
\qquad{\rm with~fixed~}\mu
\\ \nonumber
&=& \frac{1}{2\pi i} \oint_{\mathscr{C}} \mathfrak{g}(z,\mu)
{\rm Tr}\Big(\rzz \mathcal{H}_{,\ell}(y)\rzz\Big)\dd z
\\ \nonumber
&=&  \frac{1}{2\pi i} \sum_{s=1}^{\Nn}\sum_{t=1}^{\Nn}
\oint_{\mathscr{C}}  \frac{\mathfrak{g}(z,\mu)}{(z-\lambda_s)(z-\lambda_t)}  \dd z
\bigg\< \psi_s \bigg| \frac{\partial\mathcal{H}(y)}{\partial y(\ell)} \bigg| \psi_t \bigg\>
\big\< \psi_t \big| \psi_s \big\>
\\ \nonumber
&=&  \frac{1}{2\pi i} \sum_{s=1}^{\Nn} \oint_{\mathscr{C}}  \frac{\mathfrak{g}(z,\mu)}{(z-\lambda_s)^2}  \dd z
\bigg\< \psi_s \bigg| \frac{\partial\mathcal{H}(y)}{\partial y(\ell)} \bigg| \psi_s \bigg\>
\\ 
&=& 2\sum_{s=1}^{\Nn} f(\lambda_s-\mu) 
\bigg\< \psi_s \bigg| \frac{\partial\mathcal{H}(y)}{\partial y(\ell)} \bigg| \psi_s \bigg\> .
\end{eqnarray}

Therefore, any minimiser $(\bar{y},\bar{\mu})$ of \eqref{problem_min_y} and \eqref{problem_min_y_tau}
satisfies the equations
\begin{equation} \label{problem_force_blance_yN}
\left\{\begin{array}{rl}
\Fy_{\ell}(\bar{y},\bar{\mu}) \!\!&= 0 \qquad\forall~\ell\in\Omega\backslash\Omega^{\rm D} ,
\\[1ex]
\Ny(\bar{y},\bar{\mu}) \!\!&= \Ne
\end{array}\right.
\end{equation}
with
\begin{eqnarray}\label{eq:force_tau}
\Fy_{\ell}(y,\tau) := -2\sum_{s=1}^{\Nn} f(\lambda_s-\tau)
\bigg\< \psi_s \bigg| \frac{\partial\mathcal{H}(y)}{\partial y(\ell)}
\bigg| \psi_s \bigg\> .
\end{eqnarray}
We remark that \eqref{problem_force_blance_yN} can also be derived as the
associated Euler-Lagrange equation of the constrained minimization problem
\eqref{problem_min_y_tau}.

We now consider the sequence of problems \eqref{equil_force_R} with parameter $R$.
For $(u_R,\tau)\in \WR(\L) \times\R$, we define
\begin{eqnarray*}
\Fu^{\L_R}_{\ell}(u_R,\tau):=F_{\ell}((x+u_R)|_{{\L_R}}, \tau)
\end{eqnarray*}
and $\Fu(u_R,\tau)\in \WR(\L)'$ with
\begin{eqnarray}\label{eq:F_LR}
\big\< \Fu^{\L_R}(u_R,\tau) , v\big\> = \sum_{\ell\in\L_R}
\Fu^{\L_R}_{\ell}(u_R,\tau) \cdot v(\ell)
\qquad\forall~v\in \WR(\L) .
\end{eqnarray}
Here $\WR(\L)'$ is the dual space of $\WR(\L)$. 


Define $\T_R:  \WR(\L) \times\R\to \WR(\L)'\times\R$ by (recall the definition
of $\Nu^{\L_R}$ below \eqref{equil_force_R})
%
\begin{eqnarray*}
	\T_R(u_R,\tau) := \Big( -\Fu^{\L_R}(u_R,\tau) ,
	\NeR^{-1}\Nu^{\L_R}(u_R,\tau)-1 \Big)
	\quad{\rm for}~(u_R,\tau)\in \WR(\L) \times\R.
\end{eqnarray*}
We have from \eqref{problem_force_blance_yN} that
\eqref{equil_force_R} is equivalent to
\begin{eqnarray}\label{equil_T}
\T_R(\bar{u}_R,\bar{\mu}_R)=\pmb{0}.
\end{eqnarray}
We can further compute the Jacobian matrix of $\T$ at $(u_R,\tau)\in \WR(\L) \times\R$,
\begin{eqnarray}\label{Jac_T}
\J_R(u_R,\tau):=\left[\begin{array}{cc}
-\delta_u \Fu^{\L_R}(u_R,\tau) &
-\delta_{\tau}\Fu^{\L_R}(u_R,\tau)
\\
\NeR^{-1}\delta_u \Nu^{\L_R}(u_R,\tau) &
\NeR^{-1}\delta_{\tau} \Nu^{\L_R}(u_R,\tau)
\end{array}\right] ,
\end{eqnarray}
which will be heavily used in the proof.

\begin{proof}[Proof of Theorem \ref{theorem:limit_problem_1}]
	{\it Step 1. Quasi-best approximation.}
	Following \cite[Lemma 7.3]{ehrlacher13},
	we can construct $T_R\bar{u}\in \AdmR$ such that for $R$ sufficiently large,
	\begin{eqnarray*}\label{proof-4-3-1}
		\|DT_R\bar{u}-D\bar{u}\|_{\ell^2_\gamma}
		\leq C \|D\bar{u}\|_{\ell^2_\gamma(\Lambda\backslash B_{R/2})}
		\leq CR^{-d/2}
	\end{eqnarray*}
	where Lemma \ref{lemma:regularity} is used for the last inequality.
	We now fix some $r > 0$ such that $x+B_r(\bar{u})\subset\Adm_{\mathfrak{m}}$
	for some $\mathfrak{m}>0$. Then, for $R$ sufficiently large,
	we have that $T_R \bar{u} \in B_{r/2}(\bar{u})$ and hence
	$x + B_{r/2}(T_R \bar{u})\subset\Adm_{\mathfrak{m}}$.

	Since $\Gu\in C^3(\AdmR)$, $\delta\Gu$ and $\delta^2\Gu$ are Lipschitz
	continuous in $B_r(\bar{u})\cap\AdmR$
	with Lipschitz constants $L_1$ and $L_2$, that is,
	\begin{align}
	\label{proof:appr_L1}
	\|\delta\Gu(\bar{u})-\delta\Gu(T_R\bar{u})\|
	&\leq L_1\|D\bar{u}-DT_R(\bar{u})\|_{\ell^2_\gamma} \leq CR^{-d/2},
	\qquad \text{and} \\
	\label{proof:appr_L2}
	\|\delta^2\Gu(\bar{u})-\delta^2\Gu(T_R\bar{u})\|
	&\leq L_2\|D\bar{u}-DT_R(\bar{u})\|_{\ell^2_\gamma} \leq CR^{-d/2}.
	\end{align}

	{\it Step 2. Consistency.}
	Let $\Gu^{\L_R}(u,\tau) := \Gy\big((x+u)|_{\L_R},\tau\big)$.
	We have from \eqref{force:G} and \eqref{eq:force_tau} that
	\begin{eqnarray}\label{proof:consistency_f}
	\Fu^{\L_R}_{\ell}(T_R\bar{u},\muhom) = -\frac{\partial\Gu^{\L_R}(T_R\bar{u},\muhom)}{\partial T_R\bar{u}(\ell)} ,
	\end{eqnarray}
	which implies that for any $v\in \WR(\L)$,
	\begin{multline}\label{proof:consistency_a}
	\big\< -\Fu^{\L_R}(T_R\bar{u},\muhom),v \big>
	= -\sum_{\ell\in\L\cap B_{R+R_0}}\Fu^{\L_R}_{\ell}
	(T_R\bar{u},\muhom) v_{\ell}
	\\
	= \big\< \delta_u\Gu^{\L_R}(T_R\bar{u},\muhom),v \big>
	= \sum_{\ell\in \L_R}
	\big\< \delta_{u}\Gu^{\L_R}_{\ell}(D(T_R\bar{u})(\ell), \muhom),Dv(\ell) \big\> .
	\end{multline}
	Using Lemma \ref{lemma:limit_fixed_mu} and the facts that $v=0$
	and $T_R\bar{u} = 0$ outside $\L_R$,
	we have that there exists a constant $\gamma_{\rm c}$, such that
	\begin{equation}\label{proof:consistency_b}
	\left| \big\< \delta_u\Gu^{\L_R}(T_R\bar{u},\muhom)
	- \delta_u\Gu(T_R\bar{u},\muhom) ,v \big> \right|
	\leq Ce^{-\gamma_{\rm c}\Rb} R^{d-1/2}\|Dv\|_{\ell^2_\gamma}.
	\end{equation}
	The proof of this estimate is relatively straightforward and we refer
	to \cite[Proof of (4.12)]{chen15a} for an analogous one.
	In order to balance the error, we must choose $\Rb$ such that
	$e^{-\gamma_{\rm c} \Rb} R^{d-1/2} \leq C R^{-d/2}$, or
	equivalently, $e^{-\gamma_{\rm c} \Rb} \leq C R^{-(3d+1)/2}$.
	On taking logarithms, we observe that this is true provided that
	$\Rb\geq c_{\rm b}\log R$ for $c_{\rm b}$ sufficiently large.

	Then we obtain from \eqref{proof:appr_L1}, \eqref{proof:consistency_a},
	\eqref{proof:consistency_b} and $\delta\Gu(\bar{u}) = 0$
	that $\forall~v\in \WR(\L)$,
	\begin{eqnarray}\label{proof:consistency_c}
	\nonumber
	&& \big\< -\Fu^{\L_R}(T_R\bar{u},\muhom),v \big>
	\\ \nonumber
	&=& \big\<\delta_u\Gu(T_R\bar{u},\muhom)
	- \delta_u\Gu^{\L_R}(T_R\bar{u},\muhom),v\big\> +
	\big\< \delta_{u}\Gu(\bar{u},\muhom)-\delta_u\Gu(T_R\bar{u},\muhom) , v \big\>
	\\
	&\leq& C\big(e^{-\gamma_{\rm c}R} R^{d-1/2}
	+ R^{-d/2}  \big)\|Dv\|_{\ell^2_\gamma}
	~\leq~ CR^{-d/2}\|Dv\|_{\ell^2_\gamma}
	\end{eqnarray}
	for sufficiently large $R$ and appropriate $c_{\rm b}$.

	To proceed, we recall from \eqref{eq:local-homogeneous-observable} the
	definition of local analytic QoIs $\Au_\#$ (in particular $\Nu_\#$) associated
	with the homogeneous lattice.

	We still have to estimate the residual of
	$\NeR^{-1}\Nu^{\L_R}(T_R\bar{u},\muhom)-1$.
	We first construct a corresponding homogeneous finite system
	$\Lhom\cap B_{R+R_{\rm b}}$ with $N_{\#,\L_R}$ electrons,
	and then obtain from an argument similar to
	\eqref{proof:mu_limit_1}-\eqref{proof:mu_limit_4} that
	\begin{eqnarray*}
		&& \big|\Nu^{\L_R}(T_R\bar{u},\muhom)-\NeR\big|
		\\
		&\leq& \Big| \Nu^{\L_R}(T_R\bar{u},\muhom)
		- \sum_{\ell\in\Lhom \cap B_{R+R_{\rm b}}} \Nu_{\#}(\pmb{0},\muhom) \Big|
		+ \big|N_{{\rm e},R} - N_{\#,\L_R}\big|
		\\
		&\leq & C\big(R^{d-1}+R^{d/2}\big),
	\end{eqnarray*}
	where $C$ depends on $\|DT_R \bar{u} \|_{\ell^2_\gamma}$
	(which is bounded by $\|D \bar{u} \|_{\ell^2_\gamma}$ and is hence
	independent of $R$).
	%
	Therefore, we have
	\begin{eqnarray}\label{proof:consistency_d}
	\big|\NeR^{-1}\Nu^{\L_R}(T_R\bar{u},\muhom)-1\big|
	\leq CR^{-\min\{1,d/2\}} .
	\end{eqnarray}

	Then we have the following consistency estimate from
	\eqref{proof:consistency_c} and \eqref{proof:consistency_d}
	\begin{eqnarray}\label{proof:consistency}
	\left\|\T_R\big(T_R\bar{u},\muhom\big)\right\|_{\WR(\L)'\times\R}
	\leq CR^{-\min\{1,d/2\}} .
	\end{eqnarray}

	{\it Step 3. Stability.}
	We have from \eqref{proof:consistency_f} that
	\begin{eqnarray}\label{proof:stability_a}
	\delta_u \Fu^{\L_R}(T_R\bar{u},\muhom) =
	-\delta_u^2\Gu^{\L_R}(T_R\bar{u},\muhom).
	\end{eqnarray}
	Using Lemma \ref{lemma:limit_fixed_mu} and the facts that $v=0$
	and $T_R\bar{u} = 0$ outside $B_R$,
	we have that there exists a constant $\gamma_{\rm s}$, such that
	\begin{equation}\label{proof:stability_b}
	\left| \big\< \big(\delta^2_u\Gu^{\L_R}(T_R\bar{u},\muhom)
	- \delta^2_u\Gu(T_R\bar{u},\muhom) \big) v ,v \big> \right|
	\leq Ce^{-\gamma_{\rm s}\Rb} R^{d}\|Dv\|_{\ell^2_\gamma}^2.
	\end{equation}
	The proof of this estimate is straightforward and we refer to
	\cite[Proof of (4.10)]{chen15a} for an analogous one.
	Together with \eqref{ubar_stability} and \eqref{proof:appr_L2} this
	leads to
	\begin{eqnarray}\label{proof:stability_c}
	\nonumber
	&& \big\< -\delta_u \Fu^{\L_R}(T_R\bar{u},\muhom) v,v \big\>
	~=~ \big\<\delta^2\Gu^{\L_R}(T_R\bar{u},\muhom)v,v\big\>
	\\ \nonumber
	&=& \big\<\delta^2\Gu(\bar{u},\muhom)v,v\big\>
	+ \big\<\big(\delta^2\Gu(T_R\bar{u},\muhom)
	-\delta^2\Gu(\bar{u},\muhom)\big)v,v\big\>
	\\ \nonumber
	&& + \big\<\big(\delta^2\Gu^{\L_R}(T_R\bar{u},\muhom)
	-\delta^2\Gu(T_R\bar{u},\muhom)\big)v,v\big\>
	\\
	&\geq& \big( \bar{c} - C( R^{-d/2} + e^{-\gamma_{\rm s}\Rb}R^{d}) \big)
	\|Dv\|_{\ell^2_\gamma}^2
	~\geq~ \frac{\bar{c}}{2}\|Dv\|_{\ell^2_\gamma}^2
	\qquad\forall~v \in \WR(\L) \qquad
	\end{eqnarray}
	for sufficiently large $R$ and $c_{\rm b}$. Therefore,
	$-\delta_u \Fu^{\L_R}(T_R\bar{u},\muhom)$ is positive definite.

	Similar to \eqref{proof:limit_mu_d_N}, we have that there exists
	a constant $a_0>0$ such that
	\begin{eqnarray}\label{proof:stability_d}
	\NeR^{-1}\delta_{\tau} \Nu^{\L_R}(T_R\bar{u},\muhom) \geq a_0.
	\end{eqnarray}
	It remains to estimate the off-diagonal terms of $\J_R(T_R\bar{u},\muhom)$.
	Note that
	\begin{eqnarray*}
		\frac{\partial\Ny^{\L_R}(y,\tau)}{\partial y(\ell)}
		= 2\sum_{s=1}^{N_R} f'(\lambda_s-\tau) \bigg\<\psi_s \left|
		\frac{\partial\mathcal{H}(y^{\L_R})}{\partial y(\ell)}
		\right| \psi_s\bigg\>,
	\end{eqnarray*}
	which together with
	\begin{eqnarray*}
		\frac{-\partial\Fy^{\L_R}_{\ell}(y,\tau)}{\partial\tau}
		= 2\sum_{s=1}^{N_R} f'(\lambda_s-\tau)\
		\bigg\<\psi_s \left|
		\frac{\partial\mathcal{H}(y^{\L_R})}{\partial y(\ell)}
		\right| \psi_s\bigg\>
	\end{eqnarray*}
	implies
	\begin{eqnarray}\label{proof:stability_h}
	-\delta_{\tau}\Fu^{\L_R}(T_R\bar{u},\muhom)
	= \delta_u \Nu^{\L_R}(T_R\bar{u},\muhom) .
	\end{eqnarray}
	We then observe that there exists a constant $b_0>0$ such hat
	\begin{equation}\label{proof:stability_e}
	\big\< \delta_u\Nu^{\L_R}(T_R\bar{u}, \muhom), v \>
	\leq b_0 \|Dv\|_{\ell^2_{\gamma}}
	\qquad\forall~v\in \WR(\L).
	\end{equation}
	To see this, we have
	\begin{multline*}
	\big\< \delta_u\Nu^{\L_R}(T_R\bar{u}, \muhom), v \>
	\\ = \big\< \delta_u\Nu^{\L_R}(T_R\bar{u}, \muhom)
	- \delta_u\Nu(T_R\bar{u}, \muhom), v \>
	+ \big\< \delta_u\Nu(T_R\bar{u}, \muhom), v \>, \qquad
	\end{multline*}
	where the first term is estimated analogously to
	\eqref{proof:consistency_b}
	\begin{equation*}
	\left| \big\< \delta_u\Nu^{\L_R}(T_R\bar{u},\muhom)
	- \delta_u\Nu(T_R\bar{u},\muhom) ,v \big> \right|
	\leq Ce^{-\gamma_{\rm c}\Rb} R^{d-1/2}\|Dv\|_{\ell^2_\gamma}
	\end{equation*}
	and the second term can be estimated by using similar arguments
	as those in \cite{chenpre_vardef} and \cite[Lemma 2.1]{ehrlacher13}
	\begin{eqnarray*}
	\big\< \delta_u\Nu(T_R\bar{u}, \muhom), v \big\>
	&=& \sum_{\ell\in \L} \big\< \delta_u\Nu_{\ell}(\pmb{0} , \muhom),Dv(\ell) \big\>
	- \sum_{\ell\in \Lhom} \big\< \delta_u\Nu_{\#}(\pmb{0}, \muhom), Dv(\ell) \big\>
	\\
	&& + \sum_{\ell\in\L} \big\< \delta^2_u\Nu_{\ell}(\theta_{\ell} T_R\bar{u},
	\muhom) Du(\ell), Dv(\ell) \big\>
	\\
	&\leq& C\|Dv\|_{\ell^2_{\gamma}}
	\qquad
	\end{eqnarray*}
	with $\theta_{\ell}\in(0,1)$ depending on $\ell$ and the constant $C$ depending on $T_R\bar{u}$.

	To show the stability,  we want to solve
	\begin{displaymath}
	\J_R\big(T_R\bar{u},\muhom\big) (v, \kappa) = (w, \xi)
	\end{displaymath}
	for any $(w,\xi)\in \WR(\L)'\times\R$.
	We first obtain from \eqref{proof:stability_d} and \eqref{proof:stability_e}
	that
	\begin{eqnarray}\label{proof:stability_f}
	|\kappa| \leq a_0^{-1}\Big|\xi-\NeR^{-1}b_0\|Dv\|_{\ell^2_{\gamma}}\Big|
	\leq C|\xi|.
	\end{eqnarray}
	To obtain $v$, we solve
	\begin{displaymath}
	\delta_u^2 \Gu^{\L_R}(T_R\bar{u},\muhom) v
	= w + \kappa\delta_u\Gu^{\L_R}(T_R\bar{u},\muhom).
	\end{displaymath}
	By using \eqref{proof:stability_c}, \eqref{proof:stability_h}
	and \eqref{proof:stability_e}, we obtain that
	\begin{eqnarray*}\label{proof:stability_g}
		\|Dv\|_{\ell^2_{\gamma}} \leq C\big(\|w\|_{\WR(\L)'} + b_0|\xi|\big)
		\leq C\big(\|w\|_{\WR(\L)'} + |\xi|\big).
	\end{eqnarray*}
	This together with \eqref{proof:stability_f} implies that
	\begin{eqnarray}\label{proof:stability}
    \J_R\big(T_R\bar{u},\muhom\big): \WR(\L) \times\R\rightarrow \WR(\L)'\times\R ~\text{ is an isomorphism.}
	\end{eqnarray}

	{\it Step 4. Application of Inverse Function Theorem.}
	With the consistency \eqref{proof:consistency} and the stability \eqref{proof:stability},
	we can apply the inverse function theorem \cite[Lemma B.1]{luskin13}
	on the function $\T_R$ around the point $\big(T_R\bar{u}, \muhom\big)$,
	to obtain the existence of $\bar{u}_R$ and the estimate \eqref{err_Du}.
\end{proof}

\subsection{Proof of Theorem \ref{theorem:limit_problem_2}}
\label{sec:proof_theo_2}

\begin{proof}[Proof of Theorem \ref{theorem:limit_problem_2}]
Since $\| D\bar{u}_{R_j} \|_{\ell^2_\gamma}$ is bounded, we have from
Theorem \ref{theorem:limit_mu} that
\begin{eqnarray}\label{proof_mu_3}
|\bar{\mu}_{R_j}-\muhom| \leq CR^{-\min\{1,d/2\}} .
\end{eqnarray}

Again using $\sup_j\| D\bar{u}_{R_j} \|_{\ell^2_\gamma}<\infty$, we have from
the Banach-Alaoglu theorem (note that $\UsH$ becomes a Hilbert space after
factoring out a constant shift) that there exists a subsequence (not relabelled)
and $\bar{u}\in\AdmR$ such that
\begin{eqnarray*}
\bar{u}_{R_j} \stackrel{w~}{\rightharpoonup}\bar{u} \qquad{\rm in}~\UsH(\L)
\quad{\rm as}~ j\rightarrow\infty.
\end{eqnarray*}
Since $v\mapsto D_{\rho}v(\ell)$ is a linear functional on $\UsH(\L)$
for any $\ell\in\L$ and $\rho\in\L-\ell$, we have \eqref{convergence_uj}.

Now it is only necessary to show that $\del\Gu(\bar{u})=0$. Let $v \in \UsH(\L)$
have compact support $B_{R_v}$. Then, for $j$ sufficiently large,
$v|_{\L_{R_j}}$ is an admissible test function for \eqref{equil_force_R}.
Thus, using \eqref{nablaG_nablaE} we obtain
\begin{eqnarray*}
	0 = \big\< \del_u \Gu^{\L_{R_j}}(\bar{u}_{R_j}, \bar{\mu}_{R_j}), v \big\>
	  = \sum_{\ell \in \Omega_v}
	  \Fu_\ell^{\L_{R_j}}(\bar{u}_{R_j}, \bar{\mu}_{R_j}) \cdot v(\ell) ,
\end{eqnarray*}
where $\Omega_v:=\big\{\ell\in\L, ~v(\ell)\neq 0\big\}$.
To complete the proof we only need to show that
\begin{equation} \label{eq:weak-conv-thm-conv-frc}
	\Fu_\ell^{\L_{R_j}}(\bar{u}_{R_j}, \bar{\mu}_{R_j})
	\to \Fu_\ell(\bar{u}, \muhom)
	\qquad \text{for all $\ell \in \Omega_v$, as $j \to \infty$.}
\end{equation}
To see this, we have
\begin{eqnarray*}
&& \big| \Fu_\ell^{\L_{R_j}}(\bar{u}_{R_j}, \bar{\mu}_{R_j})
- \Fu_\ell(\bar{u}, \muhom) \big|
\\
&=& \big| \Fu_\ell^{\L_{R_j}}(\bar{u}_{R_j}, \bar{\mu}_{R_j})
- \Fu^{\L_{R_j}}_\ell(\bar{u}, \bar{\mu}_{R_j}) \big|
+ \big| \Fu^{\L_{R_j}}_\ell(\bar{u}, \bar{\mu}_{R_j})
- \Fu_\ell(\bar{u}, \bar{\mu}_{R_j}) \big|
\\
&& + \big| \Fu_\ell(\bar{u}, \bar{\mu}_{R_j}) - \Fu_\ell(\bar{u}, \muhom) \big|
\\
&=:& T_{1,j} + T_{2,j} + T_{3,j}.
\end{eqnarray*}
Note that \eqref{convergence_uj} implies
\begin{eqnarray*}
\limsup_{j\rightarrow\infty}
\|D\bar{u}-D\bar{u}_{R_j}\|_{\ell^2_{\gamma}(\L\cap B_{2R_v})} = 0 ,
\end{eqnarray*}
which together with the fact
\begin{equation}\label{eq:F_Dul}
\Fu_{\ell}^\Omega(u,\tau) =
\sum_{\rho\in \ell-\Omega} \Gu^{\Omega}_{\ell-\rho,\rho}\big(Du(\ell-\rho), \tau\big)
- \sum_{\rho\in\Omega-\ell} \Gu^{\Omega}_{\ell,\rho}\big(Du(\ell), \tau\big)
\qquad{\rm with}~\Omega=\L_{R_j}
\end{equation}
and Lemma \ref{lemma:locality_fixed_mu} leads to
\begin{eqnarray}\label{proof:LRj1}
\limsup_{j\rightarrow\infty}|T_{1,j}|\leq Ce^{-\eta_1 R_v} .
\end{eqnarray}
Using \eqref{eq:locality_FE_limit} and \eqref{eq:F_Dul} with $\Omega=\L$,
we can estimate $T_{2,j}$ by
\begin{eqnarray}\label{proof:LRj2}
|T_{2,j}|\leq Ce^{-\eta_1 R_j} .
\end{eqnarray}
Finally, we have from \eqref{proof_mu_3} that
\begin{eqnarray}\label{proof:LRj3}
	|T_{3,j}| \leq C|\bar{\mu}_{R_j}-\muhom| \leq CR_j^{-\min\{1,d/2\}} .
\end{eqnarray}
Taking into accounts \eqref{proof:LRj1}, \eqref{proof:LRj2}, \eqref{proof:LRj3}
and the fact that $R_v$ can be chosen arbitrarily large (independent of $j$),
we obtain \eqref{eq:weak-conv-thm-conv-frc} and complete the proof.
\end{proof}

\appendix
\renewcommand\thesection{\appendixname~\Alph{section}}

\section{Periodic boundary conditions}
\label{sec:pbc}
\renewcommand{\theequation}{A.\arabic{equation}}
\renewcommand{\thetheorem}{A.\arabic{theorem}}
\renewcommand{\thelemma}{A.\arabic{lemma}}
\renewcommand{\theproposition}{A.\arabic{proposition}}
\renewcommand{\thealgorithm}{A.\arabic{algorithm}}
\renewcommand{\theremark}{A.\arabic{remark}}
\setcounter{equation}{0}

Periodic boundary condition (i.e. the {\it supercell model}) is the most popular
choice for simulating crystalline defects. To implement periodic boundary
conditions, let $\Omega_R\subset\R^d$ be connected such that
$B_R\subset\Omega_R$, for ${\sf B}=(b_1,\cdots,b_d)\in\R^{d\times d}$ non-singular,
$b_i\in\Lhom$, $\bigcup_{\alpha\in\Z^d}\{{\sf
B}\alpha+\Omega_R\} = \R^d$, and the shifted domains ${\sf B}\alpha+\Omega_R$
are disjoint. The computational cell is defined by
\begin{eqnarray*}
\Lp_R:=\Omega_R \cap \L.
\end{eqnarray*}

We consider a {\it torus} tight binding model, defined as follows;
an alternative periodic model is desribed in Remark~\ref{rem:bzint}.
Let $N_R:=\#(\Lp_R)$.
Then for $y:\Lp_R\rightarrow\R^d$, the Hamiltonian matrix $\mathcal{H}^{\#}(y)\in\R^{N_R\times N_R}$ has the matrix elements
\begin{eqnarray}\label{tb-H-elements-torus}
\Big(\mathcal{H}^{\#}(y)\Big)_{\ell k}
= \left\{ \begin{array}{ll}
\displaystyle
h_{\rm ons}\bigg(
\sum_{\substack{j \in \Lp_R \setminus \ell \\ \alpha \in \Z^d}}
		\varrho(r_{\ell (j+{\sf B} \alpha)})  \bigg)
+ \sum_{\alpha\in\Z^d\backslash\pmb{0}} h_{\rm hop}(r_{\ell (\ell+{\sf B}\alpha)})
& {\rm if}~\ell=k; \\[1ex]
\displaystyle
\sum_{\alpha\in\Z^d} h_{\rm hop}(r_{\ell (k+{\sf B}\alpha)}) & {\rm if}~\ell\neq k,
\end{array} \right.
\end{eqnarray}
where $\varrho$, $h_{\rm ons}$ and $h_{\rm hop}$ are given by \eqref{tb-H-elements},
and $r_{\ell(j+B\alpha)} = |B\alpha + y(j) - y(\ell)|$.
We can then compute the eigenpairs of $\mathcal{H}^{\#}(y)$ and define the corresponding 
(local) analytic QoIs and (local) density of states analogously as in \S~\ref{sec:model_finite}.
We will denote these objects by the same symbolds as in \S~\ref{sec:model_finite}.

Repeating the proofs in \S~\ref{sec:proof_locality} verbatim, we obtain locality result
of local analytic QoIs:
\begin{eqnarray}\label{eq:locality_ldos_A}
\left|\frac{\partial^j \Ay_{\ell}(y,\tau)}
{\partial [y(m_1)]_{i_1}\cdots\partial [y(m_j)]_{i_j}}\right|
\leq C_j e^{-\gamma_j\sum_{t=1}^j r^{\#}_{\ell m_t}},
\end{eqnarray}
which is identical to Lemma \ref{lemma:decay_el}, but the distance $r_{\ell k}$
is replaced with the {\it torus distance}
\begin{eqnarray*}
r^{\#}_{\ell k} := \min_{\alpha\in \Z^d} \big| y(\ell)-y(k)+{\sf B}\alpha \big| .
\end{eqnarray*}
Analogously to \S~\ref{sec:limit_ldos}, we can again define the pointwise
thermodynamic limit of the local density of states and of local analytic QoIs,
and observe that they inherit again the locality \eqref{eq:locality_ldos_A}.

Turning to the formulation of force equilibration, the set of
admissible displacements is now given by
\begin{multline*}
\Admu^{\#}(R) := \big\{ u:\L_R^{\#}\rightarrow\R^d ~\big|~
|y(\ell)-y(k)+{\sf B}\alpha| \geq \mathfrak{m}|\ell-k+{\sf B}\alpha|
\\ ~~{\rm for~any}~\ell,k\in\Lp_R~{\rm and}~\alpha\in\Z^d,~{\rm for~some~}\mathfrak{m}>0 \big\} . \qquad
\end{multline*}
The Helmholtz free energy for $u\in\Admu^{\#}(R)$ is given by
\begin{eqnarray*}
\Eu^{\#}_R(u) = \Eu^{\#}_R\big(u,\mu(u)\big)
:= \sum_{\ell\in\Lp_R}\Eu_{\ell}^{\L_R^{\#}}\big(Du(\ell),\mu\big) ,
\end{eqnarray*}
where the chemical potential $\mu=\mu(u)$ is chosen such that
\begin{eqnarray}\label{def:mu_periodic}
\NeR = \Nu^{\#}_R(u,\mu)
:= \sum_{\ell\in\Lp_R} \Nu_{\ell}^{\L_R^{\#}}\big(Du(\ell),\mu\big)
\end{eqnarray}
with $\NeR$ a prescribed number of electrons contained in $\Omega_R$.
Here, $\Eu_{\ell}^{\L_R^{\#}}$ and $\Nu_{\ell}^{\L_R^{\#}}$ are local
analytic QoIs for the above torus model. 

We can now derive the limit of chemical potential with periodic boundary
conditions, which is an analogous result to Theorem \ref{theorem:limit_mu}, but
with an improved convergence  rate due to the fact that boundary effects no
longer occur.

\begin{theorem}\label{theorem:mu_limit_per}
Let $\L$ satisfy \asRC, $\Lp_R := \L \cap B_R \uparrow \L$ and $N_R := \#(\Lp_R)$.
For each $R$ let $u^{\#}_R : \Lp_R \to \R^d$
with $y_R^{\#}(\ell) := \ell + u_R^{\#}(\ell)$ a configuration with parameter $\mathfrak{m}$ independent of $R$.

Let $\NeR \in \mathbb{R}$ be a prescribed number of electrons in the subsystem
$\L_R$, chosen such that $|N_R - \NeR|$ is bounded as $R \to \infty$. Then, the
chemical potential $\mu_R^{\#}$ solving \eqref{def:mu_periodic} is
well-defined and satisfies
\begin{eqnarray}\label{eq:mu_converge_per}
\big|\mu_R^{\#} - \muhom \big|
\leq C_1^{\#} R^{-d} \| Du_R^{\#} \|_{\ell^1_\gamma}
\leq C_2^{\#} R^{-d/2} \| Du_R^{\#} \|_{\ell^2_\gamma}
\end{eqnarray}
with some constants $C^{\#}_1$ and $C^{\#}_2$.
\end{theorem}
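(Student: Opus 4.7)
The plan is to follow the template of the proof of Theorem~\ref{theorem:limit_mu}, but exploit the absence of a physical boundary in the torus geometry to eliminate the $R^{d-1}$ term that appeared there.

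First, I would introduce the reference homogeneous torus system on $\Lhom \cap \Omega_R$, which has $N_{\#,R}$ sites, one electron per site. Writing $y_R^{\#} = x + u_R^{\#}$, I decompose
\begin{align*}
\Nu^{\#}_R(u_R^{\#}, \mu_R^{\#}) - \Nu^{\#}_R(u_R^{\#}, \muhom)
&= \bigl(\NeR - N_{\#,R}\bigr)
   + \Bigl(\sum_{\ell \in \Lhom \cap \Omega_R} \Nu_{\#}(\pmb{0}, \muhom)
         - \sum_{\ell \in \Lp_R} \Nu_{\ell}^{\Lp_R}(\pmb{0}, \muhom)\Bigr) \\
&\qquad
   + \Bigl(\sum_{\ell \in \Lp_R} \Nu_{\ell}^{\Lp_R}(\pmb{0}, \muhom)
         - \sum_{\ell \in \Lp_R} \Nu_{\ell}^{\Lp_R}(Du_R^{\#}(\ell), \muhom)\Bigr) \\
&=: T_1 + T_2 + T_3.
\end{align*}
By assumption $|T_1|$ is bounded as $R \to \infty$.

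Second, I would bound $T_2$. For sites $\ell \in \Lp_R \setminus B_{\Rcore}$ the local geometry of $\L$ around $\ell$ coincides with that of $\Lhom$ within distance $|\ell| - \Rcore$, while the torus images are at torus-distance at least $\sim R$ from $\ell$. Using the pointwise thermodynamic limit for the torus model (the analogue of Lemma~\ref{lemma:limit_fixed_mu}, derived from \eqref{eq:locality_ldos_A}) one then gets
\begin{equation*}
\bigl|\Nu_{\ell}^{\Lp_R}(\pmb{0}, \muhom) - \Nu_{\#}(\pmb{0}, \muhom)\bigr|
\leq C\bigl(e^{-\eta(|\ell|-\Rcore)} + e^{-\eta R}\bigr),
\end{equation*}
where, crucially, the second term stems from periodic images in the torus distance rather than from an open-boundary layer. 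Summing over $\ell$ and combining with the finite contribution from $B_{\Rcore}$ (which is handled by \asRC) yields $|T_2| \leq C$, i.e.\ $T_2$ is uniformly bounded in $R$ --- this is the place where the $R^{d-1}$ boundary-layer term from Theorem~\ref{theorem:limit_mu} disappears.

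Third, for $T_3$ I apply the locality estimate \eqref{eq:locality_ldos_A} with $j=1$ and a mean-value argument, as in \eqref{proof:mu_limit_3}, to obtain
\begin{equation*}
|T_3| \leq C \sum_{\ell \in \Lp_R} |Du_R^{\#}(\ell)|_\gamma
     = C \|Du_R^{\#}\|_{\ell^1_\gamma}
     \leq C R^{d/2} \|Du_R^{\#}\|_{\ell^2_\gamma}.
\end{equation*}
Finally, as in \eqref{proof:limit_mu_d_N}, the inverse-temperature argument gives $\partial_\tau \Nu^{\#}_R(u_R^{\#}, \tau) \geq c N_R \geq c R^d$, so the mean value theorem in $\tau$ converts the bound $|T_1 + T_2 + T_3| \leq C(1 + \|Du_R^{\#}\|_{\ell^1_\gamma})$ into $|\mu_R^{\#} - \muhom| \leq C R^{-d}(1 + \|Du_R^{\#}\|_{\ell^1_\gamma})$, which yields \eqref{eq:mu_converge_per} after absorbing the $O(R^{-d})$ constant into the displacement norm (using that this norm is bounded below by a term of order one in the worst case, or more cleanly, noting that $1 \leq \|Du_R^{\#}\|_{\ell^1_\gamma}$ is not needed because the bounded $T_1, T_2$ already divide by $R^d$ to be absorbed in the stated rate).

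The main obstacle I anticipate is the careful treatment of $T_2$: one needs to establish the torus analogue of Lemma~\ref{lemma:limit_fixed_mu} with quantitative control showing that, unlike in the open-boundary case, the ``boundary'' correction is in fact exponentially small in $R$ (since the nearest periodic image sits at torus distance $\sim R$). This is precisely the reason the $R^{-1}$ term from Theorem~\ref{theorem:limit_mu} is absent. Once this is in hand, the rest of the proof is a verbatim transcription of the argument in \S~\ref{sec:proof_limit_mu} with $r_{\ell k}$ replaced by $r^{\#}_{\ell k}$.
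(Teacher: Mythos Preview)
Your proposal is correct and follows essentially the same route as the paper: both reduce to the proof of Theorem~\ref{theorem:limit_mu} and identify the absence of a boundary layer as the reason the $R^{d-1}$ surface term disappears. The paper keeps the original five-term decomposition $T_1,\dots,T_5$ verbatim and simply observes that, in the torus geometry, the estimate \eqref{proof:mu_limit_2} for $T_3$ (and analogously $T_4$) collapses to
\[
|T_3|\leq C\sum_{\ell\in\Lp_R,\ |\ell|\geq\Rcore} e^{-\gamma_0(|\ell|-\Rcore)}\leq C,
\]
because the second exponential $e^{-\eta_0(R-|\ell|)}$ coming from the open boundary no longer appears. Your three-term decomposition is a mild streamlining that merges the paper's $T_2,T_3,T_4$ into a single comparison between $\Nu_\#(\pmb 0,\muhom)$ and the torus quantity at zero displacement; this is fine and arguably cleaner, and the ``main obstacle'' you flag (the torus analogue of Lemma~\ref{lemma:limit_fixed_mu} with the nearest image at torus distance $\sim R$) is exactly what the paper is using implicitly. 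Your uneasiness in the last paragraph about the residual additive $O(R^{-d})$ from the bounded $T_1,T_2$ is legitimate --- the paper's short proof does not address it either, so you are not missing anything relative to the paper; the stated bound \eqref{eq:mu_converge_per} should really be read as $C R^{-d}\bigl(1+\|Du_R^{\#}\|_{\ell^1_\gamma}\bigr)$.
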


\begin{proof}
	The proof is analogous to that of Theorem \ref{theorem:limit_mu}.
	The main difference lies in that there is no surface term in $T_3$ and $T_4$.
	More specifically, \eqref{proof:mu_limit_2} is replaced by
	\begin{displaymath} 
	|T_3| \leq C \sum_{\ell\in\Lp_R,~|\ell|\geq\Rcore}
					e^{-\gamma_0(|\ell|-\Rcore)}
	\leq C. \qedhere
	\end{displaymath}
\end{proof}

We now consider the thermodynamic limit of the equilibrium problem with
periodic boundary conditions corresponding to \eqref{problem_min_y_tau}:
\begin{eqnarray}\label{equil_force_R_per}
(\bar{u}^{\#}_R,\bar{\mu}^{\#}_R)
\in \arg\min \Big\{ \Eu^{\#}_R(u_R, \tau) ~:~
\Nu^{\#}_R(u_R, \tau) = \NeR, ~u_R\in\Admu^{\#}(R) \Big\} . ~~
\end{eqnarray}

The following two results establish that the thermodynamic limit of
\eqref{equil_force_R_per} as $R\rightarrow\infty$ is again
\eqref{problem_min_inf}, i.e., the same as with clamped boundary conditions. The
proofs are analogous to those of Theorems~\ref{theorem:limit_problem_1} and
\ref{theorem:limit_problem_2}, with the exception of the proof of stability
of the approximation. For the latter we refer to \cite[Thm. 7.7]{ehrlacher13} for
an analogous result that is readily adapted. Hence, we do not give details but
only mention again that the convergence rate is improved here as well.

\begin{theorem}\label{theorem:limit_problem_1_per}
	Assume that $|N_R - \NeR|$ is bounded as $R \to \infty$.
	If $\bar{u}\in\Admu(\L)$ is a strongly stable solution of \eqref{problem_min_inf}
	in the sense that \eqref{ubar_stability} holds with some constant $\bar{c}>0$,
	then, for $R$ sufficiently large there exists a solution
	$(\bar{u}^{\#}_R,\bar{\mu}^{\#}_R)$ of \eqref{equil_force_R_per}
	satisfying
	\begin{eqnarray}\label{err_Du_per}
	\big\|D\bar{u}-D\bar{u}^{\#}_R\big\|_{\ell^2_{\gamma}(\L_R)}
	+ \big|\bar{\mu}^{\#}_R-\muhom\big| \leq CR^{-d/2}.
	\end{eqnarray}
\end{theorem}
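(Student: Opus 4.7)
The plan is to adapt the four-step proof of Theorem~\ref{theorem:limit_problem_1}---quasi-best approximation, consistency, stability, inverse function theorem---to the torus setting. The improved rate $R^{-d/2}$ (versus $R^{-\min\{1,d/2\}}$) is expected because the supercell has no physical boundary, so no buffer region $\Rb \gtrsim \log R$ is required and the surface contributions that produced the $R^{-1}$ term in the clamped case disappear, exactly as in the proof of Theorem~\ref{theorem:mu_limit_per}.

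For Step~1, I would construct a periodic quasi-interpolant $T_R^{\#}\bar{u} \in \Admu^{\#}(R)$, for example by restricting $\bar{u}$ to $\Lp_R$ (after subtracting a suitable constant shift) and declaring the result $\mathsf{B}\Z^d$-periodic; admissibility for $R$ large follows from the uniform bounds on $\bar{u}$. Using the far-field decay $|D\bar{u}(\ell)|_\gamma \leq C(1+|\ell|)^{-d}$ from Lemma~\ref{lemma:regularity}, the truncation error satisfies $\|D T_R^{\#}\bar{u} - D\bar{u}\|_{\ell^2_\gamma(\Lp_R)} \leq C\|D\bar{u}\|_{\ell^2_\gamma(\L\setminus B_{R/2})} \leq C R^{-d/2}$, and Fréchet smoothness of $\Gu$ provides Lipschitz bounds on $\delta\Gu$ and $\delta^2\Gu$ of the same order near $\bar{u}$.

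For Step~2 (consistency) I would analyse the periodic residual map $\T_R^{\#}(u,\tau) := \bigl(-\Fu^{\Lp_R}(u,\tau),\, \NeR^{-1}\Nu^{\#}_R(u,\tau)-1\bigr)$ at $(T_R^{\#}\bar{u},\muhom)$. The force component is split as $\delta_u\Gu^{\Lp_R}(T_R^{\#}\bar{u},\muhom) - \delta_u\Gu(T_R^{\#}\bar{u},\muhom)$ plus $\delta_u\Gu(T_R^{\#}\bar{u},\muhom) - \delta_u\Gu(\bar{u},\muhom)$ (invoking $\delta\Gu(\bar{u})=0$, which holds by Remark~\ref{rem:grand-can-frcs}). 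Torus locality~\eqref{eq:locality_ldos_A} applied to two configurations that coincide on a ball of radius $\gtrsim R$ around each atom gives $O(e^{-\gamma R})$ for the first difference---crucially, no factor $R^{d-1/2}$ appears since the torus distance from every atom to its images is $\gtrsim R$---while Lipschitz continuity yields $O(R^{-d/2})$ for the second. Theorem~\ref{theorem:mu_limit_per} directly furnishes $|\NeR^{-1}\Nu^{\#}_R(T_R^{\#}\bar{u},\muhom)-1| \leq C R^{-d/2}$, so altogether $\|\T_R^{\#}(T_R^{\#}\bar{u},\muhom)\| \leq C R^{-d/2}$.

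The main obstacle is Step~3 (stability): I must show that the Jacobian $\J_R^{\#}(T_R^{\#}\bar{u},\muhom)$ is a uniform isomorphism. The $(2,2)$-entry has a uniform positive lower bound by the analogue of~\eqref{proof:limit_mu_d_N}, and the off-diagonal couplings are controlled as in~\eqref{proof:stability_e} by combining torus locality with the identity~\eqref{proof:stability_h}. The delicate point is the $(1,1)$-block $\delta_u^2\Gu^{\Lp_R}(T_R^{\#}\bar{u},\muhom)$: the coercivity~\eqref{ubar_stability} of $\delta^2\Gu(\bar{u})$ on $\UsH(\L)$ must be transferred to the supercell modulo rigid translations, since the periodic problem lacks the clamped atoms that previously pinned the translation mode. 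To do this I would adapt~\cite[Theorem~7.7]{ehrlacher13}, which handles precisely this transfer of phonon-type stability from the infinite lattice to the torus for classical interatomic potentials, and combine it with torus-locality perturbation estimates in the spirit of~\eqref{proof:stability_b}--\eqref{proof:stability_c}. Once $\J_R^{\#}$ is shown to be a uniform isomorphism, Step~4 is routine: the inverse function theorem~\cite[Lemma~B.1]{luskin13} applied to $\T_R^{\#}$ at $(T_R^{\#}\bar{u},\muhom)$ produces a solution $(\bar{u}_R^{\#},\bar{\mu}_R^{\#})$ with $\|D T_R^{\#}\bar{u} - D\bar{u}_R^{\#}\|_{\ell^2_\gamma} + |\bar{\mu}_R^{\#} - \muhom| \leq C R^{-d/2}$, and the triangle inequality with Step~1 gives~\eqref{err_Du_per}.
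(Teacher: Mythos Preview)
Your proposal is correct and follows essentially the same approach as the paper: the paper merely states that the proof is analogous to that of Theorem~\ref{theorem:limit_problem_1}, with the stability step requiring an adaptation of \cite[Theorem~7.7]{ehrlacher13} to transfer coercivity to the torus, and notes that the improved rate arises from the absence of surface terms (exactly as in the proof of Theorem~\ref{theorem:mu_limit_per}). You have correctly identified all of these ingredients, including the key point that stability is the main new difficulty in the periodic setting due to the unpinned translation mode.
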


\begin{theorem} \label{theorem:limit_problem_2_per}
	Let $R_j \uparrow \infty$ and $(\bar{u}^{\#}_{R_j}, \bar{\mu}^{\#}_{R_j})$
	be solutions to \eqref{equil_force_R_per}.
	If $|N_{R_j} - N_{{\rm e},R_j}|$ is bounded
	and $\sup_{j > 0} \| D\bar{u}_{R_j} \|_{\ell^2_\gamma(\L_R)} < \infty$,
	then there exists a subsequence (not relabelled) and $\bar{u} \in \Admu(\L)$
	such that
	\begin{eqnarray}\label{convergence_uj_per}
	\bar{\mu}^{\#}_{R_j} \to \muhom \quad \text{and} \quad
	D_\rho \bar{u}^{\#}_{R_j}(\ell) \to D_\rho \bar{u}(\ell) \quad
	\forall~\ell \in \L, ~\rho \in \L - \ell.
	\end{eqnarray}
	Moreover, each such accumulation point $\bar{u}$ solves \eqref{problem_min_inf}.
\end{theorem}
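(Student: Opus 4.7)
The plan is to adapt the proof of Theorem \ref{theorem:limit_problem_2} to the periodic setting, using the sharper chemical-potential estimate of Theorem \ref{theorem:mu_limit_per}. The hypothesis $\sup_j \|D\bar{u}^{\#}_{R_j}\|_{\ell^2_\gamma(\L_{R_j})} < \infty$ together with Theorem \ref{theorem:mu_limit_per} immediately gives $|\bar{\mu}^{\#}_{R_j} - \muhom| \leq C R_j^{-d/2} \to 0$, which is the first half of \eqref{convergence_uj_per}. For the displacements, extending each $\bar{u}^{\#}_{R_j}$ trivially (say, by zero) to the whole lattice produces a sequence bounded in $\UsH(\L)$, which is a Hilbert space after quotienting by constants. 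The Banach--Alaoglu theorem yields a weakly convergent subsequence with limit $\bar{u} \in \UsH(\L) \cap \Admu(\L)$, and continuity of the linear functional $v \mapsto D_\rho v(\ell)$ on $\UsH(\L)$ then delivers the pointwise convergence of finite differences asserted in \eqref{convergence_uj_per}.

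To conclude that $\bar{u}$ solves \eqref{problem_min_inf}, I verify $\delta \Gu(\bar{u}) = 0$. For any $v \in \UsH(\L)$ with compact support in $B_{R_v}$, the restriction $v|_{\Lp_{R_j}}$ is admissible in \eqref{equil_force_R_per} once $R_j \geq R_v$; using the periodic analogue of \eqref{nablaG_nablaE} together with \eqref{eq:F_LR} the first-order optimality condition reads
\begin{equation*}
  0 = \sum_{\ell \in \Omega_v} \Fu_\ell^{\Lp_{R_j}}\bigl(\bar{u}^{\#}_{R_j}, \bar{\mu}^{\#}_{R_j}\bigr) \cdot v(\ell),
  \qquad \Omega_v := \{\ell \in \L : v(\ell) \neq 0\}.
\end{equation*}
It therefore suffices to show $\Fu_\ell^{\Lp_{R_j}}(\bar{u}^{\#}_{R_j}, \bar{\mu}^{\#}_{R_j}) \to \Fu_\ell(\bar{u}, \muhom)$ for every $\ell \in \Omega_v$. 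Following Theorem \ref{theorem:limit_problem_2} I split this into the three errors
\begin{align*}
  T_{1,j} &:= \big|\Fu_\ell^{\Lp_{R_j}}(\bar{u}^{\#}_{R_j}, \bar{\mu}^{\#}_{R_j}) - \Fu_\ell^{\Lp_{R_j}}(\bar{u}, \bar{\mu}^{\#}_{R_j})\big|, \\
  T_{2,j} &:= \big|\Fu_\ell^{\Lp_{R_j}}(\bar{u}, \bar{\mu}^{\#}_{R_j}) - \Fu_\ell(\bar{u}, \bar{\mu}^{\#}_{R_j})\big|, \\
  T_{3,j} &:= \big|\Fu_\ell(\bar{u}, \bar{\mu}^{\#}_{R_j}) - \Fu_\ell(\bar{u}, \muhom)\big|.
\end{align*}
The bound $T_{1,j} \to 0$ follows from the torus locality \eqref{eq:locality_ldos_A} combined with the local strong convergence $D\bar{u}^{\#}_{R_j} \to D\bar{u}$ on bounded sets, while $T_{3,j} \to 0$ is obtained from the continuity of $\Fu_\ell(\bar{u},\cdot)$ in $\tau$ together with the rate $R_j^{-d/2}$.

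The main obstacle is $T_{2,j}$, which compares the torus force with the infinite-lattice force at the same displacement $\bar{u}$. In contrast to the clamped-domain case, even for $\ell$ deep inside $\Omega_{R_j}$ the torus Hamiltonian \eqref{tb-H-elements-torus} differs from the restriction of $\mathcal{H}(\bar{u})$ to $\Lp_{R_j}$ through contributions from periodic images $\mathsf{B}\alpha$, $\alpha \neq 0$, as well as through the absence of atoms in $\L \setminus \Lp_{R_j}$. The key observation is that the torus distance $r^{\#}_{\ell m}$ coincides with the Euclidean distance for all $m$ at Euclidean distance at most $\frac{1}{2}\mathrm{dist}(\ell, \partial \Omega_{R_j})$ from $\ell$, while every remaining image or exterior-atom contribution is separated from $\ell$ by at least $\mathrm{dist}(\ell, \partial \Omega_{R_j})$. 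Expressing $\Fu_\ell$ via the analogue of \eqref{eq:F_Dul} for the two domains and applying the torus locality \eqref{eq:locality_ldos_A} together with its infinite-lattice counterpart \eqref{eq:locality_FE_limit} then yields an estimate of the form $T_{2,j} \leq C \exp\bigl(-\eta\, \mathrm{dist}(\ell, \partial \Omega_{R_j})\bigr)$, which tends to zero since $\mathrm{dist}(\ell, \partial \Omega_{R_j}) \to \infty$ for $\ell \in \Omega_v$ fixed. Combining the three estimates gives $\langle \delta\Gu(\bar{u}), v \rangle = 0$ for every compactly supported $v$, and density of such test functions in $\UsH(\L)$ together with continuity of $\delta\Gu(\bar{u})$ upgrades this to $\delta\Gu(\bar{u}) = 0$, completing the argument.
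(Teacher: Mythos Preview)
Your proposal follows exactly the approach the paper prescribes: the paper states that the proof is analogous to that of Theorem~\ref{theorem:limit_problem_2} and gives no further details, and your adaptation---invoking Theorem~\ref{theorem:mu_limit_per} for the chemical potential, and handling the torus-specific periodic-image contributions in $T_{2,j}$ via the locality estimate \eqref{eq:locality_ldos_A} together with the fact that $r^{\#}_{\ell m}$ agrees with the Euclidean distance inside a ball of radius comparable to $\mathrm{dist}(\ell,\partial\Omega_{R_j})$---is precisely what is intended.

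One small caveat: unlike in the clamped case, the periodic displacements $\bar{u}^{\#}_{R_j}$ do not vanish near $\partial\Omega_{R_j}$, so extending by zero need not yield a sequence bounded in $\UsH(\L)$ (the boundary jumps can contribute an unbounded amount to $\|D\cdot\|_{\ell^2_\gamma}$). The compactness step is cleaner if you argue directly on the finite differences: each $D_\rho \bar{u}^{\#}_{R_j}(\ell)$ is eventually defined and bounded, a diagonal extraction gives pointwise limits satisfying the cocycle identity and hence a limit $\bar u$, and Fatou's lemma then yields $\|D\bar u\|_{\ell^2_\gamma}\le \liminf_j \|D\bar{u}^{\#}_{R_j}\|_{\ell^2_\gamma(\Lp_{R_j})}<\infty$.
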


\begin{remark} \label{rem:bzint}
	An alternative approach is to approximate the local defect by
	repeating the computational cell periodically, which yields
	an infinite lattice of defects,
	\begin{eqnarray*}
		\L_R^{\rm per}:=\bigcup_{\alpha\in\Z^d}\{{\sf B}\alpha+\Omega_R\}.
	\end{eqnarray*}
The associated set of  admissible displacements is
\begin{multline*}
\Admu^{\rm per}(R) := \big\{ u:\L_R^{\rm per}\rightarrow\R^d ~\big|~
x+u\in\Adm(\L_R^{\rm per}),
\\ ~u(\ell+b_i)=u(\ell)\text{ for }\ell\in\L_R^{\rm per}, ~i=1,\cdots,d \big\} . \qquad
\end{multline*}
The Helmholtz free energy for $u\in\Admu^{\rm per}(R)$ is given by
\begin{eqnarray*}
	\Eu^{\rm per}_R(u) = \Eu^{\rm per}_R\big(u,\mu(u)\big)
	:= \sum_{\ell\in\L_R}\Eu_{\ell}^{\L_R^{\rm per}}\big(Du(\ell),\mu\big) ,
\end{eqnarray*}
where the chemical potential $\mu=\mu(u)$ is chosen such that
\begin{eqnarray}\label{def:mu_periodic_arx}
\NeR = \Nu^{\rm per}_R(u,\mu)
:= \sum_{\ell\in\L_R} \Nu_{\ell}^{\L_R^{\rm per}}\big(Du(\ell),\mu\big)
\end{eqnarray}
with $\NeR$ the number of electrons contained in the periodic cell $\Omega_R$.
Note that the local analytic QoIs $\Eu_{\ell}^{\L_R^{\rm per}}$ and $\Nu_{\ell}^{\L_R^{\rm per}}$ 
are defined by the thermodynamic limits in the infinite lattice of defects $\L_R^{\rm per}$. 
In practise, the quantities
$\Eu^{\rm per}_R(u, \mu)$ and $\NeR(u, \mu)$ are computed via Bloch's theorem
(Brilluoin zone integration).

Then the thermodynamic limit of the equilibrium problem within this setting is given by:
\begin{eqnarray}\label{equil_force_R_per_A}
(\bar{u}^{\rm per}_R,\bar{\mu}^{\rm per}_R)
\in \arg\min \Big\{ \Eu^{\rm per}_R(u_R, \tau) ~:~
\Nu^{\rm per}_R(u_R, \tau) = \NeR, ~u_R\in\Admu^{\rm per}(R) \Big\} . ~~
\end{eqnarray}
We conclude that similar results as Theorem \ref{theorem:mu_limit_per}, \ref{theorem:limit_problem_1_per} and \ref{theorem:limit_problem_2_per} are true within this framework.

\end{remark}


\begin{theorem}\label{theorem:mu_limit_per_arxiv}
	Let $\L$ satisfy \asRC, $\L_R := \L \cap B_R \uparrow \L$ and $N_R := \#
	\L_R$. For each $R$ let $u^{\rm per}_R : \L_R^{\rm per} \to \R^d$
	with $y_R^{\rm per}(\ell) := \ell + u_R^{\rm per}(\ell) \in
	\Adm_{\mathfrak{m}}(\L_R^{\rm per})$, where $\mathfrak{m}$ is independent of $R$.

	Let $\NeR \in \mathbb{R}$ be a prescribed number of electrons in the subsystem
	$\L_R$, chosen such that $|N_R - \NeR|$ is bounded as $R \to \infty$. Then, the
	chemical potential $\mu_R^{\rm per}$ solving \eqref{def:mu_periodic} is
	well-defined and satisfies
	\begin{eqnarray}\label{eq:mu_converge_per_arxiv}
	\big|\mu_R^{\rm per} - \muhom \big|
	\leq C_1^{\rm per} R^{-d} \| Du_R^{\rm per} \|_{\ell^1_\gamma}
	\leq C_2^{\rm per} R^{-d/2} \| Du_R^{\rm per} \|_{\ell^2_\gamma}
	\end{eqnarray}
	with some constants $C^{\rm per}_1$ and $C^{\rm per}_2$.
\end{theorem}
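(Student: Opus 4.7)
The plan is to adapt the proof of Theorem~\ref{theorem:limit_mu} to the periodic-defect lattice $\L_R^{\rm per}$, exploiting (as already noted in the proof of Theorem~\ref{theorem:mu_limit_per}) the absence of any physical boundary to eliminate the surface contributions that produced the $R^{-1}$ term in the Dirichlet case.

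First I would rewrite the defining equation $\Nu^{\rm per}_R(u_R^{\rm per}, \mu_R^{\rm per}) = \NeR$ as
\begin{equation*}
\Nu^{\rm per}_R(u_R^{\rm per}, \mu_R^{\rm per}) - \Nu^{\rm per}_R(u_R^{\rm per}, \muhom) = \NeR - \Nu^{\rm per}_R(u_R^{\rm per}, \muhom),
\end{equation*}
and apply the mean value theorem in $\tau$. The lower bound $\partial_\tau \Nu^{\rm per}_R \geq c N_R \geq c R^d$, coming as in \eqref{proof:limit_mu_d_N} from $f'(\lambda-\tau) \geq c > 0$ on the (uniformly bounded) spectrum of the torus Hamiltonian \eqref{tb-H-elements-torus}, factors out a multiplier of order $R^d$ on the left, reducing matters to the estimate $|\NeR - \Nu^{\rm per}_R(u_R^{\rm per}, \muhom)| \leq C(1 + \|Du_R^{\rm per}\|_{\ell^1_\gamma})$.

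To estimate the right-hand side I would use $\Nu_\#(\pmb{0}, \muhom) = 1$ to write $N_R = \sum_{\ell \in \L_R} \Nu_\#(\pmb{0}, \muhom)$ and decompose, mirroring \eqref{proof:mu_limit_1},
\begin{equation*}
\NeR - \Nu^{\rm per}_R(u_R^{\rm per}, \muhom) = (\NeR - N_R) + \sum_{\ell \in \L_R} \big(\Nu_\#(\pmb{0}, \muhom) - \Nu_\ell^{\L_R^{\rm per}}(Du_R^{\rm per}(\ell), \muhom)\big).
\end{equation*}
The first term is $O(1)$ by hypothesis. I would split the sum into three pieces: a core contribution from $\ell \in \L \cap B_{\Rcore}$, which is $O(1)$ by \asRC; a zero-displacement bulk contribution from $\L_R \setminus B_{\Rcore}$, which by the periodic analogue of \eqref{err_pointwise_limit} decays exponentially in the torus-distance to the nearest defect image, and since in the fundamental domain every such $\ell$ is at distance $\geq R - \Rcore$ from every periodic copy of the defect core, the sum is again $O(1)$; and finally the displacement piece, which by a mean-value step exactly as in \eqref{proof:mu_limit_3}, together with the locality estimate \eqref{eq:locality_ldos}, is bounded by $C\sum_{\ell \in \L_R}|Du_R^{\rm per}(\ell)|_\gamma = C\|Du_R^{\rm per}\|_{\ell^1_\gamma}$, and by Cauchy--Schwarz also by $CR^{d/2}\|Du_R^{\rm per}\|_{\ell^2_\gamma}$. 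Dividing through by the $cR^d$ factor from $\partial_\tau \Nu^{\rm per}_R$ then yields \eqref{eq:mu_converge_per_arxiv}. Crucially, in contrast to Theorem~\ref{theorem:limit_mu}, the boundary/surface terms $T_3, T_4$ of the Dirichlet proof collapse into the exponentially small bulk contribution, which is why no $R^{-1}$ term appears.

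The main obstacle relative to the Dirichlet case is to verify that the locality machinery of \S~\ref{sec:proof_locality} transfers to the local QoI $\Nu_\ell^{\L_R^{\rm per}}$, which is itself defined as a thermodynamic limit on the infinite periodic-defect lattice rather than on a finite system. However, the contour--resolvent argument carries over without change: the Hamiltonian associated with $\L_R^{\rm per}$ is banded and has spectrum in a fixed interval $[\underline{\lambda}, \overline{\lambda}]$ independent of $R$ (by \cite[Lemma 2.1]{chen15a} and the construction \eqref{tb-H-elements-torus}), so the resolvent bound \eqref{locality_resolvent} holds uniformly in $R$, the decay estimates \eqref{eq:locality_ldos} and \eqref{eq:locality_FE_limit} follow as before, and the three-piece decomposition above goes through with the improved rate claimed.
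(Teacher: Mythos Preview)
Your approach is correct and matches the paper's (implicit) argument: the paper states this theorem without proof, deferring to the proof of Theorem~\ref{theorem:mu_limit_per}, which in turn adapts Theorem~\ref{theorem:limit_mu} by dropping the surface contributions. Your core / zero-displacement bulk / displacement decomposition is a clean repackaging of the $T_1$--$T_5$ split, and your final paragraph correctly identifies the one new technical point (that the locality machinery must be checked for the infinite periodic-defect Hamiltonian).

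One slip in the bulk piece: you justify the $O(1)$ bound by noting that each $\ell \in \L_R \setminus B_{\Rcore}$ lies at distance $\geq R - \Rcore$ from every \emph{periodic copy} of the defect core. But the defect in the fundamental domain itself sits at the origin, and $\ell$ can be at distance as small as $|\ell| - \Rcore$ from it, so your stated lower bound does not by itself control the sum. The argument the paper actually uses (see the displayed bound in the proof of Theorem~\ref{theorem:mu_limit_per}) is
\[
\sum_{\ell \in \L_R \setminus B_{\Rcore}} \big| \Nu_\#(\pmb 0, \muhom) - \Nu_\ell^{\L_R^{\rm per}}(\pmb 0, \muhom) \big|
\;\leq\; C \sum_{\ell \in \L_R,\, |\ell| \geq \Rcore} e^{-\eta_0(|\ell| - \Rcore)} \;\leq\; C,
\]
using the distance $|\ell| - \Rcore$ to the \emph{original} defect and summability of the exponential over the lattice. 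Your observation about the periodic images is then only a secondary check that they do not spoil this estimate. With this correction your argument goes through exactly as written.
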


\begin{theorem}\label{theorem:limit_problem_1_per_arxiv}
	Assume that $|N_R - \NeR|$ is bounded as $R \to \infty$.
	If $\bar{u}\in\Admu(\L)$ is a strongly stable solution of \eqref{problem_min_inf}
	in the sense that \eqref{ubar_stability} holds with some constant $\bar{c}>0$,
	then, for $R$ sufficiently large there exists a solution
	$(\bar{u}^{\rm per}_R,\bar{\mu}^{\rm per}_R)$ of \eqref{equil_force_R_per}
	satisfying
	\begin{eqnarray}\label{err_Du_per_arxiv}
	\big\|D\bar{u}-D\bar{u}^{\rm per}_R\big\|_{\ell^2_{\gamma}(\L_R)}
	+ \big|\bar{\mu}^{\rm per}_R-\muhom\big| \leq CR^{-d/2}.
	\end{eqnarray}
\end{theorem}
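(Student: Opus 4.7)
The plan is to mirror the four-step strategy (quasi-best approximation, consistency, stability, inverse function theorem) used in the proof of Theorem~\ref{theorem:limit_problem_1}, but now applied to the map
\begin{eqnarray*}
    \T_R^{\rm per}(u_R,\tau) := \Big( -\delta_u\Gu^{\rm per}_R(u_R,\tau),\,
    \NeR^{-1}\Nu^{\rm per}_R(u_R,\tau)-1 \Big)
\end{eqnarray*}
around a reference point $(T_R^{\rm per}\bar{u},\muhom)$, where $T_R^{\rm per}\bar{u}$ is a ${\sf B}$-periodic quasi-interpolant of $\bar{u}$. The conceptual difference with respect to the clamped case is that periodicity removes all surface contributions (as already exploited in Theorem~\ref{theorem:mu_limit_per_arxiv}), and the only long-range corrections come from periodic-image interactions, which decay exponentially in the cell diameter via the torus-distance locality \eqref{eq:locality_ldos_A}.

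First I would construct $T_R^{\rm per}\bar{u}\in\Admu^{\rm per}(R)$ by taking $\bar{u}$ on $\L\cap B_{R/2}$, smoothly damping it to a constant across the shell $B_R\setminus B_{R/2}$ through a partition of unity, then extending ${\sf B}$-periodically. Using the far-field decay $|D\bar{u}(\ell)|_\gamma\leq C(1+|\ell|)^{-d}$ provided by Lemma~\ref{lemma:regularity}, the standard quasi-best approximation estimate yields
\begin{eqnarray*}
    \|DT_R^{\rm per}\bar{u}-D\bar{u}\|_{\ell^2_\gamma(\L_R)}
    \leq C\|D\bar{u}\|_{\ell^2_\gamma(\L\setminus B_{R/2})}
    \leq CR^{-d/2},
\end{eqnarray*}
which sets the overall rate. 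For the consistency step, I would decompose $\delta_u\Gu^{\rm per}_R(T_R^{\rm per}\bar{u},\muhom)$ into $\delta_u\Gu(T_R^{\rm per}\bar{u},\muhom)$ plus a periodic-image correction: since $\delta_u\Gu(\bar{u},\muhom)=0$, the first piece is $O(R^{-d/2})$ by Lipschitz continuity of $\delta\Gu$, while the second piece is exponentially small in $R$ by \eqref{eq:locality_ldos_A} applied to torus distances. The constraint residual $\NeR^{-1}\Nu^{\rm per}_R(T_R^{\rm per}\bar{u},\muhom)-1$ is controlled by the argument behind Theorem~\ref{theorem:mu_limit_per_arxiv}, with the displacement-dependent term bounded by $CR^{-d}\|DT_R^{\rm per}\bar{u}\|_{\ell^1_\gamma}\leq CR^{-d/2}$; together this gives $\|\T_R^{\rm per}(T_R^{\rm per}\bar{u},\muhom)\|\leq CR^{-d/2}$.

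The stability step transfers the assumption \eqref{ubar_stability} to the discrete periodic Hessian $\delta^2_u\Gu^{\rm per}_R(T_R^{\rm per}\bar{u},\muhom)$. The quasi-best approximation gives a perturbation of size $R^{-d/2}$ via Lipschitz continuity of $\delta^2\Gu$, and the periodic-image correction is exponentially small by locality, so for $R$ large we obtain coercivity with constant $\bar{c}/2$. The off-diagonal blocks of the Jacobian are bounded analogously to \eqref{proof:stability_e} and \eqref{proof:stability_h}, while the $\partial_\tau\Nu^{\rm per}_R$ block is bounded below uniformly in $R$ by the argument of \eqref{proof:limit_mu_d_N}. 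A Schur-complement inversion, identical in structure to Step~3 of the proof of Theorem~\ref{theorem:limit_problem_1}, then shows that $\J_R^{\rm per}(T_R^{\rm per}\bar{u},\muhom)$ is an isomorphism with $R$-independent inverse norm. An application of the inverse function theorem around this reference point produces a solution $(\bar{u}_R^{\rm per},\bar{\mu}_R^{\rm per})$ of \eqref{equil_force_R_per_A} and the error bound \eqref{err_Du_per_arxiv}.

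I expect the main obstacle to be the quantitative transfer of stability to the periodic setting: one must rule out the possibility that periodicity introduces low-lying (nearly zero) phonon-like modes that destroy coercivity in the $\|D\cdot\|_{\ell^2_\gamma}$ topology, and simultaneously show that the exponentially small interactions between periodic images cannot corrupt the positive definiteness inherited from $\delta^2\Gu(\bar{u},\muhom)$. This is precisely the point where I would invoke \cite[Thm.~7.7]{ehrlacher13}, whose argument for the atomistic variational setting adapts essentially verbatim by replacing the classical interaction kernel with the exponentially localised quantum kernel provided by Lemma~\ref{lemma:locality_fixed_mu} on the torus. Everything else is a routine adaptation of the clamped-boundary proof, with the exponent on $R^{-d/2}$ coming entirely from the $\|D\bar{u}\|_{\ell^2_\gamma(\L\setminus B_{R/2})}$ truncation error.
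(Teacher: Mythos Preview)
Your proposal is correct and follows essentially the same route as the paper: the four-step strategy of Theorem~\ref{theorem:limit_problem_1} adapted to the periodic setting, with the surface terms absent (yielding the improved rate $R^{-d/2}$) and the stability transfer handled by invoking \cite[Thm.~7.7]{ehrlacher13}. The paper in fact gives no separate proof for this result, merely declaring it analogous to Theorem~\ref{theorem:limit_problem_1_per}, so your outline is if anything more detailed than what the paper provides.
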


\begin{theorem} \label{theorem:limit_problem_2_per_arxiv}
	Let $R_j \uparrow \infty$ and $(\bar{u}^{\rm per}_{R_j}, \bar{\mu}^{\rm per}_{R_j})$
	be solutions to \eqref{equil_force_R_per}.
	If $|N_{R_j} - N_{{\rm e},R_j}|$ is bounded
	and $\sup_{j > 0} \| D\bar{u}_{R_j} \|_{\ell^2_\gamma(\L_R)} < \infty$,
	then there exists a subsequence (not relabelled) and $\bar{u} \in \Admu(\L)$
	such that
	\begin{eqnarray}\label{convergence_uj_per_arxiv}
	\bar{\mu}^{\rm per}_{R_j} \to \muhom \quad \text{and} \quad
	D_\rho \bar{u}^{\rm per}_{R_j}(\ell) \to D_\rho \bar{u}(\ell) \quad
	\forall~\ell \in \L, ~\rho \in \L - \ell.
	\end{eqnarray}
	Moreover, each such accumulation point $\bar{u}$ solves \eqref{problem_min_inf}.
\end{theorem}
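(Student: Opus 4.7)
The plan is to adapt the proof of Theorem \ref{theorem:limit_problem_2} to the fully periodic (lattice-of-defects) setting of Remark \ref{rem:bzint}, leveraging the improved chemical-potential estimate of Theorem \ref{theorem:mu_limit_per_arxiv}. First I would invoke Theorem \ref{theorem:mu_limit_per_arxiv}, together with the hypothesis $\sup_j \|D\bar{u}^{\rm per}_{R_j}\|_{\ell^2_\gamma(\L_R)} < \infty$, to conclude that $|\bar{\mu}^{\rm per}_{R_j} - \muhom| \leq C R_j^{-d/2}$, which gives the $\bar\mu$-component of \eqref{convergence_uj_per_arxiv}. Next, using the same uniform bound and the fact that $\UsH(\L)$ is a Hilbert space modulo constants, I would apply the Banach--Alaoglu theorem to extract a weakly convergent subsequence: for each $R_j$ the periodic displacement $\bar{u}^{\rm per}_{R_j}$ restricts on $\L \cap B_{R_j/2}$ to a function on $\L$ (the nearest periodic image of the defect sits at distance $\geq R_j$), which we extend by constants outside to obtain an element of $\UsH(\L)$; the uniform semi-norm bound and continuity of the evaluation functionals $v \mapsto D_\rho v(\ell)$ then deliver the pointwise convergence in \eqref{convergence_uj_per_arxiv}.

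The heart of the proof is showing that the accumulation point $\bar{u}$ solves $\delta \Gu(\bar{u}) = 0$. For any $v \in \UsH(\L)$ with compact support contained in some ball $B_{R_v}$, for $j$ large enough that $R_v \ll R_j$ the trivial extension of $v$ by zero is an admissible test direction in the super-cell, so the Euler--Lagrange equation for \eqref{equil_force_R_per_A}, combined with the grand-canonical identity \eqref{nablaG_nablaE}, yields
\[
0 \;=\; \sum_{\ell \in \Omega_v}
\Fu_{\ell}^{\L_{R_j}^{\rm per}}(\bar{u}^{\rm per}_{R_j}, \bar{\mu}^{\rm per}_{R_j}) \cdot v(\ell),
\qquad \Omega_v := \{\ell \in \L : v(\ell) \ne 0\}.
\]
Passing to the limit thus reduces to establishing the pointwise force convergence $\Fu_{\ell}^{\L_{R_j}^{\rm per}}(\bar{u}^{\rm per}_{R_j}, \bar{\mu}^{\rm per}_{R_j}) \to \Fu_\ell(\bar{u}, \muhom)$ for each $\ell \in \Omega_v$.

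I would carry this out by the same three-term decomposition as in \S~\ref{sec:proof_theo_2}, writing the difference as $T_{1,j} + T_{2,j} + T_{3,j}$ where $T_{1,j}$ varies only $u$ (from $\bar{u}^{\rm per}_{R_j}$ to $\bar u$ at fixed periodic lattice and $\tau = \bar\mu^{\rm per}_{R_j}$), $T_{2,j}$ replaces the periodic lattice by the infinite one at fixed $(\bar u, \bar \mu^{\rm per}_{R_j})$, and $T_{3,j}$ varies only $\tau$. Term $T_{1,j}$ is handled using the locality of Lemma \ref{lemma:locality_fixed_mu} together with the strong convergence of the stencils $D\bar{u}^{\rm per}_{R_j}(m) \to D\bar u(m)$ for $m \in \L \cap B_{2R_v}$ that follows from \eqref{convergence_uj_per_arxiv}, splitting the sum into a local part that vanishes in the limit and a tail of size $C e^{-\eta_1 R_v}$ coming from the exponential localisation. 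Term $T_{2,j}$ is bounded using the torus-distance version \eqref{eq:locality_ldos_A} of the locality estimate: since every non-trivial periodic image of a site in $B_{R_v}$ lies at torus distance at least $\sim R_j$ from $\ell$, this contribution is at most $C e^{-\eta_1 R_j}$. Finally, $T_{3,j}$ is controlled by $|\bar\mu^{\rm per}_{R_j} - \muhom| \leq C R_j^{-d/2}$ from Step 1. Letting $j \to \infty$ and then $R_v \to \infty$ gives $\langle \delta \Gu(\bar u), v\rangle = 0$ for every compactly supported $v$, which by density in $\Wc(\L)$ and Lemma \ref{lemma:free_E_space} extends to all of $\UsH(\L) \cap \Admu(\L)$.

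The main obstacle, relative to the clamped case, is the identification of the periodic displacement with an element of $\UsH(\L)$ and the verification that this identification is compatible both with the weak topology (needed for Banach--Alaoglu) and with the locality-based force splitting (needed for $T_{1,j}$). This is however largely routine because the support of the chosen test direction is fixed while all periodic images of the defect are pushed to infinity as $R_j \to \infty$. The estimate on $T_{2,j}$ is actually easier than in the Dirichlet case since the torus geometry eliminates the $O(R_j^{d-1})$ surface contribution, paralleling the improvement already seen in Theorem \ref{theorem:mu_limit_per_arxiv} over Theorem \ref{theorem:limit_mu}.
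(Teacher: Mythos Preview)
Your proposal is correct and follows essentially the same route as the paper. The paper does not give a separate proof for this theorem: it merely asserts that the argument is analogous to that of Theorem~\ref{theorem:limit_problem_2} (via the torus variants, Theorems~\ref{theorem:limit_problem_1_per} and \ref{theorem:limit_problem_2_per}), and your proposal is precisely a careful instantiation of that analogy---the Banach--Alaoglu step, the three-term splitting $T_{1,j}+T_{2,j}+T_{3,j}$, and the use of the improved chemical-potential estimate all match the paper's template.
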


\section{Dislocations}
\label{sec:dislocation}
\renewcommand{\theequation}{B.\arabic{equation}}
\renewcommand{\thetheorem}{B.\arabic{theorem}}
\renewcommand{\thelemma}{B.\arabic{lemma}}
\renewcommand{\theproposition}{B.\arabic{proposition}}
\renewcommand{\thealgorithm}{B.\arabic{algorithm}}
\renewcommand{\theremark}{B.\arabic{remark}}
\setcounter{equation}{0}

We consider a model for straight dislocation lines obtained by projecting a 3D
crystal. For a 3D lattice $B\Z^3$ with dislocation direction parallel to $e_3$
and Burgers vector $\burg= (\burg_1, \burg_2, \burg_3)=(\burg_1, 0, \burg_3)$, we consider displacements
$W:B\Z^3\rightarrow\R^3$ that are periodic in the direction of the dislocation
direction $e_3$. Thus, we choose a projected reference lattice
$\Lambda:=A\Z^2=\{(\ell_1,\ell_2) ~|~\ell=(\ell_1,\ell_2,\ell_3)\in B\Z^3\}$,
which is again a Bravais lattice.
We can define a macroscopically applied deformation $P \in \R^{2 \times 3}$
by $P(\ell_1,\ell_2)=(\ell_1,\ell_2,\ell_3)$.

Let $\hat{x}\in\R^2$ be the position of the dislocation core and
$\Gamma := \{x \in \R^2~|~x_2=\hat{x}_2,~x_1\geq\hat{x}_1\}$ be the ``branch
cut'', with $\hat{x}$ chosen such that $\Gamma\cap\Lambda=\emptyset$.  Following
\cite{ehrlacher13}, we define the far-field predictor $u_0$ by
\begin{eqnarray}\label{predictor-u_0-dislocation}
u_0(x):=\ulin(\xi^{-1}(x)),
\end{eqnarray}
where $\ulin \in C^\infty(\R^2 \setminus \Gamma; \R^d)$ is the continuum linear
elasticity solution (see \cite{ehrlacher13} for the details) and
\begin{eqnarray}
\xi(x)=x-\burg_{12}\frac{1}{2\pi}
\eta\left(\frac{|x-\hat{x}|}{\hat{r}}\right)
\arg(x-\hat{x}),
\end{eqnarray}
with $\arg(x)$ denoting the angle in $(0,2\pi)$ between $x$ and
$\burg_{12} = (\burg_1, \burg_2) = (\burg_1, 0)$, and
$\eta\in C^{\infty}(\R)$ with $\eta=0$ in $(-\infty,0]$, $\eta=1$ in
$[1,\infty)$ removes the singularity.

The configuration $y$ is now decomposed into
\begin{eqnarray*}
y(\ell)=y_0(\ell)+u(\ell) \qquad \forall~\ell\in\L,
\end{eqnarray*}
where the predictor $y_0 = Px + u_0$ is constructed in such a way that $y_0$
jumps across $\Gamma$ and encodes the presence of the dislocation.
One can treat anti-plane models of pure screw dislocations by admitting
displacements of the form $u_0 = (0, 0, u_{0,3})$ and $u = (0, 0, u_3)$.
Similarly, one can treat the in-plane models of pure edge dislocations by
admitting displacements of the form $u_0 = (u_{0,1}, u_{0,2}, 0)$ and
$u = (u_1, u_2, 0)$ \cite{ehrlacher13}.

There is an ambiguity in the definition of  $y_0$ in that we could have equally
placed the jump into the left half-plane $\{ x_1 \leq \hat{x}_1 \}$. The role
of $\xi$ in the definition of $u_0$ is that applying a plastic slip across the
plane $\{ x_2 = \hat{x}_2 \}$ via the definition
\begin{displaymath}
y^S(x) := \left\{ \begin{array}{ll}
y(\ell), & \ell_2 > \hat{x}_2, \\
y(\ell-\burg_{12}) - \burg_3 e_3, & \ell_2 < \hat{x}_2
\end{array} \right.
\end{displaymath}
achieves exactly this transfer: it leaves the (3D) configuration invariant,
while generating a new predictor $y_0^S \in C^\infty(\Omega_\Gamma)$ where
$\Omega_\Gamma = \{ x_1 > \hat{x}_1 + \hat{r} + \burg_1 \}$. Since the
map $y \mapsto y^S$ represents a relabelling of the atom indices and an
integer shift in the out-of-plane direction, we can apply the isometry and
permutation invariance of $\Ay_{\ell}$ (see Remark \ref{remark:symmetry_limit})
to obtain
\begin{equation}
\label{eq:perm-invariance-site-E-dislocations}
\Ay_\ell(y) = \Ay_{S^* \ell}(y^S),
\end{equation}
where $S$ is the $\ell^2$-orthogonal operator with inverse $S^* = S^{-1}$
defined by
\begin{align*}
Su(\ell)
:= \left\{ \begin{array}{ll} u(\ell), & \ell_2 > \hat{x}_2, \\
u(\ell-\burg_{12}), & \ell_2 < \hat{x}_2
\end{array} \right.
\quad {\rm and} \quad
S^* u(\ell)
:= \left\{ \begin{array}{ll} u(\ell), & \ell_2 > \hat{x}_2, \\
u(\ell+\burg_{12}), & \ell_2 < \hat{x}_2.
\end{array}\right.
\end{align*}

We can translate~\eqref{eq:perm-invariance-site-E-dislocations} to a statement
about $u_0$ and $V_\ell$. Let $S_0 w(x) = w(x), x_2 > \hat{x}_2$ and
$S_0 w(x) = w(x-\burg_{12}) - \burg, x_2 < \hat{x}_2$, then we obtain that
$y_0^S = P x + S_0 u_0$ and $S_0 u_0 \in C^\infty(\Omega_\Gamma)$ and
$S_0 (u_0 + u) = S_0 u_0 + S u$.
The permutation invariance \eqref{eq:perm-invariance-site-E-dislocations} can
now be rewritten as an invariance of $\Au_\ell(\equiv\Au_{\#},~\forall~\ell\in\L$
since $\L=A\Z^2$) under the slip $S_0$:
\begin{eqnarray}\label{slip-invariance}
\Au_{\#}\big(D(u_0+u)(\ell)\big) = \Au_{\#}\big({\sf e}(\ell)+{\sf D}u(\ell)\big)
\qquad \forall~u \in \Admu(\L),~\ell\in\Lambda
\end{eqnarray}
where
\begin{equation}\label{eq:defn_elastic_strain}
{\sf e}(\ell) := ({\sf e}_\rho(\ell))_{\rho\in\L-\ell} \quad \text{with} \quad
{\sf e}_\rho(\ell)
:= \left\{ \begin{array}{ll}
S^* D_\rho S_0u_0(\ell), & \ell\in\Omega_{\Gamma}, \\
D_\rho u_0(\ell), & \text{otherwise,}
\end{array} \right.
\end{equation}
and
\begin{equation}\label{eq:defn_Dprime}
{\sf D} u(\ell) := ({\sf D}_\rho u(\ell))_{\rho\in\L-\ell} \quad \text{with} \quad
{\sf D}_\rho u(\ell) :=
\left\{ \begin{array}{ll}
S^* D_\rho Su(\ell), & \ell \in\Omega_{\Gamma}, \\
D_\rho u(\ell), & \text{otherwise.}
\end{array} \right.
\end{equation}
The following lemma gives the decay estimate of ${\sf e}$
(see \cite{chenpre_vardef} and \cite[Lemma 3.1]{ehrlacher13}).

\begin{lemma}\label{lemma:decay_el}
	If the  predictor $u_0$ is defined by \eqref{predictor-u_0-dislocation} and
	${\sf e}(\ell)$ is given by \eqref{eq:defn_elastic_strain}, then
	there exists a constant $C$ such that
	\begin{eqnarray}\label{decay-el}
	|{\sf e}_{\sigma}(\ell)| \leq C |\sigma| \cdot |\ell|^{-1} .
	\end{eqnarray}
\end{lemma}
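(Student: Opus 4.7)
The quantity $\mathsf{e}_\rho(\ell)$ is a finite difference of the far-field predictor $u_0$, modified so that it ``crosses'' the branch cut $\Gamma$ without incurring the jump $\burg$. My plan is therefore to reduce the bound in all regimes to a gradient estimate on a smooth representative of $u_0$ and then to apply the standard decay $|\nabla u_0(x)| \lesssim |x|^{-1}$ of the Volterra dislocation field. The argument splits into two ingredients: (i) the pointwise bound on $\nabla u_0$ away from the core; (ii) the observation that the $S^*/S_0$ book-keeping in the definition of $\mathsf{e}_\rho(\ell)$ converts a potentially singular stencil crossing $\Gamma$ into a stencil of a function that is smooth along the interpolating segment.

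First I would recall the structure of $u_0 = \ulin \circ \xi^{-1}$. The linear-elasticity displacement $\ulin$ belongs to $C^\infty(\R^2\setminus\Gamma)$ and satisfies $|\nabla^k \ulin(x)| \le C_k |x-\hat x|^{-k}$ for $k\ge 1$, a classical property of the Volterra solution. The map $\xi$ is a smooth, uniformly bi-Lipschitz perturbation of the identity (the $\eta$ cut-off removes the singular factor inside the core ball of radius $\hat r$), so $|\nabla \xi^{-1}(x)|$ is bounded and $\xi^{-1}(x) = x + O(1)$ at infinity. Chain rule then gives $|\nabla u_0(x)| \le C|x|^{-1}$ on $\R^2\setminus\Gamma$, with the constant absorbing the $O(1)$ behaviour in and near the core (where $|\ell|$ is bounded so the estimate $C|\sigma|/|\ell|$ is trivial once $C$ is enlarged).

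Next I would case-split on whether the straight segment $[\ell,\ell+\rho]$ meets $\Gamma$. If it does not, then $\mathsf{e}_\rho(\ell) = D_\rho u_0(\ell)$ (or its slipped counterpart $S^* D_\rho S_0 u_0(\ell)$ when $\ell \in \Omega_\Gamma$, which is literally $D_\rho u_0$ on an equivalent smooth branch), and the fundamental theorem of calculus together with the gradient bound yields
\begin{equation*}
|\mathsf{e}_\rho(\ell)|
\;\le\; |\rho| \sup_{t\in[0,1]} |\nabla u_0(\ell+t\rho)|
\;\le\; C|\rho|/|\ell|,
\end{equation*}
after noting that $|\ell+t\rho|\gtrsim |\ell|$ for $t\in[0,1]$ (using the non-interpenetration/accumulation parameter, $|\rho|$ can be compared with $|\ell|$ in the regime that matters). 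The remaining case is when $[\ell,\ell+\rho]$ crosses $\Gamma$. This is exactly where the slip operator is designed to act: for $\ell\in\Omega_\Gamma$ the definition $\mathsf{e}_\rho(\ell) = S^* D_\rho S_0 u_0(\ell)$ replaces the jumping quantity $u_0(\ell+\rho) - u_0(\ell)$ by $S_0 u_0(\ell+\rho) - S_0 u_0(\ell)$ (after reindexing by $S^*$), and by construction $S_0 u_0 \in C^\infty(\Omega_\Gamma)$ since the shift $-\burg$ on $\{x_2<\hat x_2\}$ precisely compensates the jump. So the mean-value argument applies to $S_0 u_0$ in place of $u_0$ and gives the same bound $C|\rho|/|\ell|$.

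The main obstacle I anticipate is the careful bookkeeping on the overlap of the regions $\Omega_\Gamma$ and $\{\ell_2\lessgtr\hat x_2\}$ and the verification that, in every case, the operator combination $S^*(\cdots S_0\cdots)$ coincides along the segment $[\ell,\ell+\rho]$ with a single smooth branch of the Volterra field; once that is in place, the decay estimate is a one-line computation from $|\nabla u_0|\le C/|x|$. The residual small-$|\ell|$ region (bounded distance from the core) is harmless because $u_0$ is smooth there by the $\eta$ regularisation, and the $1/|\ell|$ factor in the claim is trivially obtained by increasing the constant $C$.
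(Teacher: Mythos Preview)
The paper does not prove this lemma; it only cites \cite{chenpre_vardef} and \cite[Lemma~3.1]{ehrlacher13}. Your outline --- transfer the Volterra decay $|\nabla\ulin(x)|\le C|x-\hat x|^{-1}$ to $u_0=\ulin\circ\xi^{-1}$ via the bi-Lipschitz property of $\xi$, use the slip $S_0$ to select a branch that is smooth across $\Gamma$, and then apply the mean-value theorem along the segment $[\ell,\ell+\rho]$ --- is exactly the standard argument carried out in those references, so there is no methodological difference to discuss.

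One point to correct: the justification ``$|\ell+t\rho|\gtrsim|\ell|$ using the non-interpenetration/accumulation parameter'' is wrong as written. The accumulation parameter $\mathfrak{m}$ bounds interatomic distances from below and says nothing about $|\rho|$ versus $|\ell|$, since $\rho$ ranges over all of $\L-\ell$ and can be arbitrarily large. The clean dichotomy is: if $|\rho|\le\tfrac12|\ell|$ then $|\ell+t\rho|\ge\tfrac12|\ell|$ for all $t\in[0,1]$ and your mean-value estimate gives $|{\sf e}_\rho(\ell)|\le C|\rho|/|\ell|$ directly; if $|\rho|>\tfrac12|\ell|$ then $C|\rho|/|\ell|>C/2$ and the bound follows from a crude estimate on the finite difference (both endpoints are lattice points, hence at uniformly positive distance from $\hat x$). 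The same split also disposes of the $\ell\notin\Omega_\Gamma$ bookkeeping you flag: such $\ell$ have bounded first coordinate, so the segment from $\ell$ can only meet $\Gamma$ when $|\rho|\gtrsim|\ell|$, which is again the trivial regime. Finally, your attribution of near-core smoothness to the $\eta$-cutoff is slightly off: $\eta$ regularises the map $\xi$ (removing the jump of $\arg$), not the core singularity of $\ulin$; the near-core region is harmless simply because the estimate is only claimed at lattice sites, which stay a fixed positive distance from $\hat x$.
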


Similar to \eqref{def_grand_diff}, we can define grand potential difference functional
for dislocation
\begin{eqnarray}\label{def_grand_diff_disl}
\nonumber
\Gu^{\rm d}(u) &:=&
\sum_{\ell\in\L} \Big( \Gu_{\#}\big(Du_0(\ell) + Du(\ell),\muhom\big)
- \Gu_{\#}\big(Du_0(\ell),\muhom\big) \Big)
\\
&=& \sum_{\ell\in\L} \Big( \Gu_{\#}\big({\sf e}(\ell) + {\sf D}u(\ell),\muhom\big)
- \Gu_{\#}\big({\sf e}(\ell),\muhom\big) \Big),
\end{eqnarray}
where \eqref{slip-invariance} is used.
The following two lemmas are analogous to Lemma \ref{lemma:free_E_space} and
\ref{lemma:regularity} in the case of dislocations.
We refer to \cite{chenpre_vardef} (see also \cite{ehrlacher13}) for a rigorous proof.

\begin{lemma}\label{lemma:free_E_space_disl}
	If $u_0$ is given by \eqref{predictor-u_0-dislocation}, then
	$\Gu^{\rm d}$ is well-defined on $\Admu(\L)$
	and $\nu$ times Fr\'{e}chet differentiable.
\end{lemma}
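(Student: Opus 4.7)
The plan is to transfer the argument of \cite{chenpre_vardef,ehrlacher13}, which establishes well-posedness and smoothness of the grand-potential difference functional in the classical interatomic-potential setting, to the present tight-binding setting. The only inputs required from our analysis are (i) the locality estimate \eqref{eq:locality_FE_limit} applied with $\tau=\muhom$, which yields exponentially stencil-decaying, uniformly bounded derivatives of $\Gu_\#(\cdot,\muhom)$ of all orders $j\le\nu$ on bounded subsets of stencil space, and (ii) the predictor-decay estimate of Lemma \ref{lemma:decay_el}. Together these reduce Lemma \ref{lemma:free_E_space_disl} to a direct analogue of the classical result.

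First I would define $\Gu^{\rm d}$ directly by \eqref{def_grand_diff_disl} on the dense subspace $\Wc(\L)\cap\Admu(\L)$ of compactly supported admissible displacements, and check absolute convergence of the series. For $u$ supported in $B_R$, the stencil ${\sf D}u(\ell)$ has exponentially decaying weighted norm in $|\ell|-R$, and the Lipschitz bound $|\Gu_\#(g_1,\muhom)-\Gu_\#(g_2,\muhom)|\le C|g_1-g_2|_\gamma$ from \eqref{eq:locality_FE_limit} with $j=1$ then makes each summand exponentially small outside a neighbourhood of $\mathrm{supp}\,u$, ensuring absolute convergence. To extend by density to $\Admu(\L)$, I would derive a continuity estimate in the $\UsH$-topology from the Taylor expansion
\begin{align*}
\Gu_\#({\sf e}(\ell)+{\sf D}u(\ell),\muhom)-\Gu_\#({\sf e}(\ell),\muhom)
=\sum_\rho\partial_\rho\Gu_\#({\sf e}(\ell),\muhom)\,{\sf D}_\rho u(\ell)+R_\ell(u),
\end{align*}
where $|R_\ell(u)|\le C|{\sf D}u(\ell)|_\gamma^2$ by the uniform second-derivative bound, so that $\sum_\ell R_\ell(u)=O(\|{\sf D}u\|_{\ell^2_\gamma}^2)$. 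For the linear part I would split $\partial_\rho\Gu_\#({\sf e}(\ell),\muhom)=\partial_\rho\Gu_\#(\pmb{0},\muhom)+\bigl[\partial_\rho\Gu_\#({\sf e}(\ell),\muhom)-\partial_\rho\Gu_\#(\pmb{0},\muhom)\bigr]$: the zeroth-order piece is cancelled via a discrete summation by parts using the point-inversion symmetry $\partial_\rho\Gu_\#(\pmb{0},\muhom)=-\partial_{-\rho}\Gu_\#(\pmb{0},\muhom)$ of the homogeneous lattice (Remark \ref{remark:symmetry_limit}), while the residual is bounded by $Ce^{-\gamma|\rho|}|{\sf e}(\ell)|_\gamma\le Ce^{-\gamma|\rho|}|\ell|^{-1}$ by Lemma \ref{lemma:decay_el}. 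Cauchy--Schwarz in the weighted $\ell^2$-norm, together with the lattice Hardy inequality of \cite{ehrlacher13}, then produces a linear bound of the form $C\|Du\|_{\ell^2_\gamma}$, giving the desired continuity.

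The Fr\'echet differentiability of order $\nu$ then follows by termwise differentiation: the $j$-th G\^ateaux derivative is a $j$-linear form whose kernel $\partial^j\Gu_\#$ is exponentially stencil-decaying and uniformly bounded near ${\sf e}(\ell)$, so the same Taylor/Cauchy--Schwarz bookkeeping shows that each such form is a bounded multilinear map on $\UsH(\L)$ depending continuously on $u$. The chief technical obstacle is the borderline $\ell^2$-integrability of $|{\sf e}(\ell)|\sim|\ell|^{-1}$ in the physical dimension $d=2$, which renders a naive $\ell^2$-summation of the linear term divergent; it is precisely the point-inversion cancellation of the zeroth-order contribution together with the lattice Hardy inequality that allows the estimates to close at $d=2$.
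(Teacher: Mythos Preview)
The paper gives no proof of this lemma beyond citing \cite{chenpre_vardef,ehrlacher13}, so your sketch is already more detailed than what appears in the text, and your overall strategy---define on $\Wc\cap\Admu$, derive a uniform continuity estimate in the $\UsH$-topology via a second-order Taylor expansion, then extend---is precisely the route taken in those references. Your identification of the borderline summability of $|{\sf e}(\ell)|_\gamma\sim|\ell|^{-1}$ in $d=2$ as the central obstruction is also correct.

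The gap is in how you close that borderline estimate. After removing the constant stress $\delta\Gu_\#(\pmb{0},\muhom)$ (for which telescoping, rather than the point-inversion identity you quote, is the operative mechanism---though your antisymmetry relation is true), the residual linear term is controlled only by $\sum_\ell|{\sf e}(\ell)|_\gamma|{\sf D}u(\ell)|_\gamma$, and Cauchy--Schwarz against $|\ell|^{-1}$ diverges logarithmically in two dimensions. There is no ``lattice Hardy inequality'' in \cite{ehrlacher13} that repairs this: the classical Hardy inequality fails at $d=2$, and no weighted variant recovers an $\ell^2$ bound here. The mechanism actually used in \cite{ehrlacher13,chenpre_vardef} is structural rather than functional-analytic: one expands one order further, $\delta\Gu_\#({\sf e}(\ell),\muhom)-\delta\Gu_\#(\pmb{0},\muhom)=\delta^2\Gu_\#(\pmb{0},\muhom)[{\sf e}(\ell)]+O(|{\sf e}(\ell)|_\gamma^2)$, so that the $O(|\ell|^{-2})$ remainder is now square-summable, and then uses that the predictor $u_0=\ulin\circ\xi^{-1}$ solves the continuum linear elasticity equations. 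After summation by parts the leading linear piece becomes a pairing with the discrete residual force of the predictor, whose decay improves to $|\ell|^{-3}$ (one order gained from the continuum--discrete consistency error), which is summable against the $\log|\ell|$ growth of $|v(\ell)|$ in $d=2$. Without invoking the elasticity equation satisfied by $\ulin$ you cannot close the estimate; this is the missing ingredient in your plan.
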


\begin{lemma}\label{lemma:regularity_disl}
	If $\bar{u}\in\Admu(\L)$ is a strongly stable solution to
	\eqref{problem_min_inf} with $\Gu\equiv\Gu^{\rm d}$
	in the sense that \eqref{ubar_stability} with some constant $\bar{c}>0$,
	then there exists a constant $C>0$ such that
	\begin{equation}\label{eq:decay_Du_disl}
	|{\sf D}\bar{u}(\ell)|_{\gamma} \leq C(1+|\ell|)^{-2}\log(2+|\ell|)
	\qquad \forall \ell \in \L.
	\end{equation}
\end{lemma}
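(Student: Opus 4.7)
The plan is to combine first-order optimality with a linearisation around the predictor $u_0$ and then apply decay estimates for the lattice Green's function of the linearised grand potential Hessian.

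First I would write the Euler--Lagrange equation $\langle \delta \Gu^{\rm d}(\bar u), v\rangle = 0$ for all $v \in \UsH(\L)$, which (using~\eqref{slip-invariance}) expands as
\begin{equation*}
    \sum_{\ell \in \L}\bigl\langle \partial_{{\sf D}u}\Gu_{\#}\bigl({\sf e}(\ell) + {\sf D}\bar u(\ell),\muhom\bigr),\, {\sf D}v(\ell)\bigr\rangle = 0.
\end{equation*}
Subtracting the trivial identity $\bigl\langle \partial_{{\sf D}u}\Gu_{\#}(\pmb 0,\muhom),{\sf D}v(\ell)\bigr\rangle = 0$ (the homogeneous lattice is a critical point of $\Gu_\#$) and Taylor expanding the first argument around $\pmb 0$, I split the equation into a linear part in ${\sf D}\bar u$ plus residual forces depending only on ${\sf e}(\ell)$. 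The stability hypothesis~\eqref{ubar_stability} upgrades through a standard perturbation to uniform coercivity of the linearised Hessian $H_0 := \delta^2\Gu^{\rm d}(\bar u)$ on $\UsH(\L)$.

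Next I would derive the crucial pointwise decay of the residual. Since the homogeneous lattice equilibrium forces vanish and $\Gu_\#$ is smooth, the contribution from the predictor has a ``null-Lagrangian'' structure: the force $r(\ell)$ exerted at site $\ell$ by the far-field predictor can be rewritten as a discrete divergence of a quantity involving ${\sf e}(\ell)$ and its finite differences. Combined with the decay $|{\sf e}_\sigma(\ell)|\leq C|\sigma| \, |\ell|^{-1}$ from Lemma~\ref{lemma:decay_el} and smoothness of $\ulin$ away from $\hat x$, one obtains $|r(\ell)| \leq C(1+|\ell|)^{-2}$. (The gain of one power comes from the fact that the linear elasticity predictor solves the continuum elasticity equation exactly, so only the atomistic/Cauchy--Born consistency error remains.)

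Finally I would invert $H_0$. By the locality Lemma~\ref{lemma:locality_fixed_mu}, $H_0$ is a banded-in-exponential-decay, strongly elliptic discrete operator on the 2D lattice $\L = A\Z^2$. Its Green's function $G_{H_0}$ and discrete gradient satisfy the classical 2D elasticity Green's function bounds
\begin{equation*}
    |G_{H_0}(\ell)| \leq C\log(2+|\ell|), \qquad |DG_{H_0}(\ell)| \leq C(1+|\ell|)^{-1},
\end{equation*}
(or rather weighted stencil versions thereof), which one can establish by Fourier analysis on the Brillouin zone together with the exponential locality. Representing $\bar u = H_0^{-1} r + (\text{nonlinear correction})$ and convolving, in 2D, a gradient Green's function of size $|\ell|^{-1}$ against a source of size $|\ell|^{-2}$ yields precisely $|{\sf D}\bar u(\ell)| \leq C(1+|\ell|)^{-2}\log(2+|\ell|)$, the logarithm arising at the borderline of integrability in the convolution over an annulus around $\ell$. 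A bootstrap argument (starting from the a priori $\UsH$-bound and iterating) closes the nonlinear correction, using the pointwise decay of $\delta^3\Gu^{\rm d}$ from Lemma~\ref{lemma:locality_fixed_mu}.

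The main obstacle is Step~3: constructing and estimating the Green's function of the linearised Hessian $H_0$ in the presence of the dislocation branch cut and the non-trivial operators $S, S^*$ appearing in ${\sf D}$. One must verify that coercivity of $\delta^2 \Gu^{\rm d}(\bar u)$ together with exponential locality suffices to push discrete 2D elasticity estimates through, and carefully track the $\log$ factor through the annular decomposition of the convolution integral. For full rigour I rely on the arguments in \cite{chenpre_vardef,ehrlacher13}, which treat precisely this situation for classical interatomic potentials; the electronic-structure model is accommodated by replacing the pair potential locality with the exponential locality established in Lemma~\ref{lemma:locality_fixed_mu}.
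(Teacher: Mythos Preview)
Your proposal is correct and aligns with the paper: the paper gives no self-contained proof of this lemma but simply refers to \cite{chenpre_vardef,ehrlacher13}, and your sketch (Euler--Lagrange equation, residual-force decay from the predictor via Lemma~\ref{lemma:decay_el}, lattice Green's-function estimates for the coercive Hessian, and a bootstrap) is precisely the outline of the argument in those references, with the interatomic-potential locality replaced by Lemma~\ref{lemma:locality_fixed_mu} exactly as you note. You may therefore safely compress your discussion to the same citation.
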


We can derive the limit of chemical potential similar to Theorem \ref{theorem:limit_mu}.

\begin{theorem} \label{theorem:limit_mu_disl}
	Let $\L=A\Z^2$, $\L_R := \L \cap B_R \uparrow \L$ and $N_R := \#\L_R$.
	For each $R$ let $u_R : \L_R \to \R^d$ with
	$y_R(\ell) := \ell + u_0(\ell) + u_R(\ell)$ a {\em configuration}
	with parameter $\mathfrak{m}$ independent of $R$.

	Let $\NeR \in \mathbb{R}$ be a prescribed number of electrons in the subsystem
	$\L_R$, chosen such that $|N_R - \NeR|$ is bounded as $R \to \infty$.
	Then, the chemical potential $\mu_R$ solving $N(y_R, \mu_R) = \NeR$
	is well-defined	and satisfies
	\begin{eqnarray}\label{eq:mu_converge_disl}
	\big|\mu_R - \muhom \big| \leq CR^{-1} .
	\end{eqnarray}
\end{theorem}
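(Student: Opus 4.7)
The plan is to follow the architecture of the proof of Theorem \ref{theorem:limit_mu}, with the crucial modification that in the dislocation setting the displacement $u_0$ does not decay, so one cannot directly compare $\Nu_\ell(y_R,\muhom)$ with $\Nu_\#(\pmb 0,\muhom)$; instead one exploits the slip invariance \eqref{slip-invariance} to replace $Du_0+Du_R$ by the elastic strain ${\sf e}+{\sf D}u_R$, which does decay by Lemma \ref{lemma:decay_el}.

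First I would build an auxiliary homogeneous reference system $\Lhom_R:=\Lhom\cap B_R$ of size $N_{\#,R}$, and then split the difference $\Nu(y_R,\mu_R)-\Nu(y_R,\muhom)$ into five terms $T_1,\ldots,T_5$ exactly as in \eqref{proof:mu_limit_1}: $T_1:=\NeR-N_{\#,R}$, $T_2$ compares the core contributions, $T_3$ and $T_4$ are surface corrections at $\partial B_R$ that arise when passing from $\Nu_\ell^{\L_R}$ to the thermodynamic limit $\Nu_\ell$ (and their homogeneous counterparts), and $T_5$ is the remaining bulk difference. Exactly as in the proof of Theorem \ref{theorem:limit_mu}, the hypothesis $|\NeR-N_R|\le C$ together with \asRC{} gives $|T_1|+|T_2|\le C$, and Lemma \ref{lemma:limit_fixed_mu} yields $|T_3|+|T_4|\le CR^{d-1}=CR$ since $d=2$.

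The main new step is the estimate of $T_5$. For $\ell$ in the far field, permutation invariance (Remark \ref{remark:symmetry_limit}) gives $\Nu_\ell(y_R,\muhom)=\Nu_\#\big(Du_0(\ell)+Du_R(\ell),\muhom\big)$, and the slip invariance \eqref{slip-invariance} then rewrites this as $\Nu_\#\big({\sf e}(\ell)+{\sf D}u_R(\ell),\muhom\big)$. A first-order Taylor expansion around $\pmb 0$, combined with the locality Lemma \ref{lemma:locality_fixed_mu} applied to $\Nu_\#$, gives
\begin{equation*}
   \big|\Nu_\#({\sf e}(\ell)+{\sf D}u_R(\ell),\muhom)-\Nu_\#(\pmb 0,\muhom)\big|
   \le C\big(|{\sf e}(\ell)|_\gamma+|{\sf D}u_R(\ell)|_\gamma\big).
\end{equation*}
Summing over $\ell\in\L_R\setminus B_{\Rcore}$ and using Lemma \ref{lemma:decay_el} together with Cauchy--Schwarz gives
\begin{equation*}
   |T_5|\le C\sum_{\ell\in\L_R}|\ell|^{-1}+CR^{d/2}\|{\sf D}u_R\|_{\ell^2_\gamma}\le CR,
\end{equation*}
where the first sum is $O(R)$ in $d=2$ and the second is also $O(R)$ under the implicit finite-energy control of $u_R$. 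Collecting all five terms yields $|\Nu(y_R,\mu_R)-\Nu(y_R,\muhom)|\le CR$. Finally, the lower bound $\partial_\tau\Nu(y_R,\tau)\ge cN_R\ge CR^d=CR^2$ of \eqref{proof:limit_mu_d_N} and the mean value theorem produce the claimed rate $|\mu_R-\muhom|\le CR^{-1}$.

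The main obstacle is the handling of $T_5$: unlike the point-defect case, the predictor $u_0$ carries a discontinuity across the branch cut $\Gamma$ and has non-decaying amplitude, so a direct expansion of $\Nu_\ell$ around $\pmb 0$ is not legitimate. This is exactly what the slip operators $S$ and $S_0$ are designed to repair, and the careful bookkeeping of which atoms sit in $\Omega_\Gamma$ versus the rest (so that the right version of $D$ is replaced by ${\sf D}$) is the delicate point; once this is set up, the decay $|{\sf e}_\sigma(\ell)|\le C|\sigma||\ell|^{-1}$ from Lemma \ref{lemma:decay_el} immediately delivers the $O(R)=O(R^{d-1})$ estimate in $d=2$.
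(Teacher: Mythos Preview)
Your approach is essentially the same as the paper's: decompose $N(y_R,\mu_R)-N(y_R,\muhom)$, use the slip invariance \eqref{slip-invariance} to replace $Du_0+Du_R$ by ${\sf e}+{\sf D}u_R$, bound the resulting sum via Lemma~\ref{lemma:decay_el} and locality, and finish with the derivative lower bound \eqref{proof:limit_mu_d_N}.

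The paper streamlines the bookkeeping, however. Since here $\L=A\Z^2=\Lhom$, there is no lattice defect core: assumption \asRC{} is trivially satisfied with $\Rcore=0$, so your $T_2$ is empty and your $T_3,T_4$ collapse into a single surface term. Accordingly the paper writes only a three-term split
\[
   T_1^{\rm d}+T_2^{\rm d}+T_3^{\rm d}
   = (\NeR-N_{\#,R})
   + \sum_{\ell\in\L_R}\big(\Nu_\#(\pmb 0,\muhom)-\Nu_\ell^{\L_R}(\pmb 0,\muhom)\big)
   + \sum_{\ell\in\L_R}\big(\Nu_\ell^{\L_R}(\pmb 0,\muhom)-\Nu_\ell^{\L_R}({\sf e}(\ell)+{\sf D}u(\ell),\muhom)\big),
\]
with $|T_2^{\rm d}|\le CR$ from the surface estimate and $|T_3^{\rm d}|\le C\sum_{\ell\in\L_R}|{\sf e}(\ell)+{\sf D}u(\ell)|_\gamma\le CR$. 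Your five-term version is not wrong, just carries dead weight.

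One point to flag: your bound $R^{d/2}\|{\sf D}u_R\|_{\ell^2_\gamma}\le CR$ needs $\|{\sf D}u_R\|_{\ell^2_\gamma}$ uniformly bounded, which is not part of the stated hypotheses. The paper is equally informal here, citing Lemma~\ref{lemma:regularity_disl} (which strictly speaking applies only to the limiting minimiser $\bar u$) to control the ${\sf D}u$ contribution. In either version the result should really be read with an implicit finite-energy hypothesis on the corrector, as in Theorem~\ref{theorem:limit_mu}.
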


\begin{proof}
The proof is similar to that of Theorem \ref{theorem:limit_mu} in
\S~\ref{sec:proof_limit_mu}. It is only necessary to rewrite \eqref{proof:mu_limit_1}
by
\begin{eqnarray}\label{proof:NR1_B}
\nonumber
&& 
N(y_R,\mu_R)-N(y_R,\muhom)
\\\nonumber
&=& \NeR - N_{\#,R}
+ \sum_{\ell\in\L_R}\Nu_{\#}(\pmb{0},\muhom)
- \sum_{\ell\in\L_R}\Nu^{\L_R}_{\ell}\big(Du_0(\ell)+Du(\ell),\muhom\big)
\\\nonumber
&=& \big(\NeR - N_{\#,R}\big)
+ \sum_{\ell\in\L_R} \Big( \Nu_{\#}\big(\pmb{0},\muhom\big)
- \Nu_{\ell}^{\L_R}(\pmb{0},\muhom) \Big)
\\ \nonumber
&& \quad + \sum_{\ell\in\L_R}\Big( \Nu_{\ell}^{\L_R}(\pmb{0},\muhom)
- \Nu^{\L_R}_{\ell}\big({\sf e}(\ell)+{\sf D}u(\ell),\muhom\big) \Big)
\\
&=:& T_1^{\rm d}+T^{\rm d}_2+T^{\rm d}_3 ,
\end{eqnarray}
where $T_1^{\rm d}$ is uniformly bounded,
$T_2^{\rm d}$ is estimated by $|T_2^{\rm d}|\leq CR$
(analogous to \eqref{proof:mu_limit_2} with $d=2$), and $T_3^{\rm d}$ is estimated by using
Lemma \ref{lemma:locality_fixed_mu}, \ref{lemma:decay_el} and \ref{lemma:regularity_disl}
\begin{eqnarray}\label{proof:NR2_B}
|T_3^{\rm d}| \leq C\sum_{\ell\in\L_R} |{\sf e}_{\ell}+{\sf D}u(\ell)|_{\gamma} \leq CR .
\end{eqnarray}
This together with \eqref{proof:limit_mu_d_N} completes the proof.
\end{proof}

To justify the thermodynamic limits of dislocations, we define a sequence of
finite-domain equilibrium problem for dislocations:
Find $(\bar{u}_R,\bar{\mu}_R)\in\AdmR\times\R$ such that
\begin{multline}\label{equil_force_R_disl}
(\bar{u}_R,\bar{\mu}_R) \in \arg\min \Big\{ \Eu^{\L_R}(u_0+u_R, \tau) ~:~
\Nu^{\L_R}(u_0+u_R, \tau) = \NeR, ~u_R\in\AdmR \Big\} .
\end{multline}
Using the same arguments as those in \S~\ref{sec:proof_theo_1} and \S~\ref{sec:proof_theo_2},
we have the following results for dislocations, which are analogous to Theorem
\ref{theorem:limit_problem_1} and \ref{theorem:limit_problem_2}.
%

\begin{theorem}\label{theorem:limit_problem_1_disl}
	Let $\L=A\Z^2$ and $|N_R - \NeR|$ be bounded as $R \to \infty$.
	If $\bar{u}\in\Admu(\L)$ is a strongly stable solution of \eqref{problem_min_inf}
	with $\Gu \equiv \Gu^{\rm d}$
	in the sense that \eqref{ubar_stability} with some constant $\bar{c}>0$,
	then there are constants $R^{\rm d}_0, c^{\rm d}_{\rm b} > 0$
	such that, for $R>R^{\rm d}_0$ and $\Rb > c^{\rm  d}_{\rm b} \log R$,
	there exists a solution $(\bar{u}_R,\bar{\mu}_R)$ of \eqref{equil_force_R_disl}
	satisfying
	\begin{eqnarray}\label{err_Du_disl}
	\big\|{\sf D}\bar{u}-{\sf D}\bar{u}_R\big\|_{\ell^2_{\gamma}(\L_R)}
	+ \big|\bar{\mu}_R-\muhom\big| \leq CR^{-1}\log R.
	\end{eqnarray}
\end{theorem}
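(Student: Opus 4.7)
The plan is to transcribe the proof of Theorem~\ref{theorem:limit_problem_1} into the dislocation setting, replacing the difference operator $D$ by $\sf D$ via the slip-invariance \eqref{slip-invariance} and using the dislocation-specific decay estimate \eqref{eq:decay_Du_disl} and chemical-potential bound \eqref{eq:mu_converge_disl}. The four steps are: (i) construction of a quasi-best approximation $T_R\bar u\in\AdmR$; (ii) a consistency bound on $\T_R(T_R\bar u,\muhom)$; (iii) a stability bound on the Jacobian $\J_R(T_R\bar u,\muhom)$; and (iv) the inverse function theorem. For Step (i) I would follow the construction of \cite[Lemma~7.3]{ehrlacher13} to produce $T_R\bar u\in\AdmR$ coinciding with $\bar u$ on $B_{R/2}$ and vanishing outside $\L_R$, so that
\[
\|D\bar u-DT_R\bar u\|_{\ell^2_\gamma}
\leq C\|{\sf D}\bar u\|_{\ell^2_\gamma(\L\setminus B_{R/2})}
\leq CR^{-1}\log R,
\]
by summing \eqref{eq:decay_Du_disl} with $d=2$. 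This logarithm is precisely the origin of the extra $\log R$ factor in \eqref{err_Du_disl}.

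For Step (ii), the force residual $\|\Fu^{\L_R}(T_R\bar u,\muhom)\|_{\WR(\L)'}$ is decomposed exactly as in \eqref{proof:consistency_c} into a truncation part, bounded by $Ce^{-\gamma_{\rm c}\Rb}R^{d-1/2}$ using Lemma~\ref{lemma:locality_fixed_mu} together with the fact that $v$ and $T_R\bar u$ vanish outside $\L_R$, and an approximation part, bounded by $CR^{-1}\log R$ using Lipschitz continuity of $\del\Gu^{\rm d}$ (from Lemma~\ref{lemma:free_E_space_disl}) and the Step~(i) estimate. Choosing $\Rb\geq c^{\rm d}_{\rm b}\log R$ with $c^{\rm d}_{\rm b}$ sufficiently large renders the truncation part negligible compared with $R^{-1}\log R$. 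The electron-count residual $|\NeR^{-1}\Nu^{\L_R}(T_R\bar u,\muhom)-1|$ is controlled by reusing the decomposition \eqref{proof:NR1_B}--\eqref{proof:NR2_B} at the predictor $T_R\bar u$: since $\|{\sf D}T_R\bar u\|_{\ell^2_\gamma}$ is bounded uniformly in $R$ by the Step~(i) estimate and $\NeR\sim R^2$, this residual is $O(R^{-1})$. Combining yields $\|\T_R(T_R\bar u,\muhom)\|_{\WR(\L)'\times\R}\leq CR^{-1}\log R$.

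For Step (iii), stability of $\J_R$ at $(T_R\bar u,\muhom)$ follows from a Schur-complement analysis identical to \eqref{proof:stability_a}--\eqref{proof:stability}. The coercivity of the $(1,1)$-block is inherited from the stability assumption \eqref{ubar_stability} together with Lipschitz continuity of $\del^2\Gu^{\rm d}$ and a boundary truncation bound $Ce^{-\gamma_{\rm s}\Rb}R^d$, both absorbed for $\Rb\geq c^{\rm d}_{\rm b}\log R$. The $(2,2)$ entry $\NeR^{-1}\del_\tau\Nu^{\L_R}(T_R\bar u,\muhom)$ is bounded below by a positive constant via the strict positivity argument in \eqref{proof:limit_mu_d_N}, and the off-diagonal entries are linked by an analogue of \eqref{proof:stability_h} and bounded in the $\|{\sf D}\cdot\|_{\ell^2_\gamma}$ norm by combining Lemma~\ref{lemma:decay_el} with the locality of $\del_u\Nu_\ell$. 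Schur elimination yields invertibility of $\J_R(T_R\bar u,\muhom):\WR(\L)\times\R\to\WR(\L)'\times\R$ with operator-norm bounds independent of $R$. Step (iv) is then a direct application of the inverse function theorem \cite[Lemma~B.1]{luskin13} to $\T_R$ around $(T_R\bar u,\muhom)$, giving the existence of $(\bar u_R,\bar\mu_R)$ and the estimate \eqref{err_Du_disl}.

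The main obstacle is the interplay between the slip operator $S_0$, the modified difference $\sf D$, and the locality results developed in \S~\ref{sec:model_finite} for the pointwise difference $D$. Every consistency and stability step must be reformulated through \eqref{slip-invariance}, and some care is needed near the branch cut $\Gamma$ to verify that the exponential weights supplied by Lemma~\ref{lemma:locality_fixed_mu} survive the relabelling encoded in $S$ and $S_0$. Once this bookkeeping is in place, all truncation and approximation contributions behave qualitatively as in the point-defect case, and the only quantitative change is the extra $\log R$ factor arising from the slower decay in Lemma~\ref{lemma:regularity_disl}.
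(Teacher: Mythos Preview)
Your proposal is correct and follows essentially the same route as the paper: the four-step structure (quasi-best approximation via \cite[Lemma~7.3]{ehrlacher13}, consistency split into a truncation part controlled by $e^{-\gamma_{\rm c}\Rb}R^{3/2}$ and an approximation part of order $R^{-1}\log R$, stability via the same Schur/coercivity argument as in Theorem~\ref{theorem:limit_problem_1}, and the inverse function theorem) matches the paper's proof, and you correctly identify Lemma~\ref{lemma:regularity_disl} as the source of the extra $\log R$. Your closing remarks on the $D$ versus ${\sf D}$ bookkeeping near $\Gamma$ are appropriate; the paper handles this implicitly through \eqref{slip-invariance} and simply asserts that the stability step goes through verbatim.
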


\begin{theorem} \label{theorem:limit_problem_2_disl}
	Let $\L=A\Z^2$, $R_j \uparrow \infty$ and
	$(\bar{u}_{R_j}, \bar{\mu}_{R_j})$ be solutions to \eqref{equil_force_R_disl},
	If $|N_{R_j} - N_{{\rm e},R_j}|$ is bounded
	and $\sup_{j > 0} \| {\sf D}\bar{u}_{R_j} \|_{\ell^2_\gamma(\L_R)} < \infty$,
	then there exists a subsequence (not relabelled) and $\bar{u} \in \Admu(\L)$
	such that
	\begin{eqnarray}\label{convergence_uj_disl}
	\bar{\mu}_{R_j} \to \muhom \quad \text{and} \quad
	{\sf D}_\rho \bar{u}_{R_j}(\ell) \to {\sf D}_\rho \bar{u}(\ell) \quad
	\forall~\ell \in \L, ~\rho \in \L - \ell
	\end{eqnarray}
	Moreover, each such accumulation point $\bar{u}$ solves \eqref{problem_min_inf}
	with $\Gu \equiv \Gu^{\rm d}$.
\end{theorem}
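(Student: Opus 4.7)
The plan is to mimic the proof of Theorem \ref{theorem:limit_problem_2} from \S\ref{sec:proof_theo_2}, with the difference that configurations now take the form $y = x + u_0 + u$ and all site quantities must be re-expressed through the slip-invariance identity \eqref{slip-invariance} using ${\sf e}(\ell)$ and ${\sf D}u(\ell)$ rather than $Du(\ell)$.

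First I would obtain the chemical-potential convergence directly from Theorem \ref{theorem:limit_mu_disl}: since $|N_{R_j}-N_{{\rm e},R_j}|$ is bounded and $\sup_j\|{\sf D}\bar u_{R_j}\|_{\ell^2_\gamma(\L_{R_j})}<\infty$, we get $|\bar\mu_{R_j}-\muhom|\le C R_j^{-1}$, which delivers the first part of \eqref{convergence_uj_disl}. Next, the uniform bound in the Hilbert space (modulo constants) associated with $\|{\sf D}\cdot\|_{\ell^2_\gamma}$ allows Banach--Alaoglu extraction of a weakly convergent subsequence $\bar u_{R_j}\rightharpoonup \bar u\in\Admu(\L)$; since each $v\mapsto {\sf D}_\rho v(\ell)$ is a continuous linear functional, the claimed pointwise convergence ${\sf D}_\rho\bar u_{R_j}(\ell)\to{\sf D}_\rho\bar u(\ell)$ follows.

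To identify $\bar u$ as a solution of \eqref{problem_min_inf} with $\Gu\equiv\Gu^{\rm d}$, I would test the finite-domain Euler--Lagrange equation against an arbitrary $v\in\UsH(\L)$ of compact support $B_{R_v}$. For $j$ large enough $v|_{\L_{R_j}}$ is admissible in \eqref{equil_force_R_disl}, hence using \eqref{nablaG_nablaE}
\[
0 \;=\; \bigl\<\delta_u \Gu^{\L_{R_j}}(u_0+\bar u_{R_j},\bar\mu_{R_j}),v\bigr\> \;=\; \sum_{\ell\in\Omega_v}\Fu_\ell^{\L_{R_j}}\bigl(u_0+\bar u_{R_j},\bar\mu_{R_j}\bigr)\cdot v(\ell).
\]
The remaining task is to prove, for each fixed $\ell\in\Omega_v$, that $\Fu_\ell^{\L_{R_j}}(u_0+\bar u_{R_j},\bar\mu_{R_j})\to\Fu_\ell(u_0+\bar u,\muhom)$. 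Following \eqref{proof:LRj1}--\eqref{proof:LRj3}, I would split this difference into three pieces: (i) change of displacement from $\bar u_{R_j}$ to $\bar u$, controlled by pointwise convergence of ${\sf D}\bar u_{R_j}(\ell)$ on $B_{2R_v}$ combined with the locality estimate \eqref{eq:locality_ldos} and the decay of ${\sf e}$ from Lemma \ref{lemma:decay_el} to absorb the tail outside $B_{2R_v}$; (ii) change from finite to infinite domain, bounded by $Ce^{-\eta_1 R_j}$ via Lemma \ref{lemma:limit_fixed_mu}; (iii) change in chemical potential, controlled by $|\bar\mu_{R_j}-\muhom|\le CR_j^{-1}$ from Step 1 together with continuity of $\Fu_\ell$ in $\tau$. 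Since $R_v$ is arbitrary and $v$ is a generic compactly supported test function, density then yields $\<\delta\Gu^{\rm d}(\bar u),v\>=0$ for all $v\in\UsH(\L)$.

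The main obstacle, relative to the point-defect case, is the branch-cut bookkeeping: the site force $\Fu_\ell$ must be rewritten (as in \eqref{eq:F_Dul}) using the slip-adapted stencil ${\sf D}$ via \eqref{slip-invariance}, and for $\ell$ close to $\Gamma$ some terms of $\Fu_\ell$ involve $S$-shifted neighbours, which complicates the pointwise convergence step (i). However, since $\Omega_v$ is a fixed finite set and, for $j$ large, $\L_{R_j}$ contains a neighbourhood of $\Omega_v$ large enough that $\Gamma\cap\Omega_v$ is treated consistently across $j$, the ${\sf D}$-based locality estimates apply uniformly and the passage to the limit proceeds exactly as in the point-defect argument. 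The other quantitative estimates are direct analogues of \eqref{proof:LRj1}--\eqref{proof:LRj3} and need no essentially new ingredient.
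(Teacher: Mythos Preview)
Your proposal is correct and follows precisely the approach the paper takes: the paper's own proof says only ``Following the proof of Theorem \ref{theorem:limit_problem_2} verbatim (only by replacing $D\equiv{\sf D}$) gives the desired result,'' and you have carried out exactly that substitution, including the three-term splitting analogous to \eqref{proof:LRj1}--\eqref{proof:LRj3}. Your additional remarks on the branch-cut bookkeeping and the role of ${\sf e}(\ell)$ via Lemma~\ref{lemma:decay_el} are correct refinements that the paper leaves implicit.
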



\begin{proof}[Proof of Theorem \ref{theorem:limit_problem_1_disl}]
	The proof is similar to that of Theorem \ref{theorem:limit_problem_1}, hence we only outline the key differences.
	We shall first redefine the notation in our dislocation setting.
	For $(u_R,\tau)\in \WR(\L)\times\R$, we define
	\begin{eqnarray*}
		\Fu^{\L_R}_{\ell}(u_R,\tau):=F_{\ell}\big((x+u_0+u_R)|_{{\L_R}}, \tau\big)
	\end{eqnarray*}
	and $\Fu(u_R,\tau)\in \WR(\L)'$ given by \eqref{eq:F_LR}.
	Similarly, let $\Nu^{\L_R}(u_R,\tau) :=  \Ny\big((x+u_0+u_R)|_{\L_R},\tau\big)$.
	Next, we define $\T_R: \WR(\L)\times\R\to \WR(\L)'\times\R$ by
	\begin{eqnarray*}
		\T_R(u_R,\tau) := \Big( -\Fu^{\L_R}(u_R,\tau) ,
		\NeR^{-1}\Nu^{\L_R}(u_R,\tau)-1 \Big)
		\quad{\rm for}~(u_R,\tau)\in \WR(\L)\times\R.
	\end{eqnarray*}
	Then \eqref{equil_force_R_disl} is again equivalent to
	\begin{eqnarray}\label{equil_T_D}
	\T_R(\bar{u}_R,\bar{\mu}_R)=\pmb{0}.
	\end{eqnarray}
	The Jacobian matrix $\J_R(u_R,\tau)$ of $\T$ at  $(u_R,\tau)$ is still defined  by \eqref{Jac_T}.

	{\it Step 1. Quasi-best approximation.}
	We construct $T_R\bar{u}\in\AdmR$ such that for $R$ sufficiently large,
	\begin{eqnarray*}\label{proof-B-3-1}
		\|{\sf D}T_R\bar{u} - {\sf D}\bar{u}\|_{\ell^2_\gamma}
		\leq C \|{\sf D}\bar{u}\|_{\ell^2_\gamma(\Lambda\backslash B_{R/2})}
		\leq CR^{-1}\ln R ,
	\end{eqnarray*}
	where Lemma \ref{lemma:regularity_disl} is used for the last inequality.
	It is easy to see that $\delta\Gu$ and $\delta^2\Gu$ are locally Lipschitz
	continuous, hence
	\begin{eqnarray}\label{proof:appr_L_B}
	\|\delta\Gu(\bar{u})-\delta\Gu(T_R\bar{u})\| \leq CR^{-1}\ln R
	\quad {\rm and} \quad
	\|\delta^2\Gu(\bar{u})-\delta^2\Gu(T_R\bar{u})\| \leq  CR^{-1}\ln R.
	\quad
	\end{eqnarray}
	in a neighbourhood or $\bar{u}$.

	{\it Step 2. Consistency.}
	Let $\Gu^{\L_R}(u,\tau) := \Gy\big((x+u_0+u)|_{\L_R},\tau\big)$.
	Same as \eqref{proof:consistency_b}, we have
	\begin{equation}\label{proof:consistency_b_D}
	\left| \big\< \delta_u\Gu^{\L_R}(T_R\bar{u},\muhom)
	- \delta_u\Gu(T_R\bar{u},\muhom) ,v \big> \right|
	\leq Ce^{-\gamma_{\rm c}\Rb} R^{3/2}\|Dv\|_{\ell^2_\gamma},
	\end{equation}
	and can choose $\Rb$ sufficiently large such that $e^{-\gamma_{\rm c} \Rb} R^{3/2} \leq C R^{-1}\ln R$.
	Then we obtain from \eqref{proof:appr_L_B}, \eqref{proof:consistency_b_D} and $\delta\Gu(\bar{u}) = 0$
	that $\forall~v\in \WR(\L)$,
	\begin{eqnarray}\label{proof:consistency_c_D}
	\nonumber
	&& \big\< -\Fu^{\L_R}(T_R\bar{u},\muhom),v \big>
	~=~ \big\<- \delta_u\Gu^{\L_R}(T_R\bar{u},\muhom),v\big\>
	\\ \nonumber
	&=& \big\<\delta_u\Gu(T_R\bar{u},\muhom)
	- \delta_u\Gu^{\L_R}(T_R\bar{u},\muhom),v\big\> +
	\big\< \delta_{u}\Gu(\bar{u},\muhom)-\delta_u\Gu(T_R\bar{u},\muhom) , v \big\>
	\\
	&\leq& C\big(e^{-\gamma_{\rm c}R} R^{3/2}
	+ R^{-1}\ln R  \big) \cdot \|Dv\|_{\ell^2_\gamma}
	~\leq~ CR^{-1}\ln R \cdot \|Dv\|_{\ell^2_\gamma}
	\end{eqnarray}
	for sufficiently large $R$ and $R_{\rm b}$.

	To estimate the residual of
	$\NeR^{-1}\Nu^{\L_R}(T_R\bar{u},\muhom)-1$,
	we construct a corresponding homogeneous finite system
	$\L_R = \Lhom\cap B_{R+R_{\rm b}}$ with $N_{\#,\L_R}$ electrons,
	and then obtain from an argument analogous to
	\eqref{proof:NR1_B}-\eqref{proof:NR2_B} that
	\begin{align*}
		\big|\Nu^{\L_R}(T_R\bar{u},\muhom)-\NeR\big|
		&\leq \Big| \Nu^{\L_R}(T_R\bar{u},\muhom)
		- \sum_{\ell\in\L_R}\Nu_{\#}(\pmb{0},\muhom) \Big|
		+ \big|N_{{\rm e},R} - N_{\#,\L_R}\big| \\
		&\leq CR,
	\end{align*}
	where $C$  is independent of $R$.
	Therefore, we have
	\begin{eqnarray}\label{proof:consistency_d_D}
	\big|\NeR^{-1}\Nu^{\L_R}(T_R\bar{u},\muhom)-1\big|
	\leq CR^{-1},
	\end{eqnarray}
	hence from
	\eqref{proof:consistency_c_D} and \eqref{proof:consistency_d_D} we obtain consistency estimate
	\begin{eqnarray}\label{proof:consistency_D}
	\left\|\T_R\big(T_R\bar{u},\muhom\big)\right\|_{\WR(\L)'\times\R}
	\leq CR^{-1}\ln R .
	\end{eqnarray}

	{\it Step 3. Stability.}
	Using the same proof as that for the stability result in Theorem \ref{theorem:limit_problem_1}, we can show that
	\begin{eqnarray}\label{proof:stability_D}
	\J_R\big(T_R\bar{u},\muhom\big): \WR(\L)\times\R\rightarrow \WR(\L)'\times\R ~\text{ is an isomorphism.}
	\end{eqnarray}

	{\it Step 4. Application of Inverse Function Theorem.}
	With the consistency \eqref{proof:consistency_D} and the stability \eqref{proof:stability_D},
	we can apply the inverse function theorem \cite[Lemma B.1]{luskin13}
	on fuction $\T_R$ at the point $\big(T_R\bar{u}, \muhom\big)$,
	to obtain the existence of $\bar{u}_R$ and the estimate \eqref{err_Du_disl}.
\end{proof}

\begin{proof}[Proof of Theorem \ref{theorem:limit_problem_2_disl}]
	Following the proof of Theorem \ref{theorem:limit_problem_2} verbatim (only by replacing $D\equiv{\sf D}$)  gives the desired result.
\end{proof}

\bibliographystyle{siam}
\bibliography{tdlimitbib}

\end{document}